\tikzset{
  initial text={},
  >=Stealth,
  every state/.style={minimum size=18pt, inner sep=2pt},
  node distance=2.2cm
}
\definecolor{webgreen}{rgb}{0,.5,0}
\definecolor{webbrown}{rgb}{.6,0,0}
\newcommand{\seqnum}[1]{\href{https://oeis.org/#1}{\rm \underline{#1}}}
\DeclareMathOperator{\rep}{rep}
\DeclareMathOperator{\mex}{MeX}
\begin{document}

\theoremstyle{plain}
\newtheorem{theorem}{Theorem}
\newtheorem{corollary}[theorem]{Corollary}
\newtheorem{lemma}[theorem]{Lemma}
\newtheorem{proposition}[theorem]{Proposition}

\theoremstyle{definition}
\newtheorem{definition}[theorem]{Definition}
\newtheorem{example}[theorem]{Example}
\newtheorem{conjecture}[theorem]{Conjecture}

\theoremstyle{remark}
\newtheorem{remark}[theorem]{Remark}

\begin{center}
\vskip 1cm{\LARGE\bf
Variants of Wythoff game with terminal\\
\vskip .1in
positions or blocking maneuvers
}
\vskip 1cm
\large
Antoine Renard and Michel Rigo\\
Dept. of Mathematics, University of Li{\`e}ge\\
All{\'e}e de la d{\'e}couverte 12 (B37), B-4000 Li{\`e}ge, Belgium.
\href{mailto: Antoine.Renard@uliege.be}{\tt Antoine.Renard@uliege.be}\\
\href{mailto: M.Rigo@uliege.be}{\tt M.Rigo@uliege.be}\\
\end{center}

\vskip .2 in
\begin{abstract}
  We show how the software {\tt Walnut} can be used to obtain concise proofs of results concerning variants of the famous Wythoff game, in which blocking maneuvers or terminal positions are added, as discussed respectively by Larsson (2011) and Komak et al. (2025). Our approach provides automatic proofs that both confirm and extend their results, and the same techniques apply to newly introduced variants as well.
  
  Then, using classic techniques, we obtain new recursive and morphic characterizations of Wythoff-type games where the set of terminal positions $(x,y)$ satisfy $x+y\le \ell$. 

The use of {\tt Walnut} in combinatorial game theory is relatively recent, and only a few examples have been explored so far. The Wythoff game, being directly connected to the Fibonacci numeration system, proves especially well suited to this kind of approach. It permits us to solve instances for fixed value of a parameter.
\end{abstract}

\bigskip
\noindent\textbf{Keywords:} Combinatorial game theory ; Wythoff game ; automatic theorem-proving ; terminal positions

\section{Introduction}

In combinatorial game theory, proofs of many results are obtained by using a suitable numeration system \cite{Duchene,Fraenkel1,Fraenkel2,RigoLNM}. As a typical example, the Zeckendorf numeration based on the Fibonacci sequence is used to characterize the $\mathcal{P}$-positions (i.e., losing positions) of the Wythoff game \cite{Fraenkel3}. Fraenkel showed that a pair $(a,b)$ of integers such that $a\le b$ is a $\mathcal{P}$-position of Wythoff game if and only if $\rep_F(a)$ ends with an even number of zeroes and $\rep_F(b)$ is the left-shift of $\rep_F(a)$, i.e., $\rep_F(b)=\rep_F(a)0$.

When the numeration system has good properties (i.e., it is an {\em addable} system as discussed in \cite{ShallitBook}), the formalism of first-order logic can be used to express certain properties of combinatorial games. We have recently exploited this approach and used {\tt Walnut} to reprove classical results in combinatorial games theory, as well as to obtain new ones \cite{MRRW}. {\tt Walnut} is a free, open-source prover for first-order statements dealing with automatic sequences \cite{Mousavi,ShallitBook}. 

We assume that the reader is sufficiently familiar with the use of {\tt Walnut}. For references on combinatorial game theory, we refer to \cite{RigoLNM} and \cite{MRRW}.

The Wythoff game is a {\em take-away} game: two players alternately remove tokens from two piles. One can remove a positive number of token from one pile or, remove the same positive number of token from both piles. The Wythoff game is {\em impartial}: the set of options (i.e., positions reachable in one legal move) of a given position does not depend on which player is in turn to move. Wythoff game can be seen as a queen on an infinite chessboard $\mathbb{N}^2$: in a position $(x,y)\neq(0,0)$ the queen can move horizontally to $(x-n,y)$ with $0<n\le x$, vertically to $(x,y-n)$ with $0<n\le y$ or diagonally to $(x-n,y-n)$ with $0<n\le \min\{x,y\}$. According to the {\em normal convention}, a player who is unable to move loses the game; passing is not allowed. Otherwise stated, the player who takes the last token wins or, in an equivalent way, when the queen reaches $(0,0)$.

In this article, we are mainly interested in two kinds of variations of the Wythoff game. 

In \cite{Komak}, a variant of Wythoff game has been introduced: the set $\{(x,y)\mid x+y\le 2\}$ is declared to be the set of {\em terminal} positions. If a player moves the queen into this terminal set with the usual Wythoff moves, that player wins the game. Let us call this game $K^2$. We chose this notation because we will later introduce a generalization of the game obtained by varying the set of terminal positions. Let $\phi$ be the golden ratio~$(1+\sqrt{5})/2$. We first define a sequence~$(g(n))_{n\ge 0}$ as follows.
\begin{definition}\label{def:fg}
  We let $g(0)=1$, $g(1)=0$ and
  \[
    g(n)=\left\{
      \begin{array}{ll}
        1-g(m),& \text{ if }\lfloor n\phi\rfloor= \lfloor m(\phi+1)\rfloor+1 \text{ for some }m\ge 0;\\
        1,& \text{ otherwise}.
      \end{array}\right.
  \]
  
\end{definition}
The first values of the sequence~$g$ are given in Table~\ref{tab:gh}. 
\begin{table}[h!tb]
  \centering
  \[
    \begin{array}{c|ccccccccccccccccccccc}
      & 0 & 1 & 2 & 3 & 4 & 5 & 6 & 7 & 8 & 9 & 10  & 11 & 12 & 13 & 14 & 15 & 16 & 17 & 18 &  19 & 20 \\
      \hline
      h& 0 & 1 & 1 & 2 & 3 & 3 & 4 & 4 & 5 & 6 & 6 &  7 & 8 & 8 & 9 & 9 & 10 & 11 & 11 & 12 &  12 \\
      g&1 & 0 & 1 & 1 & 0 & 0 & 1 & 1 & 1 & 1 & 0 &  1 & 0 & 0 & 1 & 0 & 1 & 1 & 0 & 1 & 1 \\
    \end{array}
  \]
  \caption{First values of the sequence $g$ and Hofstadter sequence $h$.}
  \label{tab:gh}
\end{table}

The main result of Komak {\em et al.} provides an algebraic description of set of $\mathcal{P}$-positions of $K^2$:
\begin{theorem}\label{thm:K2}
  The set of $\mathcal{P}$-positions of the variant of Wythoff game where the set of terminal positions is $\{(x,y)\mid x+y\le 2\}$,  is exactly \begin{equation}\label{eq:Ppos}
  \left\{ (\lfloor n\phi\rfloor +g(n)-1, \lfloor n \phi^2\rfloor +g(n))\mid n\ge 0\right\} \cup
  \left\{ (\lfloor n \phi^2\rfloor +g(n), (\lfloor n\phi\rfloor +g(n)-1)\mid n\ge 0\right\}
\end{equation}
where $g:\mathbb{N}\to\mathbb{N}$ is the function given in Definition~\ref{def:fg}.
\end{theorem}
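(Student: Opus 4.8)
The plan is to carry every object into the Fibonacci (Zeckendorf) numeration system and let {\tt Walnut} discharge the combinatorics. I would use the facts recalled in the introduction: the predicates ``$u=\lfloor n\phi\rfloor$'' and ``$v=\lfloor n\phi^2\rfloor$'' (equivalently, the Zeckendorf shift $\rep_F(v)=\rep_F(u)0$ together with a parity condition on the trailing zeros of $\rep_F(u)$) are Fibonacci-automatic, and the Fibonacci system is addable, so relations between $n$, $\lfloor n\phi\rfloor$ and $\lfloor n\phi^2\rfloor$ are first-order expressible and decidable by {\tt Walnut}.

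First I would make $g$ automatic. The recurrence of Definition~\ref{def:fg} is a genuine recursive definition: writing $A_n=\lfloor n\phi\rfloor$ and $B_m=\lfloor m(\phi+1)\rfloor=\lfloor m\phi^2\rfloor$, for each $n$ there is at most one $m$ with $A_n=B_m+1$ (as $m\mapsto B_m$ is strictly increasing), and any such $m$ satisfies $m<n$ (since $B_m<B_m+1=A_n\le B_n$); moreover ``$A_n=B_m+1$ for some $m$'' is a Fibonacci-automatic condition on $n$ (for $n\ge 2$ it says that $\rep_F(A_n-1)$ ends with an odd number of zeros). So I would compute $g(n)$ for $n$ up to a few hundred, minimise, guess a finite automaton $\mathcal{A}_g$ reading $\rep_F(n)$ with output in $\{0,1\}$, and verify with {\tt Walnut} the sentence asserting that the function it computes satisfies $g(0)=1$, $g(1)=0$, and the case split of Definition~\ref{def:fg}. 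Since these constraints have a unique solution, this establishes $g=\mathcal{A}_g$.

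With $g$ in hand, the set in~\eqref{eq:Ppos} is given by a {\tt Walnut} predicate $\psi_P(x,y)$: there is an $n$ such that either $g(n)=0$ and $(x,y)\in\{(A_n-1,B_n),(B_n,A_n-1)\}$, or $g(n)=1$ and $(x,y)\in\{(A_n,B_n+1),(B_n+1,A_n)\}$ (splitting on the bit $g(n)$ keeps every quantity nonnegative, $n=0$ included). The move relation of $K^2$ --- a legal step from $(x,y)$ to $(x',y')$ is horizontal ($y'=y,\ x'<x$), vertical ($x'=x,\ y'<y$) or diagonal ($x-x'=y-y'>0$) --- and membership in $T=\{(x,y):x+y\le 2\}$ are immediately first-order. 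It then remains to have {\tt Walnut} verify the two standard conditions that characterise the $\mathcal{P}$-positions of $K^2$, adapted to the rule that reaching $T$ ends the game in the mover's favour: no legal move from a position of $P$ reaches a position good for the mover, and from every position that is not already lost for the mover and not in $P$ there is a legal move to a position good for the mover, where ``good for the mover'' means in $P$, or a game-ending position won by the move. Since every legal move strictly decreases $x+y$, the game graph is well-founded, so these two conditions fix the outcome class of each position by backward induction and force $P$ to equal the set of $\mathcal{P}$-positions; Theorem~\ref{thm:K2} follows.

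I expect the main obstacle to be twofold. First, guessing the automaton $\mathcal{A}_g$ and getting it to pass the self-referential recurrence check: the recurrence couples $g(n)$ to $g(m)$ through a relation between $n$ and $m$ that, while definable, must be phrased exactly right, and a wrong guess for $\mathcal{A}_g$ simply fails. Second, transcribing the terminal-position convention of $K^2$ from Komak {\em et al.} into precisely the right pair of {\tt Walnut} sentences --- handling the six points of $T$, and in particular the status of $(0,0)$ and $(1,1)$ relative to~\eqref{eq:Ppos}, correctly; this is also where one checks whether the published statement is exactly right or needs a minor correction, i.e.\ where the ``confirm and extend'' of the abstract happens. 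Everything else --- assembling the automaton for $P$, running the two verifications, and the concluding well-founded induction --- is then routine.
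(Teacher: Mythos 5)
Your proposal is correct and follows essentially the same route as the paper: guess a Fibonacci-automatic description of $g$ and have {\tt Walnut} certify it against its defining recurrence, encode the candidate set \eqref{eq:Ppos} as a predicate, and check stability and absorption (with the terminal set $x+y\le 2$ handled in the formulas), so that uniqueness of the kernel of the acyclic game graph (Proposition~\ref{pro:kernel}) yields Theorem~\ref{thm:K2}. The only cosmetic differences are that the paper obtains the automaton for $g$ via the $\varphi$-morphism heuristic and verifies it against the Hofstadter-$G$ recurrence of Proposition~\ref{prop:h} (using $h(n)=\lfloor(n+1)\phi\rfloor-n-1$) rather than against Definition~\ref{def:fg} directly, and it handles the terminal-position convention you flag by simply including the positions with $x+y\le 1$ in the candidate set and restricting the stability check to sources with $p+q>2$.
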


The authors provide a classical proof of their result by showing that the set of $\mathcal{P}$-positions described by \eqref{eq:Ppos} is both stable and absorbing (see Definition~\ref{def:stab}). The proof is quite long and requires a detailed case analysis. Without diminishing their achievement, once the set \eqref{eq:Ppos} has been conjectured, if the set can be handled by {\tt Walnut}, then it becomes straightforward to produce an automated proof in just a few lines. Our approach therefore complements the work carried out in \cite{Komak}.

In \cite{block}, other variants of Wythoff game are studied, in which some blocking maneuvers are added. Let $k\ge 1$. In the game denoted by $W^k$, for each move, before the next player (i.e., the player who is about to play) moves, the previous player (i.e., the one who has just played) may declare at most $k-1$ of the options as forbidden. When the next player has moved, any blocking maneuver is forgotten and has no further incidence on the game. This terminology of previous and next player explains why we speak of $\mathcal{N}$- and $\mathcal{P}$-positions, respectively. For $k=1$, this is the classical game of Wythoff. When $k=2,3$, we reconsider Larsson's result and obtain an automated proof, replacing a 2-page-long analysis by some easy-to-describe first-order formulas. 

As Shallit has already noted on several occasions, the two approaches are complementary. A classical, purely combinatorial proof usually provides structural insights, whereas the use of automated provers helps avoid lengthy case analyses and enables the exploration of directions that are difficult to access through traditional techniques. In particular, in this article, we easily obtain a variety of results that would otherwise require significantly more effort using ``classical'' methods. Moreover, these results lead us to formulate general theorems.
\smallskip

This article is organized as follows. Section~\ref{sec:2} --- which was the starting point of this article --- aims to give an automatic proof of the algebraic characterization \eqref{eq:Ppos} of the $\mathcal{P}$-positions of $K^2$. We begin with preliminary results about Fibonacci-automatic sequences. We define a notion of $\varphi$-morphism and describe a heuristic that we extensively use throughout the paper to construct such $\varphi$-morphisms. Given a sufficiently long prefix of an infinite word, we obtain morphisms that can be used in our automatic proofs (and we can therefore prove the correctness of the procedure). Basic results from combinatorial game theory are given in Section~\ref{ssec:autom_proof}.

In Section~\ref{sec:newGame}, we obtain results about new games: a parameterized version of a variant of Wythoff game denoted by $K^\ell$, where the set of terminal states is
\[
  \{(x,y)\in\mathbb{N}^2\mid x+y\le \ell\}
\]
for some $\ell\in\mathbb{N}$. We study the games $K^1$, $K^3$ and $K^4$ and obtain algebraic characterizations of the $\mathcal{P}$-positions similar to \cite{Komak}. In particular, as for the classical Wythoff game $K^0$, the game $K^1$ has a nice set of $\mathcal{P}$-positions: a pair $(a,b)$ of integers such that $a\le b$ is a $\mathcal{P}$-position of $K^1$ if and only if its Fibonacci representation $\rep_F(a)$ ends with zero and $\rep_F(b)$ is of the form $\rep_F(a)1$, see Theorem~\ref{thm:carK1}. 

In Section~\ref{sec:6}, we go further in the analysis of the games $K^\ell$ for an arbitrary $\ell\ge 1$. We provide a recursive characterization of the set of $\mathcal{P}$-positions. The non-terminal $\mathcal{P}$-positions $(a_n,b_n)_{n\ge 0}$ of $K^\ell$ are given, in Theorem~\ref{thm:rec_car}, by
\[
  (a_0,b_0)=(\ell+1,2\ell+2) \quad\text{ and }\quad \forall n\ge 1, \left\{
    \begin{array}{l}
a_n =\mex (\{ a_i,b_i \mid i<n \}\cup\{0,1,\ldots,\ell\}) \\
b_n =a_n+n+\ell+1.\\
    \end{array}
  \right.
\]
We prove this result in an automatic way for $\ell=1,2$, then extend it for all values of the parameter using classical arguments. With Corollary~\ref{cor:behave}, we generalize Theorem~\ref{thm:K2} and obtain that $(a_n,b_n)_{n\ge 0}$ are of the form
\[ a_n=\lfloor (n+\ell)\phi\rfloor + \lambda_\ell(n) \quad\text{ and }\quad
  b_n=\lfloor (n+\ell)\phi^2\rfloor + \lambda_\ell(n)+1\]
for some bounded integer-valued function $\lambda_\ell$. Our result can be related to the recent paper \cite{Letouzey} where discrepancy of generalized Hofstadter functions is studied (and formalized in {\tt Coq}). Our methods are inspired by   \cite{Fraenkel5,Kimberling2,Larsson}.

It is well-known that the set of $\mathcal{P}$-positions of the Wythoff game are coded by the Fibonacci word~$\mathbf{f}=abaababa\cdots$, see \cite{Vision,Duchene}. If we start indexing letters of~$\mathbf{f}$ with $1$, the indices of the letters $a$ (resp. $b$) in~$\mathbf{f}$ correspond to the sequence $(a_n)_{n\ge 0}$ (resp. $(b_n)_{n\ge 0}$). Using the recursive definition from the previous section, we obtain  similar results with $K^1,K^2,K^3$: their $\mathcal{P}$-positions are encoded by morphic words obtained by $\varphi$-morphisms and codings. 

In Section~\ref{sec:3}, we obtain automatic proofs of Larsson's result about $W^2$ and $W^3$. Interestingly, the automata arising in the computations are much larger than those for $K^2$.

In Section~\ref{sec:redundant}, we are concerned with redundant moves of the games discussed so far. A move is considered {\em redundant} if, upon its removal from the rule-set, then the set of $\mathcal{P}$-positions remains unchanged. It is a classical question in combinatorial game theory: do distinct rule-sets yield the same set of $\mathcal{P}$-positions? In particular, can certain moves be added or removed without altering the $\mathcal{P}$-positions?

\begin{remark}
  A {\tt Jupyter} notebook recording all the {\tt Walnut} computations is available (it has been produced using Ollinger's {\tt Walnut-Kernel}. 
  The {\tt Mathematica} code and the various {\tt Walnut} files are also available online\footnote{at {\tt https://hdl.handle.net/2268/338482}, you may also download a standalone Wolfram Player.}. In particular, the package permits us to compute $\mathcal{P}$-positions and $\varphi$-morphisms.
\end{remark}
\section{Our starting point: The game $K^2$}\label{sec:2}
This section is about the game $K^2$. The aim is to give an automatic proof of the characterization of its $\mathcal{P}$-positions. We first show that the sequence $g$ appearing in \eqref{eq:Ppos} is {\em Fibonacci-automatic}, i.e., for all $n\ge 0$, its $n$th term is the output of a DFAO fed with the Fibonacci representation of $n$ (see Definition~\ref{def:fib_rep}). To discover this DFAO we apply a heuristic that we describe in Section~\ref{ssec:heuristic} (and then applied several times in Section~\ref{sec:newGame}). Finally, we give the automatic proof. To be self-contained, classical definitions from combinatorial game theory are given in Section~\ref{ssec:autom_proof}.

\subsection{A first result on the sequence $g$}
To make use of {\tt Walnut}, all the ingredients need to be definable in a suitable extension of $\langle\mathbb{N},+\rangle$. Here, we show that the sequence $g$ can be computed with a DFAO fed with Fibonacci representations. 

Let us follow the presentation given in \cite{Komak} where an alternative definition of $g$ is provided.
\begin{proposition}\label{prop:h}
  We have $g(0)=g(1)=1$ and for all $n\ge 2$, 
\[
  g(n) =
  \left\{\begin{array}{ll}
                  1- g(h(n-1)),& \text{ if }h(n-2)<h(n-1);\\
                  1,& \text{ otherwise}.\\
         \end{array}\right.
\]
where $h$ is the {\em Hofstadter $G$-sequence} \seqnum{A005206}, see \cite{Hofstadter}.   
\end{proposition}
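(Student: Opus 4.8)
The plan is to show that the two definitions of $g$ coincide by relating the condition $\lfloor n\phi\rfloor = \lfloor m(\phi+1)\rfloor + 1$ from Definition~\ref{def:fg} to the arithmetic of the Hofstadter $G$-sequence $h$, which satisfies $h(n) = n - h(h(n-1))$ with $h(0)=0$, and is known to be $h(n) = \lfloor (n+1)/\phi \rfloor = \lfloor(n+1)(\phi-1)\rfloor$. First I would establish the base cases: Definition~\ref{def:fg} gives $g(0)=1$ directly and $g(1)=0$; but in Proposition~\ref{prop:h} we assert $g(1)=1$. The discrepancy at $n=1$ is harmless because the recurrence in Proposition~\ref{prop:h} only invokes $g$ at arguments $h(n-1)$ with $n\ge 2$, and one checks $h(n-1)\ge 2$ (hence never equal to $1$) whenever $n\ge 4$, while the cases $n=2,3$ must be verified by hand against Table~\ref{tab:gh}; so redefining $g(1)$ does not propagate. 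Thus the real content is the recurrence for $n\ge 2$.

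Next I would analyze the predicate ``there exists $m\ge 0$ with $\lfloor n\phi\rfloor = \lfloor m(\phi+1)\rfloor + 1$''. Writing $B_m = \lfloor m(\phi+1)\rfloor = \lfloor m\phi^2/\phi\cdot\phi\rfloor$... more usefully $\lfloor m(\phi+1)\rfloor = \lfloor m\phi\rfloor + m$, this says $\lfloor n\phi\rfloor - 1 = \lfloor m\phi\rfloor + m$ for some $m$. The key step is to translate this into a statement about $h$: since $h$ is essentially the inverse of $n\mapsto \lfloor n\phi\rfloor$ composed with a shift (the Beatty sequences $\lfloor n\phi\rfloor$ and $\lfloor n\phi^2\rfloor$ partition $\mathbb{N}_{>0}$), the value $h(n-1)$ counts how many terms of the lower Wythoff sequence lie at or below $n-1$. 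I would show that $h(n-2) < h(n-1)$ holds precisely when $n-1$ is a value $\lfloor k\phi\rfloor$ of the lower Wythoff sequence, and in that case $h(n-1) = k$; combining this with the identity $\lfloor k\phi\rfloor + k = \lfloor k(\phi+1)\rfloor$ one recovers exactly the condition $\lfloor n\phi\rfloor = \lfloor h(n-1)(\phi+1)\rfloor + 1$ with $m = h(n-1)$. Conversely, when $h(n-2) = h(n-1)$, no such $m$ exists. This identification simultaneously shows the two branch conditions agree and that the $m$ in Definition~\ref{def:fg} equals $h(n-1)$, so $1 - g(m) = 1 - g(h(n-1))$, matching Proposition~\ref{prop:h}.

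The main obstacle I expect is the careful bookkeeping with floor functions and the two Beatty sequences: one must pin down exactly which integers $n-1$ are of the form $\lfloor k\phi\rfloor$, verify that the corresponding $k$ is unique, and confirm the edge behavior near small $n$. Concretely this rests on the classical facts that $\{\lfloor k\phi\rfloor : k\ge 1\}$ and $\{\lfloor k\phi^2\rfloor : k\ge 1\}$ partition the positive integers, that $\lfloor k\phi^2\rfloor = \lfloor k\phi\rfloor + k$, and that $h(\lfloor k\phi\rfloor) = k$ while $h(\lfloor k\phi^2\rfloor - 1) = h(\lfloor k\phi^2\rfloor) = \lfloor k\phi\rfloor$ (equivalently, $h$ is non-decreasing, increasing by $1$ at lower-Wythoff positions and constant for one step after upper-Wythoff positions, which follows from $h(n+1)-h(n)\in\{0,1\}$ together with the density argument). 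Once these standard Beatty-sequence identities are in hand, the equivalence of the two definitions of $g$ reduces to substituting $m = h(n-1)$ and reading off the recurrence; I would close by a short induction (or simply the observation that the two sequences satisfy the same recurrence with matching initial data for $n\ge 2$) to conclude $g$ as defined in Definition~\ref{def:fg} equals the sequence characterized in Proposition~\ref{prop:h}.
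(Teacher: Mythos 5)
The first thing to note is that the paper does not prove Proposition~\ref{prop:h} at all: it is imported from \cite{Komak} as an alternative description of $g$, and the only verification appearing in the paper is the mechanical {\tt Walnut} check inside the proof of Proposition~\ref{pro:fibaut}, which tests the recurrence for $n\ge 2$ against the candidate automaton. So there is no paper proof to compare with, and your Beatty-sequence route is an independent argument; its core is sound. Indeed $h(n-1)=\#\{k\ge 1:\lfloor k\phi\rfloor\le n-1\}$, so $h(n-2)<h(n-1)$ exactly when $n-1=\lfloor k\phi\rfloor$ for some $k\ge 1$, and then $k=h(n-1)$; writing $n=\lfloor k\phi\rfloor+1$ one gets $n\phi=\lfloor k\phi^2\rfloor+\phi-(\phi-1)\{k\phi\}$ with $\phi-(\phi-1)\{k\phi\}\in(1,\phi)$, hence $\lfloor n\phi\rfloor=\lfloor k\phi^2\rfloor+1=\lfloor k(\phi+1)\rfloor+1$, while a short companion computation shows that no such $m$ exists when $n-1$ is an upper Wythoff value; uniqueness of $m$ follows from monotonicity of $m\mapsto\lfloor m(\phi+1)\rfloor$. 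Once the two branch conditions are identified and $m=h(n-1)$, the recurrence for the $g$ of Definition~\ref{def:fg} is immediate --- no closing induction is really needed, since you are just rewriting the case distinction of the definition.

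The genuine flaw is your treatment of the initial values. You claim the discrepancy at $g(1)$ ``does not propagate'', but it does, exactly once: for $n=2$ one has $h(0)=0<h(1)=1$, so the recurrence reads $g(2)=1-g(h(1))=1-g(1)$. With the proposition's stated value $g(1)=1$ this yields $g(2)=0$, contradicting Definition~\ref{def:fg} and Table~\ref{tab:gh}, where $g(2)=1$; thus your proposed ``hand check of the cases $n=2,3$ against Table~\ref{tab:gh}'' would expose the inconsistency rather than confirm harmlessness. The consistent reading is that the recurrence of Proposition~\ref{prop:h} holds for the function $g$ of Definition~\ref{def:fg}, whose initial data are $g(0)=1$, $g(1)=0$ (this is precisely what the {\tt Walnut} test in Proposition~\ref{pro:fibaut} confirms for $n\ge2$), and the printed ``$g(0)=g(1)=1$'' has to be treated as a misprint inherited from the source, not as a value you can adopt and quarantine. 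Your write-up should state this explicitly instead of asserting that redefining $g(1)$ is innocuous; with that correction, the rest of your outline goes through.
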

A characterization of the entries of Hofstadter $G$-sequence in terms of the lower and upper Wythoff sequence is given in \cite{Dekking}.

\begin{definition}\label{def:fib_rep}
Let $F=(F_n)_{n\ge 0}$ be the Fibonacci sequence where $F_0=1$, $F_1=2$ and $F_{n+2}=F_{n+1}+F_n$ for all $n\ge 0$. 
We let $\rep_F(n)$ denote the {\em greedy representation} of the integer $n>0$, that is the unique word $d_k\cdots d_0$ over $\{0,1\}$ such that
\[
  n=\sum_{i=0}^k d_iF_i
\]
with $d_k\neq 0$ and $d_{i+1}d_i\neq 11$ for $i<k$. The representation of $0$ is $\rep_F(0)=\varepsilon$ the empty word. 
\end{definition}

The Beatty sequences $\lfloor n\phi\rfloor$ and $\lfloor n\phi^2\rfloor$, also known respectively as lower and upper Wythoff sequences \seqnum{A000201}, \seqnum{A001950} can be defined in {\tt Walnut} using 
\begin{verbatim}
reg shift {0,1} {0,1} "([0,0]|([0,1][1,1]*[1,0]))*":
def phin "?msd_fib (s=0 & n=0) | Ex $shift(n-1,x) & s=x+1":
def phi2n "?msd_fib (s=0 & n=0) | Ex,y $shift(n-1,x) & $shift(x,y) & s=y+2":
\end{verbatim}
For instance, the command
\begin{verbatim}
eval test "?msd_fib $phin(3,4)":
\end{verbatim}
returns {\tt TRUE} because $\lfloor 3\phi\rfloor=4$. See \cite[p. 278]{ShallitBook}.

\begin{proposition}\label{pro:fibaut}
  The sequence $(g(n))_{n\ge 0}$ is Fibonacci-automatic: The DFA depicted in Fig.~\ref{fig:automG} accepts $\rep_F(n)$ if and only if $g(n)=1$.
\end{proposition}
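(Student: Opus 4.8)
The plan is to establish Proposition~\ref{pro:fibaut} by leveraging the alternative recursive description of $g$ given in Proposition~\ref{prop:h}, which rephrases $g$ entirely in terms of the Hofstadter $G$-sequence $h$. Since $h$ is a well-studied Fibonacci-automatic sequence (indeed, $h(n) = \lfloor (n+1)/\phi \rfloor$ and its graph, the "floor $n/\phi$" map, is recognized by a small Fibonacci automaton — see \cite[Ch.~10]{ShallitBook}), every ingredient of the recursion in Proposition~\ref{prop:h} is definable in $\langle \mathbb{N}, +, V_F\rangle$. Concretely, first I would introduce in {\tt Walnut} a predicate $H(n,m)$ expressing $m = h(n)$ (using the relation $h(n)=m \iff \lfloor (n+1)\phi\rfloor - n = m+1$, or directly via the known automaton for the Hofstadter $G$-sequence), together with the comparison predicate for $h(n-2) < h(n-1)$. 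The condition "$h(n-2)<h(n-1)$" is itself Fibonacci-recognizable as a subset of $\mathbb{N}$.

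The second step is to produce the candidate DFA of Figure~\ref{fig:automG} — this is where the heuristic of Section~\ref{ssec:heuristic} enters: from a sufficiently long prefix of the sequence $(g(n))_{n\ge 0}$ (computable directly from Definition~\ref{def:fg} or Proposition~\ref{prop:h}) one guesses a candidate DFA $\mathcal{A}$ over the Fibonacci alphabet $\{0,1\}$ accepting precisely those $\rep_F(n)$ with $g(n)=1$. Call $G_{\mathcal A}(n)$ the characteristic function of the language of $\mathcal A$ composed with $\rep_F$.

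The third step — the actual verification — is to check in {\tt Walnut} that $G_{\mathcal A}$ satisfies the defining recursion of Proposition~\ref{prop:h}. That is, one verifies the first-order statement asserting: $G_{\mathcal A}(0)=G_{\mathcal A}(1)=1$; and for all $n\ge 2$, if $h(n-2)<h(n-1)$ then $G_{\mathcal A}(n) = 1-G_{\mathcal A}(h(n-1))$, and otherwise $G_{\mathcal A}(n)=1$. Since the recursion in Proposition~\ref{prop:h} uniquely determines the sequence $g$ (it is a legitimate recursive definition: $h(n-1)<n$ for $n\ge 1$, so the recursion is well-founded), a sequence satisfying this recursion must equal $g$. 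Hence {\tt Walnut} returning {\tt TRUE} on this statement proves $G_{\mathcal A}=g$, i.e.\ the DFA accepts $\rep_F(n)$ iff $g(n)=1$.

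The main obstacle I anticipate is the appearance of the term $g(h(n-1))$ inside the recursion: to express $G_{\mathcal A}(h(n-1))$ one must substitute the Fibonacci representation of $h(n-1)$ into the automaton $\mathcal A$, i.e.\ compose the "graph of $h$" automaton with $\mathcal A$. This is routine in {\tt Walnut} (it is exactly the kind of nested-function manipulation the tool is built for), but one must be careful that the automaton for $h$ is addable-system-compatible and that the composition does not blow up unmanageably; verifying well-definedness of the recursion (so that the uniqueness argument is valid) is a short but essential side remark rather than a computational difficulty. Everything else — the base cases, the guess-and-verify loop — is mechanical.
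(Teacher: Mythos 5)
Your proposal is correct and follows essentially the same route as the paper: guess the DFA with the heuristic of Section~\ref{ssec:heuristic}, encode $h$ as a Fibonacci-synchronized predicate via $h(n)=\lfloor (n+1)\phi\rfloor-n-1$, and have {\tt Walnut} verify that the candidate automatic sequence satisfies the recursion of Proposition~\ref{prop:h}, the nested term $g(h(n-1))$ being handled by quantifying over the synchronized value of $h(n-1)$. Your explicit remarks on the base cases and on the well-foundedness of the recursion (so that satisfying it forces equality with $g$) are details the paper leaves implicit, but they do not constitute a different approach.
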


\begin{figure}[h!tb]
  \centering
  \scalebox{.7}{\includegraphics{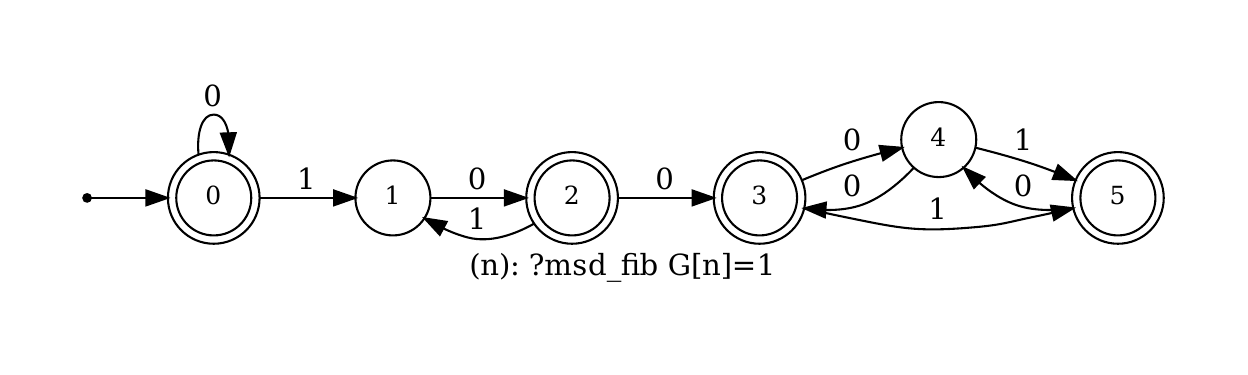}}
  \caption{A DFA recognizing $\{\rep_F(n):g(n)=1\}$.}
  \label{fig:automG}
\end{figure}

\begin{proof}
  The candidate automaton depicted in Figure~\ref{fig:automG} can be encoded and stored in {\tt Word Automata Lib}. So we have access in {\tt Walnut} to a candidate function {\tt G}. A discussion about how we have obtained this DFA is given in Section~\ref{ssec:heuristic}. Now, it is well-known that the Hofstadter sequence satisfies, for $n>0$, 
  \[
    h(n)=\biggl\lfloor \frac{n+1}{\phi}\biggr\rfloor=\lfloor (n+1)\phi\rfloor -n-1.
  \]
See, \cite{Gault,Granville}. Hence, $h(n)=y$ can be easily coded as a binary predicate $h(n,y)$ by
\begin{verbatim}
def hofstadter "?msd_fib (y=1 & n=0) | Ex $phin(n+1,x) & y+n+1=x":
\end{verbatim}
Now we have to check that the automaton satisfies the properties expressed in Proposition~\ref{prop:h}
\begin{verbatim}
eval test "?msd_fib An n>=2 => (Es,t ($hofstadter(n-2,s) & $hofstadter(n-1,t)
     & (s<t => G[n]+G[t]=1)))":
eval test2 "?msd_fib An n>=2 => (Es,t ($hofstadter(n-2,s) & $hofstadter(n-1,t)
     & (s=t => G[n]=1)))": 
\end{verbatim}
and both commands return {\tt True}.  
\end{proof}

\subsection{A heuristic generating $\varphi$-morphisms}\label{ssec:heuristic}
We present here a heuristic that, given a (long enough) prefix of an infinite word~$\mathbf{w}$ being the coding of a fixed point of what we call a $\varphi$-morphism, suggests a morphism~$\mu$ and a coding~$\rho$ such that $\mathbf{w}=\rho(\mu^\omega(0))$. This heuristic will be used many times (it was already used in \cite{DFNR} but not explicitly described). Once the morphism is obtained, we can then prove its correctness with {\tt Walnut}.
\begin{definition}
  A morphism $\mu:A^*\to A^*$ is said to be a {\em $\varphi$-morphism}, if for any with fixed point $\mathbf{w}=w_0w_1\cdots$ of $\mu$,
  \begin{itemize}
  \item if $\rep_F(n)$ ends with $1$, then $\mu(w_n)=w_{\rep_F(n)0}$,
  \item if $\rep_F(n)$ ends with $0$, then $\mu(w_n)=w_{\rep_F(n)0}w_{\rep_F(n)1}$.
  \end{itemize}
\end{definition}

We have chosen a simplified presentation, adapted to the situation considered here. Indeed, this is a special case of what is called a $U$-substitution in \cite{BH} for a wide class of numeration systems. Let $\sigma:a\mapsto ab$, $b\mapsto a$ the usual Fibonacci morphism generating the Fibonacci word $\mathbf{f}=abaababaa\cdots$. If $\mu:A^*\to A^*$ is a $\varphi$-morphism, then there is a coding $f:A^*\to\{a,b\}^*$ such that
\begin{equation}\label{eq:morphf}
  f(\mu(i))=\sigma(f(i)),\text{ for all } i\in A.
\end{equation}
Roughly speaking, $\mu$ is defined on a larger alphabet than $\{a,b\}$ but it preserves the underlying structure of the morphism~$\sigma$.

\begin{example}\label{exa:mu}
  The morphism
  \[
    \mu:0 \mapsto 01,\, 1 \mapsto 2,\, 2 \mapsto 31,\, 3 \mapsto 45,\, 4 \mapsto 35,\, 5 \mapsto 4
  \]
  is a $\varphi$-morphism. Consider the coding $f:0,2,3,4\mapsto a$ and $1,5\mapsto b$.
\end{example}

\begin{definition}\label{def:dfao}
  We will often associate a DFA with a morphism $\mu:A^*\to A^*$. The set of states is the alphabet~$A$, the alphabet of the automaton is $0,1$. This is a classical construction that goes back to Cobham \cite{cobham} and it corresponds to the {\tt promote} operation in {\tt Walnut}. The transitions of the DFA are given by the morphism: if $\mu(c)=de$, $c,d,e\in A$, then there are edges from $c$ to $d$ and $e$ with respective labels $0$ and $1$. If $\mu(c)=d$, $c,d\in A$, then there is a single edge from $c$ to $d$ labeled by $0$. If we have an extra coding $\rho:A\to B$ (i.e., a letter-to-letter morphism), then the DFA can be turned into a DFAO where the output function is exactly~$\rho$.
\end{definition}

With the morphism from Example~\ref{exa:mu}, the corresponding automaton is depicted in Figure~\ref{fig:muf}. Property~\eqref{eq:morphf} means that there is morphism of automata between the one associated with $\mu$ and the one associated with the Fibonacci morphism: if there is a transition between $c$ and $d$ with label $i$, then there is a transition between $f(c)$ and $f(d)$ with the same label $i$.

\begin{figure}[h!tb]
  \centering
  \begin{tikzpicture}[auto]
  \node[state, initial]   (q1) {0};
  \node[state] (q3) [right=1cm of q1] {2};
  \node[state] (q4) [right=1cm of q3] {3};
  \node[state] (q5) [right=of q4] {4};
  \node[state, initial] (a) [right=3cm of q5] {$a$};
  
  \node[state] (q2) [below=2cm of q1] {1};
  \node[state] (q6) [right=4cm of q2]     {5};
  \node[state] (b) [below=2cm of a] {$b$};

\node[draw, dashed, rounded corners,
        fit=(q1) (q3) (q4) (q5),
        inner xsep=6mm, inner ysep=6mm, yshift=3mm, xshift=-2mm] (topbox) {};
        
\node[draw, dashed, rounded corners,
        fit=(q2) (q6),
        inner xsep=6mm, inner ysep=3mm] (topbox) {};
        
  \path[->]
    (q1) edge[loop above]  node {0} (q1)
    (q1) edge              node[left] {1} (q2)
    (q2) edge              node {0} (q3)
    (q3) edge              node {0} (q4)
    (q3) edge[bend left]   node[pos=0.2] {1} (q2)
    (q4) edge[bend left]   node {0} (q5)
    (q4) edge[bend right]  node[left] {1} (q6)
    (q5) edge              node {0} (q4)
    (q5) edge[bend left]   node {1} (q6)
    (q6) edge              node {0} (q5);

  \path[->]
    (a) edge[loop above]  node {0} (a)
    (a) edge[bend right]   node[left] {1} (b)
    (b) edge[bend right]  node[right] {0} (a);

    \draw[->, shorten <=5mm, shorten >=8mm]
    (q5) -- node[above] {$f$} (a);
        \draw[->, shorten <=7mm, shorten >=2mm]
  (q6) -- node[above] {$f$} (b);
  \end{tikzpicture}
  \caption{Automata associated with $\mu$ and the Fibonacci morphism.}
  \label{fig:muf}
\end{figure}
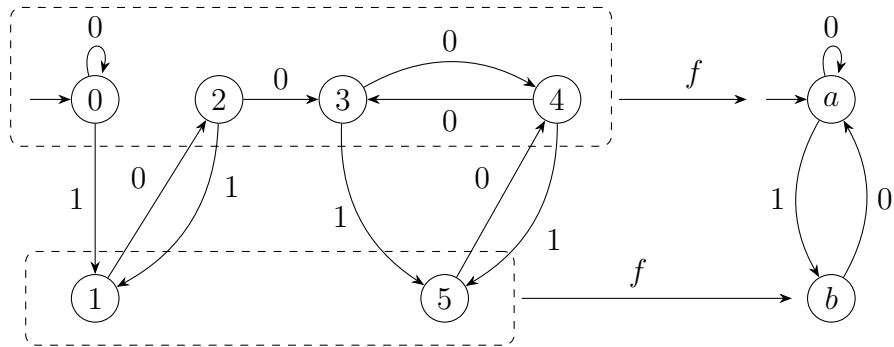
\bigskip

{\bf Heuristic}.
We describe a useful heuristic to devise the automaton in Figure~\ref{fig:automG} (and many other ones in the article). Assume that the sequence $g$ in Table~\ref{tab:gh} is the fixed point of a $\varphi$-morphism~$\mu$. In this case, we start from a single letter $g_n$, $n\ge 0$, then iterate the morphism a finite number of times (here, three times) and record the different factors obtained. Since we assume that $\mu$ is a $\varphi$-morphism, we know precisely where the factors $\mu^i(g_n)$ are located.

Precisely, if $\rep_F(n)=u$, consider the Fibonacci representations of length $|u|+i$ having $u$ as a prefix. These words give the positions of the letters in $\mu^i(g_n)$. Let $\alpha(i,n)$ and $\beta(i,n)$ be the first and last positions (Figure~\ref{fig:type} may help the reader). So, formally, a $t$-type is a $(t+1)$-tuple made of factors in prescribed positions depending only on $n$
\[
  (g_n,\ g_{\alpha(1,n)}\cdots g_{\beta(1,n)},\ \ldots,\  g_{\alpha(t,n)}\cdots g_{\beta(t,n)}).
\]

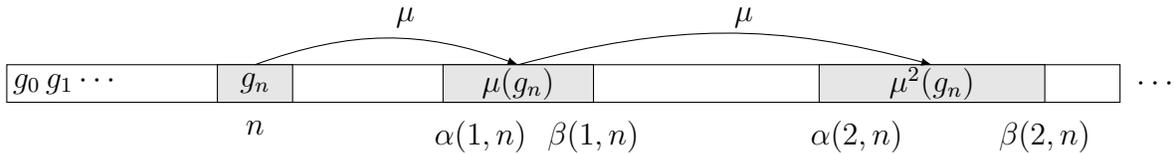
\begin{figure}[h!tb]
  \centering
\begin{tikzpicture}[>=latex]

  \draw (-.8,0) rectangle (14,.5);

  \node at (0,0.25) {$g_0\, g_1\cdots$};
  \node at (14.5,0.25) {$\cdots$};

  \draw[fill=gray!20] (2,0) rectangle (3,.5);
  \node at (2.5,0.25) {$g_n$};
  \node[below=3pt] at (2.5,0) {$n$};

  \draw[fill=gray!20] (5,0) rectangle (7,.5);
  \node[below=3pt] at (5.5,0) {${\alpha(1,n)}$};
  \node[below=3pt] at (7,0) {${\beta(1,n)}$};
  \node at (6,0.25) {$\mu(g_n)$};
   
  \draw[fill=gray!20] (10,0) rectangle (13,.5);
   \node at (11.5,0.25) {$\mu^2(g_n)$};
  \node[below=3pt] at (10.5,0) {${\alpha(2,n)}$};
  \node[below=3pt] at (13,0) {${\beta(2,n)}$};
  \coordinate (letter) at (2.5,.5);
  \coordinate (factor) at (6,.5);
  \coordinate (factor2) at (11.5,.5);
  
  \draw[->,bend left=20] (letter) to (factor);
   \node[above=3pt] at (4.5,.75) {$\mu$};
 \draw[->,bend left=15] (factor) to (factor2);
 \node[above=3pt] at (9,.75) {$\mu$};
\end{tikzpicture}
 \caption{Construction of a type.}
  \label{fig:type}
\end{figure}

Over a long prefix, we observe only six distinct $3$-types (and this number remains stable when the number of iterations increases). As an example, starting from $g_4=0$, since $\rep_F(4)=101$, this word can be extended to $1010$ whose value is $7$ and $g_7=1$. Now $1010$ can be extended to $10100$ and $10101$ giving $g_{11}g_{12}=10$. Finally these two Fibonacci representations can again be extended to $101000$, $101001$ and $101010$ giving $g_{18}g_{19}g_{20}=011$. This is the second line in Table~\ref{tab:types}. The reader may notice that the six elements in the column $\mu^2(g_n)$ are pairwise distinct. The last column does not bring more information to distinguish the rows of the table. So on this example, considering $2$-types would have been enough. However, $1$-types are not enough: we cannot distinguish the first and last rows. 
\begin{table}[h!tb]
\[
  \begin{array}{c|llll}
    &g_n & \mu(g_n) & \mu^2(g_n) & \mu^3(g_n) \\
    \hline
 0&1& 0& 1& 10\\
 1&0& 1& 10& 011\\
 2&1& 10& 011& 11010\\
 3&1& 01& 110& 01011\\
 4&0& 11& 010& 11011\\
 5&1& 0& 11& 010\\
  \end{array}
\]
  \caption{The different $3$-types when applying a $\varphi$-morphism to $g$.}
  \label{tab:types}
\end{table}

Now we associate each symbol $g_n$ with its $3$-type and get the infinite word
\[  
  01|2|31|45|2|35|4|31|45|4|35|45|\cdots
\]
where the vertical bars indicate the factorization of the Fibonacci word $ab|a|ab|ab|a|\cdots$ with factors $\sigma(a)=ab$ and $\sigma(b)=a$, the images of the letters by the Fibonacci morphism. From this, we deduce the morphism $\mu$ given in Example~\ref{exa:mu}: the image of the $n$th letter has to be the $n$th block in the factorization. So $0\mapsto 01$, then $1\mapsto 2$ and so on. From the morphism, it is a routine to obtain the DFA in Figure~\ref{fig:automG}, see Definition~\ref{def:dfao}.

Finally, one may define a coding~$\rho$ to recover the original sequence where each type is mapped to the first element of the tuple. So, $0,2,3,5\mapsto 1$ and $1,4\mapsto 0$. This can equivalently be performed by comparing the morphic word and the original sequence
\begin{align*}
  &01231452354314543545\cdots\\
  &10110011110100101101\cdots
\end{align*}
and matching corresponding elements.
\subsection{An automatic proof for $K^2$}\label{ssec:autom_proof}

Now that $g$ has been defined by a DFA fed with Fibonacci representations (and stored as {\tt G.txt} in {\tt Word Automata Lib}), the set of $\mathcal{P}$-positions \eqref{eq:Ppos} obtained by Komak et al. can be described by the following commands:
\begin{verbatim}
def pposK2_asym "?msd_fib (a+b<=1) | En,x $phin(n,x) & a+1=x+G[n]
                                                     & b=x+n+G[n]":
def pposK2 "?msd_fib $pposK2_asym(a,b) | $pposK2_asym(b,a) ":
\end{verbatim}
The resulting automaton has $27$ states.

For the sake of presentation, let us recall some basics on combinatorial games.
\begin{definition}\label{def:stab}
A set $S$ of positions is {\em stable} if, for all $s,t\in S$, $s\neq t$, there is no move between $s$ and $t$. A set $S$ of positions is {\em absorbing}, if for all $t\not\in S$, there exists $s\in S$ such that there is a move from $t$ to $s$ (or, $t$ has an option in $S$).  
\end{definition}
It is a well-known property in graph theory that every acyclic graph has a unique kernel, i.e., a stable and absorbing set of vertices. In a take-away game where tokens are removed from piles, the corresponding game graph is acyclic: it is not possible to visit twice the same position during a game.
\begin{proposition}[Folklore]\label{pro:kernel}
  The sets of $\mathcal{P}$- and $\mathcal{N}$-positions of an
  impartial acyclic game are uniquely determined by the following two
  properties:
\begin{enumerate}
\item Every move from a $\mathcal{P}$-position leads to an
  $\mathcal{N}$-position; equivalently there is no move between two $\mathcal{P}$-positions (stability property of $\mathcal{P}(G)$).
  \item From every $\mathcal{N}$-position, there exists a move leading to a $\mathcal{P}$-position (absorbing property of $\mathcal{P}(G)$).
\end{enumerate}
\end{proposition}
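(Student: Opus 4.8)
The plan is to proceed by well-founded induction on the game graph; this is legitimate precisely because the game is \emph{acyclic}. In the take-away games considered here every legal move strictly decreases a nonnegative potential — for instance the total number of tokens $x+y$ — so every play is finite and the ``is an option of'' relation on positions is well-founded; and this is exactly the content of ``acyclic'' together with finiteness of the set of reachable positions. Note that a declared terminal position, such as a pair with $x+y\le\ell$ in the variants $K^\ell$, simply has no outgoing move: the game is over and the player who has just moved has won, so it behaves as a $\mathcal{P}$-position, exactly as a position with no legal move under the normal convention.

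First I would recall the recursive description of the two sets: call a position a $\mathcal{P}$-position if each of its options is an $\mathcal{N}$-position (in particular a position with no option is vacuously a $\mathcal{P}$-position), and an $\mathcal{N}$-position if it has at least one option that is a $\mathcal{P}$-position. Well-founded recursion assigns each position to exactly one of the two classes, and a straightforward backward-induction argument identifies this partition with ``the previous player wins under optimal play'' versus ``the next player wins''. With this definition in hand, properties~1 and~2 hold by inspection: every option of a $\mathcal{P}$-position is an $\mathcal{N}$-position (hence there is no move joining two $\mathcal{P}$-positions, the stability phrasing), and every $\mathcal{N}$-position has, by definition, an option in $\mathcal{P}$ (the absorbing phrasing for the complement).

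The real content is uniqueness. Let $S$ be any set of positions such that every move leaving $S$ ends outside $S$, and every position outside $S$ has a move into $S$. I would show $S=\mathcal{P}$ by well-founded induction: fix $p$ and assume $q\in S\iff q\in\mathcal{P}$ for every $q$ reachable from $p$ in one or more moves. If $p\in\mathcal{P}$, all options of $p$ lie in $\mathcal{N}$, hence by the induction hypothesis outside $S$; if $p$ were outside $S$, the second property of $S$ would produce an option of $p$ inside $S$, hence in $\mathcal{P}$, a contradiction, so $p\in S$. If $p\in\mathcal{N}$, pick an option $q\in\mathcal{P}$; then $q\in S$ by the induction hypothesis, and if $p$ were in $S$ the first property of $S$ would forbid the move $p\to q$ between two elements of $S$, a contradiction, so $p\notin S$. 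Hence $S=\mathcal{P}$ and, taking complements, the companion set equals $\mathcal{N}$.

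I do not expect a genuine obstacle: every step is an unwinding of the definitions, which is why the result is folklore. The one point worth stating carefully — and the only place the hypotheses are used — is the well-foundedness of the move relation underlying the induction; once that is granted, both the verification that $(\mathcal{P},\mathcal{N})$ satisfies properties~1 and~2 and the uniqueness argument are essentially formal.
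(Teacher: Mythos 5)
Your proof is correct. The paper itself offers no proof of this proposition: it labels it as folklore and, in the sentence preceding the statement, appeals to the classical fact that every acyclic game graph has a unique kernel (a stable and absorbing set of vertices). Your well-founded induction --- define $\mathcal{P}$/$\mathcal{N}$ by the usual recursion, check stability and absorption by inspection, and then show any stable absorbing set $S$ coincides with $\mathcal{P}$ by induction along the move relation --- is precisely the standard argument underlying that kernel fact, so you have in effect supplied the proof the paper delegates to the literature. One point you state carefully, and rightly so, is the only place where a hand-waving argument could go wrong: acyclicity of an infinite digraph does not by itself make the move relation well-founded, and the induction genuinely needs the observation that each move strictly decreases a nonnegative potential (here the total number of tokens), so that only finitely many positions lie below any given one; the paper relies on this silently when it notes that the game graph of a take-away game is acyclic. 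Your treatment of the terminal positions of $K^\ell$ as option-free positions belonging to $\mathcal{P}$ also matches how the proposition is applied in the paper (e.g.\ in the proof of Theorem~\ref{thm:rec_car}, where the candidate set includes the terminal positions and stability is only tested on non-terminal pairs).
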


As observed in \cite{MRRW}, it is then a routine test to check stability and absorbing property:
\begin{verbatim}
eval stableK2 "?msd_fib Ap,q,r,s (($pposK2(p,q) & $pposK2(r,s)
     & p>=r & q>=s & p+q>2) => ((p=r & q=s) | (p>r & q>s & p+s!=q+r)))":

eval absorbingK2 "?msd_fib Ap,q (~$pposK2(p,q) => Ex,y
     (x<=p & y<=q & $pposK2(x,y) & (p+y=q+x | p=x | q=y))) ":
\end{verbatim}
These two commands return {\tt TRUE} resulting in the following result, \cite[Thm.~3]{Komak}. From Proposition~\ref{pro:kernel}, we immediately get an alternative proof of Theorem~\ref{thm:K2}.

\section{Towards an algebraic characterization for $K^\ell$}\label{sec:newGame}

In \cite{Komak}, Komak et al. chose $\{(x,y)\mid x+y\le 2\}$ as set of terminal positions. It is natural to consider a parameterized version where the set of terminal states is
\[
  \{(x,y)\in\mathbb{N}^2\mid x+y\le \ell\}
\]
for some $\ell\in\mathbb{N}$. The corresponding game is denoted by $K^\ell$. Note that the game $K^0$ is the classic Wythoff game. In this section, we obtain new results concerning the games $K^1$ and~$K^3$.

\subsection{The $K^1$-game}
Up to our knowledge, the game $K^1$ does not seem to have been studied in the literature. Computing the first $\mathcal{P}$-positions of $K^1$ and writing their Fibonacci expansions permitted us to state a conjecture that can be tested with {\tt Walnut}. This result is quite similar to the famous characterization of the $\mathcal{P}$-positions of Wythoff game obtained by Fraenkel \cite{Fraenkel3}.
\begin{table}[h!t]
  \[
    \begin{array}{c|r|c|r}
      a_n & \rep_F(a_n) & b_n & \rep_F(b_n) \\
      \hline
 0&\varepsilon  & 1 &1 \\
 2&10 & 4 &101 \\
 3&100 & 6 & 1001 \\
 5&1000 & 9 & 10001 \\
 7&1010 & 12 & 10101 \\
 8&10000 & 14 & 100001 \\
 10&10010 & 17 &100101 \\
 11&10100 & 19 & 101001 \\
 13&100000 & 22 & 1000001 \\
 15&100010 & 25 & 1000101 \\
  \end{array}
\]
  \caption{Representations of the first $\mathcal{P}$-positions in $K^1$.}\label{tab:K1}
\end{table}

\begin{theorem}\label{thm:carK1}
  A position $(a,b)$, with $a\le b$, is a $\mathcal{P}$-position of the game $K^1$, with a set of terminal positions being $\{(x,y)\mid x+y\le 1\}$ and Wythoff's moves, if and only if $\rep_F(a)$ ends with $0$ and $\rep_F(b)=\rep_F(a)1$. 
\end{theorem}
The candidate set of $\mathcal{P}$-positions can be readily implemented.
\begin{verbatim}
reg end_zero msd_fib "(0|1)*0":
reg add_one {0,1} {0,1} "([0,0]|([0,1][1,1]*[1,0]))*[0,1]":
def pposK1_asym "?msd_fib (a=0 & b=0) | ($end_zero(a) & $add_one(a,b))":
def pposK1 "?msd_fib $pposK1_asym(a,b) | $pposK1_asym(b,a)":
\end{verbatim}
The corresponding DFA is depicted in Figure~\ref{fig:automK1}.
\begin{figure}[h!tb]
  \centering
  \scalebox{.7}{\includegraphics{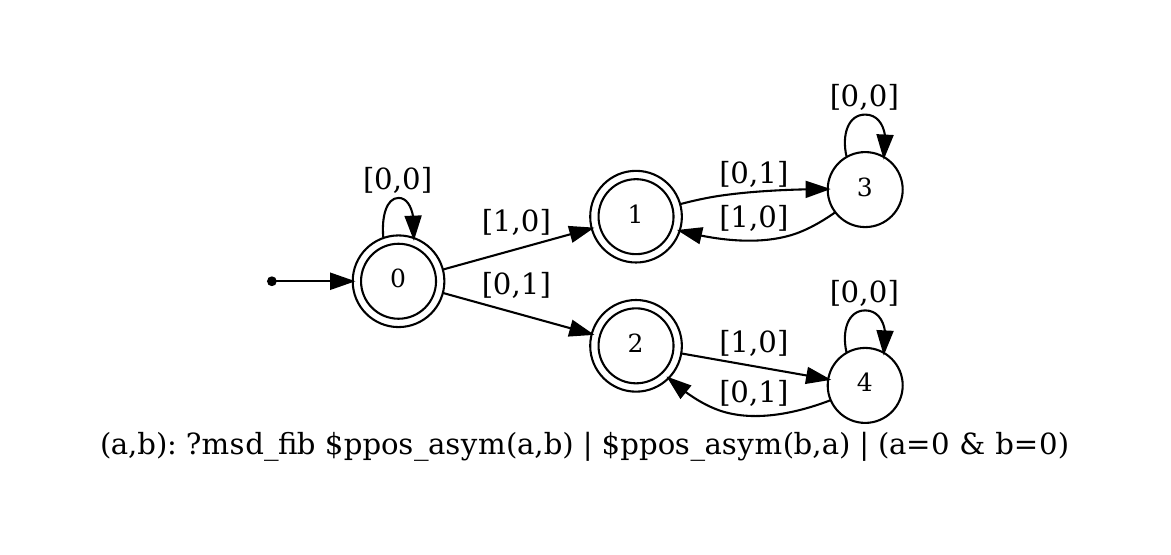}}
  \caption{A DFA recognizing $\mathcal{P}$-positions of $K^1$.}
  \label{fig:automK1}
\end{figure}
The sequence $(a_n)_{n\ge 0}=0, 2, 3, 5, 7, 8, 10,\ldots$ (first component of the $\mathcal{P}$-positions) is referred to the Fibonacci-even numbers as \seqnum{A022342} (as binary expansion of even integers end with zero). The complementary sequence $(b_n)_{n\ge 0}$ appears in the OEIS as \seqnum{A003622}.

\begin{proof}
The proof is a routine procedure. One has to check that the candidate set is stable and absorbing:
\begin{verbatim}
eval stableK1 "?msd_fib Ap,q,r,s (($pposK1(p,q) & $pposK1(r,s) & p>=r & q>=s
      & p+q>1) => ((p=r & q=s) | (p>r & q>s & p+s!=q+r)))":
eval absorbingK1 "?msd_fib Ap,q (~$pposK1(p,q) => Ex,y (x<=p & y<=q
      & $pposK1(x,y) & (p+y=q+x | p=x | q=y)))":
\end{verbatim}
Both commands evaluate to {\tt True}. The conclusion follows from Proposition~\ref{pro:kernel}.
\end{proof}

\subsection{The $K^3$-game}
The case of $K^3$ is more interesting. In Figure~\ref{fig:k2k3}, we have depicted the first $\mathcal{P}$-positions of $K^2$ and $K^3$. Green (resp. red) squares are $\mathcal{P}$-positions of $K^3$ (resp. $K^2$). The common $\mathcal{P}$-positions are colored in blue.
\begin{figure}[h!tb]
  \centering
  \scalebox{.5}{\includegraphics{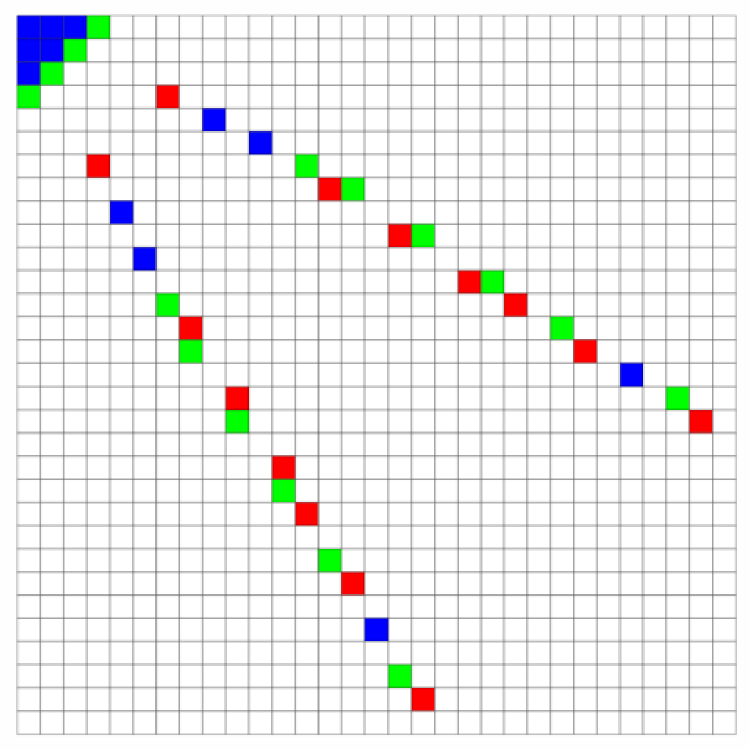}}
  \caption{Comparing $\mathcal{P}$-positions of $K^2$ (red) and $K^3$ (green).}
  \label{fig:k2k3}
\end{figure}

The first non-terminal $\mathcal{P}$-positions $(a_n,b_n)_{n\ge 0}$ are given in Table~\ref{tab:k3}. From now on, we take this convention: in the indexing of the $P$-positions, we do not take terminal positions into account. Thus, $(a_0,b_0)$ represents the first non-terminal $\mathcal{P}$-position. Due to the symmetry of the game, we only consider sequences where $a_n<b_n$. 
\begin{table}[h!t]
\[
  \begin{array}{c|cccccccccccccccccccccccccccccc}
    & 0 & 1 & 2 & 3 & 4 & 5 & 6 & 7 & 8 & 9 & 10 & 11 & 12 & 13 & 14 & 15 & 16 & 17 \\
    \hline
 a_n& 4 & 5 & 6 & 7 & 9 & 11 & 13 & 15 & 16 & 18 & 19 & 21 & 22 & 24 & 25 & 27 &
   29 & 30 \\
 b_n & 8 & 10 & 12 & 14 & 17 & 20 & 23 & 26 & 28 & 31 & 33 & 36 & 38 & 41 & 43 & 46
   & 49 & 51 & \\
  \end{array}
  \]
  \caption{First (non-terminal) $\mathcal{P}$-positions of $K^3$, $(n\ge 0)$.}
  \label{tab:k3}
\end{table}
\begin{theorem}\label{thm:fibaut2}
There exists a Fibonacci-automatic function $g_3:\mathbb{N}\to\{0,1,2\}$ such that, for all $n\ge 0$, the $n$th non-terminal $\mathcal{P}$-position $(a_n,b_n)$ of $K^3$ is given by 
\begin{equation}\label{eq:K3}
  a_n=\lfloor (n+2)\phi \rfloor+g_3(n+1)-1 \quad \text{ and }\quad 
  b_n=\lfloor (n+2)\phi^2 \rfloor+g_3(n+1)+1.
\end{equation}
\end{theorem}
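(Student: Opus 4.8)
The structure of this statement mirrors Theorem~\ref{thm:K2} and Proposition~\ref{pro:fibaut}: we must (i) produce a candidate Fibonacci-automatic function $g_3$, (ii) encode the asserted set of $\mathcal{P}$-positions in {\tt Walnut}, and (iii) verify stability and absorbing properties via Proposition~\ref{pro:kernel}. So the first task is to compute a long prefix of the sequence $(a_n,b_n)_{n\ge 0}$ of non-terminal $\mathcal{P}$-positions of $K^3$ (using the {\tt Mathematica} package mentioned in the introductory remark, which computes $\mathcal{P}$-positions by a greedy/\texttt{mex}-type procedure), then read off the integer sequence $g_3(n+1) := a_n - \lfloor (n+2)\phi\rfloor + 1$ for $n\ge 0$, together with the required boundary value. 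One checks from the prefix that this sequence takes values in $\{0,1,2\}$ and — this is the content of "Fibonacci-automatic" — apply the heuristic of Section~\ref{ssec:heuristic} to guess a $\varphi$-morphism $\mu$ and coding $\rho$ with $g_3 = \rho(\mu^\omega(0))$. Concretely, compute the $t$-types of the prefix for $t=2,3,\dots$, observe that only finitely many distinct types appear (and stabilize), build $\mu$ as the map sending the $n$th letter to the $n$th block of the type-word factored along $\sigma(a)=ab,\ \sigma(b)=a$, promote $\mu$ to a DFA, and define $\rho$ by matching against the original sequence. Store the resulting DFAO as, say, {\tt G3.txt} in {\tt Word Automata Lib}.

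The second task is bookkeeping inside {\tt Walnut}. Using the already-available predicates {\tt \$phin} and {\tt \$phi2n} and the new DFAO {\tt G3}, define a predicate for the non-terminal $\mathcal{P}$-positions following \eqref{eq:K3}: something along the lines of
\begin{verbatim}
def pposK3_asym "?msd_fib En,x,y $phin(n+2,x) & $phi2n(n+2,y)
        & a+1=x+G3[n+1] & b=y+G3[n+1]+1":
def pposK3 "?msd_fib (a+b<=3) | $pposK3_asym(a,b) | $pposK3_asym(b,a)":
\end{verbatim}
(the exact arithmetic needs care because the indexing excludes terminal positions, and because of the $\pm 1$ offsets in \eqref{eq:K3}). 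One then runs the two standard checks — that {\tt pposK3} is stable (no Wythoff move between two distinct non-terminal $\mathcal{P}$-positions, recalling that a diagonal move links $(p,q)$ and $(r,s)$ iff $p-r=q-s$) and absorbing (every non-terminal position not in the set has a Wythoff move into the set, where "into the set" now includes reaching a terminal position) — exactly as in the {\tt stableK2}/{\tt absorbingK2} and {\tt stableK1}/{\tt absorbingK1} snippets, but with the threshold $x+y>3$ for terminal positions. If both return {\tt TRUE}, Proposition~\ref{pro:kernel} gives that the guessed set is the $\mathcal{P}$-position set, proving the theorem; the Fibonacci-automaticity of $g_3$ is then immediate since it is the output of the constructed DFAO.

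**Main obstacle.** The step I expect to require the most care is the very first one: guessing a correct $\varphi$-morphism for $g_3$. Unlike the binary sequence $g$ of Definition~\ref{def:fg}, here $g_3$ takes three values, so the morphism lives on a potentially larger alphabet and one must iterate far enough ($t=3$ or more) and over a long enough prefix to be confident the set of $t$-types has truly stabilized rather than merely appearing to; the heuristic is not guaranteed to terminate, and picking the wrong truncation produces a DFAO that fails the {\tt Walnut} checks. A secondary subtlety is pinning down the precise offsets and index-shift in the {\tt pposK3\_asym} formula so that it exactly reproduces Table~\ref{tab:k3}, and making sure the absorbing check correctly treats moves that land in the terminal region $\{x+y\le 3\}$ (which is what makes $K^3$ genuinely different from $K^2$, as Figure~\ref{fig:k2k3} illustrates). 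Once these are correct, the remaining verification is, as in the earlier sections, entirely routine.
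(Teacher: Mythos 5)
Your proposal is correct and follows essentially the same route as the paper: guess $g_3$ from a computed prefix of the $\mathcal{P}$-positions via the $\varphi$-morphism heuristic of Section~\ref{ssec:heuristic} (the paper obtains a $12$-letter morphism $\mu_3$ and coding $\rho_3$, stored as a DFAO), encode \eqref{eq:K3} as a {\tt Walnut} predicate including the terminal set $\{x+y\le 3\}$, and conclude by the stability and absorption checks together with Proposition~\ref{pro:kernel}. The only differences are cosmetic (the paper writes $b=x+n+3+{\tt K3}[n+1]$ using $\lfloor m\phi^2\rfloor=\lfloor m\phi\rfloor+m$ instead of invoking {\tt \$phi2n}, and places the terminal condition inside the asymmetric predicate), and your identified obstacles match the paper's actual care points.
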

The first few values of $g_3(n)$ are
\[
  1221011221211111111121121121211211112111121111111211112\cdots
\]
Using the heuristic described in Section~\ref{ssec:heuristic}, assuming that the above infinite word is generated by a $\varphi$-morphism, we obtain the morphism 
\begin{align*}
  \mu_3:\;&
0 \mapsto  01,
1 \mapsto  2,
2 \mapsto  34,
3 \mapsto  56,
4 \mapsto  7,
5 \mapsto  78,\\ 
& 6 \mapsto  9,
7 \mapsto  ab,
8 \mapsto  a,
9 \mapsto 56,
a \mapsto ab, 
b \mapsto  7
\end{align*}
and the coding
\[\rho_3:0,3,5,6,8,a,b\mapsto 1,\quad 1,2,7,9\mapsto 2,\quad 4\mapsto 0.\]
We thus define $g_3(n)$ as the $n$th symbol in $\rho_3(\mu_3^\omega(1))$. Note that to produce enough value of $g_3(n)$, we have to compute sufficiently many $\mathcal{P}$-positions of $W^3$.

\begin{proof}
Since we have a morphism~$\mu_3$ and a coding~$\rho_3$, we can store in {\tt Word Automata Lib} a text file {\tt K3.txt} describing a $12$-state DFAO. The transitions are given by $\mu_3$ and the outputs by $\rho_3$. Since we have a candidate set in the statement, \eqref{eq:K3} can be encoded as follows.
\begin{verbatim}
def pposK3_asym "?msd_fib (a+b<=3) | (En,x $phin(n+2,x)
                     & a+1=x+K3[n+1] & b=x+n+3+K3[n+1])":
def pposK3 "?msd_fib $pposK3_asym(a,b) | $pposK3_asym(b,a)":
\end{verbatim}
Thanks to Proposition~\ref{pro:kernel}, we simply have to check stability and absorption of this set.
\begin{verbatim}
eval stableK3 "?msd_fib Ap,q,r,s (($pposK3(p,q) & $pposK3(r,s)
  & p>=r & q>=s & p+q>3) => ((p=r & q=s) | (p>r & q>s & p+s!=q+r)))":
eval absorbingK3 "?msd_fib Ap,q (~$pposK3(p,q) => Ex,y
        (x<=p & y<=q & $pposK3(x,y) & (p+y=q+x | p=x | q=y))) ":
\end{verbatim}
Both commands return {\tt True} (intermediate computations need at most 449 states).
\end{proof}

\subsection{Even further}
For the game~$K^4$, we have applied the same procedure. The first non-terminal $\mathcal{P}$-positions $(a_n,b_n)_{n\ge 0}$ are given in Table~\ref{tab:k4}.
\begin{table}[h!t]
\[
\begin{array}{c|cccccccccccccccccc}
  & 0 & 1 & 2 & 3 & 4 & 5 & 6 & 7 & 8 & 9 & 10 & 11 & 12 & 13 & 14 & 15 & 16 & 17 \\
  \hline
 a_n &5 & 6 & 7 & 8 & 9 & 11 & 13 & 15 & 17 & 19 & 20 & 22 & 23 & 25 & 26 & 28 & 29 & 31 \\
 b_n & 10 & 12 & 14 & 16 & 18 & 21 & 24 & 27 & 30 & 33 & 35 & 38 & 40 & 43 & 45 & 48 & 50 & 53 \\
\end{array}
  \]
  \caption{First (non-terminal) $\mathcal{P}$-positions of $K^4$, $(n\ge 0)$.}
  \label{tab:k4}
\end{table}
To get a long enough prefix, we have computed the first $600$ $\mathcal{P}$-positions $(a_n,b_n)$ with $a_n\le b_n$ (up to $a_n$ close to $1000$). From this, we conjectured that
\begin{equation}\label{eq:K4}
  a_n=\lfloor (n+2)\phi \rfloor+g_4(n+1) \quad \text{ and }\quad 
  b_n=\lfloor (n+2)\phi^2 \rfloor+g_4(n+1)+3,
\end{equation}
where the first few values of $g_4(n)$ are
\[
  12210001112111111010110110111111111111111111\cdots.
\]
Having the first $600$ values of $g_4(n)$, we apply the heuristic of Section~\ref{ssec:heuristic}. Note that here, to be able to discern types correctly, we have to consider $4$-types (there are distinct $4$-types whose restrictions to $3$-types are equal). Since we have to take fourth iterations, this explains why we need a longer prefix. We obtain $18$ different $4$-types and the heuristic gives the following morphisms:
\begin{align*}
  \mu_4:\;&0 \mapsto 01,\ 1 \mapsto 2,\ 2 \mapsto 34,\ 3 \mapsto 56,\ 4 \mapsto 7,\ 5 \mapsto 89,\ 6 \mapsto a,\ 7 \mapsto bc,\ 8 \mapsto dc,\\
   &9 \mapsto d,\ a \mapsto ef,\ b \mapsto ef,\ c \mapsto e,\ d \mapsto 7g,\ e \mapsto eh,\ f \mapsto b,\ g \mapsto d,\ h \mapsto b  
\end{align*}
\[
  \rho_4: 0,3,7,8,9,b,c,d,e,h \mapsto 1,\  1,2,a \mapsto 2,\  4,5,6,f,g \mapsto 0
\]
With $g_4(n)$ being the $n$th symbol of $\rho_4(\mu_4^\omega(1))$, we have a $18$-state DFAO. Formulas \eqref{eq:K4} translates in {\tt Walnut} into
\begin{verbatim}
def pposK4_asym "?msd_fib (a+b<=4) | (En,x $phin(n+2,x) & (a=x+K4[n+1])
                                                        & (b=x+K4[n+1]+n+5))":
\end{verbatim}
and we then follow exactly the same proof to get the result.

\begin{theorem}\label{thm:fibaut3}
  The $\mathcal{P}$-positions $(a,b)$, with $a\le b$, of the game $K^4$ with a set of terminal positions being $\{(x,y)\mid x+y\le 4\}$ and Wythoff's moves are characterized by \eqref{eq:K4}. 
\end{theorem}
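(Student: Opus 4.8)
The plan is to follow exactly the same strategy that was used for the games $K^1$, $K^2$, and $K^3$, namely to reduce the correctness of the candidate characterization \eqref{eq:K4} to two routine verifications in {\tt Walnut}, justified by Proposition~\ref{pro:kernel}. First I would recall that the set of $\mathcal{P}$-positions of an impartial acyclic game is the unique kernel of the game graph; since $K^4$ is a take-away game, its graph is acyclic, so it suffices to show that the candidate set described by \eqref{eq:K4} (symmetrized over the diagonal) satisfies the stability and absorbing properties. The candidate set is implemented by the predicate {\tt pposK4\_asym} given in the excerpt, together with its symmetrization $\mathtt{pposK4}(a,b) := \mathtt{pposK4\_asym}(a,b)\ \vee\ \mathtt{pposK4\_asym}(b,a)$; this is a genuine Fibonacci-automatic predicate because $g_4$ is computed by the $18$-state DFAO built from $\mu_4$ and $\rho_4$ as described, and because $\mathtt{phin}$ (the lower Wythoff sequence $\lfloor n\phi\rfloor$) is Fibonacci-automatic by the construction recalled in Section~\ref{sec:2}.

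The key steps, in order, are: (1) confirm that $\mu_4$ is a $\varphi$-morphism and that $g_4$, defined as the $n$th symbol of $\rho_4(\mu_4^\omega(1))$, agrees with the conjectured prefix---this is built into the heuristic of Section~\ref{ssec:heuristic} and, crucially, the subsequent {\tt Walnut} checks certify correctness a posteriori, so no independent argument about the prefix length is needed; (2) encode stability as
\begin{verbatim}
eval stableK4 "?msd_fib Ap,q,r,s (($pposK4(p,q) & $pposK4(r,s)
  & p>=r & q>=s & p+q>4) => ((p=r & q=s) | (p>r & q>s & p+s!=q+r)))":
\end{verbatim}
which expresses that no Wythoff move (horizontal $p=r$, vertical $q=s$, or diagonal $p-r=q-s$, i.e.\ $p+s=q+r$) connects two distinct non-terminal positions of the candidate set; (3) encode the absorbing property as
\begin{verbatim}
eval absorbingK4 "?msd_fib Ap,q (~$pposK4(p,q) => Ex,y
  (x<=p & y<=q & $pposK4(x,y) & (p+y=q+x | p=x | q=y)))":
\end{verbatim}
which says every position outside the candidate set has a Wythoff option inside it (a terminal position $x+y\le 4$ being vacuously absorbed since it lies in the candidate set). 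Both commands are expected to return {\tt TRUE}; the conclusion of Theorem~\ref{thm:fibaut3} then follows immediately from Proposition~\ref{pro:kernel}, exactly as in the proofs of Theorems~\ref{thm:carK1} and~\ref{thm:fibaut2}.

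The main obstacle is not conceptual but computational: the intermediate automata built by {\tt Walnut} when eliminating the quantifiers in {\tt stableK4} and {\tt absorbingK4} can be substantially larger than those arising for $K^2$ and $K^3$, both because the DFAO for $g_4$ has $18$ states rather than $6$ or $12$ and because the defining predicate {\tt pposK4\_asym} already composes $\mathtt{phin}$ with $g_4$ and several linear constraints. One should therefore expect the determinization/minimization steps to be the bottleneck, and it may be necessary to reformulate the predicates slightly (for instance, introducing auxiliary definitions for $\lfloor(n+2)\phi\rfloor$ and $\lfloor(n+2)\phi^2\rfloor+3$ as in the $K^3$ case, or splitting the absorbing check into the horizontal, vertical, and diagonal sub-cases) to keep the state counts manageable. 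Apart from this purely resource-bound difficulty, the proof is a direct transcription of the template already validated on $K^1$, $K^2$, and $K^3$.
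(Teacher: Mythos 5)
Your proposal is correct and follows essentially the same route as the paper: encode \eqref{eq:K4} via the DFAO for $g_4$ and the predicate {\tt pposK4\_asym}, symmetrize, check stability (with the threshold $p+q>4$) and absorption in {\tt Walnut}, and conclude by Proposition~\ref{pro:kernel}, exactly as for $K^2$ and $K^3$. The paper indeed just says it ``follows exactly the same proof,'' so your transcription, including the remark that the {\tt Walnut} checks certify the heuristic's output a posteriori, matches its intent.
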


Thus, aside from the fact that the morphisms $\mu_\ell$ and $\rho_\ell$ used to describe the set of $\mathcal{P}$-positions are defined over increasingly large alphabets --- which requires computing a sufficiently long prefix --- these preliminary results suggest that such a result could be extended to any arbitrary value of the parameter~$\ell$. In Section~\ref{sec:6}, we give a recursive characterization of the $\mathcal{P}$-positions for all values of the parameter $\ell\ge 1$. From that result, we are able to give with Corollary~\ref{cor:behave} the general behavior of the sequences $(a_n)_{n\ge 0}$ and $(b_n)_{n\ge 0}$.


\section{A recursive characterization for $K^\ell$}\label{sec:6}

It is classical in the combinatorial game literature \cite{Duchene,Fraenkel3,Fraenkel1,Fraenkel2} to have several characterizations of the set of $\mathcal{P}$-positions (algebraic, recursive, combinatorial, morphic, \ldots). In this section, we provide a recursive characterization of the set of $\mathcal{P}$-positions of $K^\ell$, for all $\ell\ge 1$.

First, we consider small values of the parameter ($\ell=1$ and $\ell=2$) and give an automatic proof. It is then not difficult to state a general conjecture that we are able to prove by classical methods (stability and absorption). Indeed, it is not possible to obtain an automated parameterized proof. Here we clearly see how the two approaches complement each other: {\tt Walnut} allows us to conjecture general statements that are then proved in the classical way.

Let us recall our indexing convention: $(a_0,b_0)$ is the first non-terminal $\mathcal{P}$-position of $K^\ell$ (with $a_0<b_0$). It is indeed not necessary to encode the terminal positions, which are set in advance and known to both players.

Let us begin with some classical and well-known results on Wythoff game before moving on to the games discussed in this article. We recall the {\em recursive definition} of the $\mathcal{P}$-positions of Wythoff game~$K^0$, with our convention:
\[
  (a_0,b_0)=(1,2) \quad\text{ and }\quad \forall n\ge 1, \left\{
    \begin{array}{l}
a_n =\mex (\{ a_i,b_i \mid i<n \}\cup\{0\}),\\
b_n =a_n+n+1,\\
    \end{array}
  \right.
\]
where $\mex$ stands for {\em minimum excluded value}, i.e. the least non-negative integer not in the set. In other words, $\mex S=\min(\mathbb{N}\setminus S)$. So $a_1=\mex\{0,1,2\}=3$ and $b_1=a_1+2=5$, then $a_2=\mex\{0,1,2,3,5\}=4$ and $b_2=a_2+3=7$ and so on.


We now turn to the games discussed in this paper. For $K^1$ (see Table~\ref{tab:K1}), we have the following result.
\begin{proposition}
  Define recursively the sequence $(a_n,b_n)_{n\ge 0}$ by
\[
  (a_0,b_0)=(2,4) \quad\text{ and }\quad \forall n\ge 1, \left\{
    \begin{array}{l}
a_n =\mex (\{ a_i,b_i \mid i<n \}\cup\{0,1\}),\\
b_n =a_n+n+2.\\
    \end{array}
  \right.
\]
The set non-terminal $\mathcal{P}$-positions of $K^1$ is $\{(a_n,b_n)\mid n\ge 0\} \cup \{(b_n,a_n)\mid n\ge 0\} $.
\end{proposition}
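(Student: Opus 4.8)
The plan is to deduce this proposition from Theorem~\ref{thm:carK1}, which already gives all non-terminal $\mathcal{P}$-positions of $K^1$ explicitly. The recursion in the statement determines the sequence $(a_n,b_n)_{n\ge 0}$ uniquely (each $a_n$ is the $\mex$ of quantities computed from the previous terms, and then $b_n=a_n+n+2$), so it suffices to exhibit \emph{one} sequence satisfying it and to recognize that sequence, via Theorem~\ref{thm:carK1}, as the increasing enumeration of the non-terminal $\mathcal{P}$-positions. In keeping with the rest of this section I would carry this out with {\tt Walnut}, for the fixed value $\ell=1$.

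First I would produce, from the automaton of Figure~\ref{fig:automK1}, Fibonacci-automatic descriptions of the two functions $n\mapsto a_n$ and $n\mapsto b_n$. Writing $P(a,b)$ for ``$a<b$ and $(a,b)$ is a non-terminal $\mathcal{P}$-position of $K^1$'', Theorem~\ref{thm:carK1} shows that the first coordinates occurring in $P$ are pairwise distinct (each admissible $a$ determines the unique $b$ with $\rep_F(b)=\rep_F(a)1$), likewise for the second coordinates, and that the pairing is monotone. Hence $\{a:\exists b\ P(a,b)\}$ is Fibonacci-automatic and its increasing enumeration $n\mapsto a_n$ is synchronized, as is $n\mapsto b_n$ --- this is the standard fact that the ordered enumeration function of an automatic set is again automatic in an addable system \cite{ShallitBook}. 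Denoting the resulting {\tt Walnut} predicates by {\tt An(n,a)} and {\tt Bn(n,b)}, I would then verify in {\tt Walnut} the three first-order statements that express precisely the recursion: (i) {\tt An(0,2)} and {\tt Bn(0,4)}; (ii) $b_n=a_n+n+2$ for every $n$; (iii) for every $n\ge 1$, $a_n\notin\{0,1\}$, $a_n\neq a_i$ and $a_n\neq b_i$ for all $i<n$, and every $k$ with $2\le k<a_n$ equals $a_i$ or $b_i$ for some $i<n$ --- i.e.\ $a_n=\mex(\{a_i,b_i\mid i<n\}\cup\{0,1\})$. Each clause ``for some $i<n$'' is an existential quantifier bounded by $n$ with {\tt An}/{\tt Bn} inside, so all three statements are expressible over $\langle\mathbb{N},+\rangle$ with Fibonacci representations and {\tt Walnut} evaluates them. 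Once they return {\tt TRUE}, uniqueness of the solution of the recursion forces the sequence of the statement to coincide with $(a_n,b_n)_{n\ge0}$, and Theorem~\ref{thm:carK1} then says exactly that $\{(a_n,b_n)\mid n\ge0\}\cup\{(b_n,a_n)\mid n\ge0\}$ is the set of non-terminal $\mathcal{P}$-positions of $K^1$.

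The only genuinely delicate point is bookkeeping rather than mathematics: one must respect the convention under which terminal positions are \emph{not} listed, so that the base case is $(a_0,b_0)=(2,4)$ and not $(0,1)$; accordingly the enumeration in the first step must range over $P$ only (and indeed the description in Theorem~\ref{thm:carK1} does not produce $(0,1)$, so nothing extra is needed). I expect the main obstacle to be merely confirming, and phrasing for {\tt Walnut}, the nested quantifier in clause~(iii) and the monotonicity needed to make {\tt An}/{\tt Bn} legitimate synchronized functions. If one prefers to avoid {\tt Walnut} here, a purely classical argument is available and equally short: reading off the shape of $\rep_F$, the first (resp.\ second) coordinates in Theorem~\ref{thm:carK1} are exactly the integers $\ge 2$ whose Fibonacci representation ends in $0$ (resp.\ in $01$), hence these two sequences partition $\mathbb{N}_{\ge 2}$; deleting the trailing $0$ of $\rep_F(a_n)$ yields $\rep_F(n+1)$, whence $b_n-a_n=n+2$; and together with $(a_0,b_0)=(2,4)$ these two facts give the $\mex$ recursion verbatim, exactly as in the classical treatment of Wythoff's game.
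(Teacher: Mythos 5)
Your proposal is correct and follows essentially the same route as the paper: both deduce the statement from Theorem~\ref{thm:carK1} by synchronizing the enumerations $n\mapsto a_n$, $n\mapsto b_n$ (the paper does this explicitly via the shift relation, since $\rep_F(a_n)=\rep_F(n+1)0$ and $\rep_F(b_n)=\rep_F(n+1)01$), noting that the two sequences partition $\mathbb{N}_{\ge 2}$ because every Fibonacci representation ends in $0$ or $01$, and checking $b_n=a_n+n+2$ in {\tt Walnut}, which together force the $\mex$ recursion --- exactly your ``classical'' closing paragraph, while your extra direct {\tt Walnut} verification of the $\mex$ clause is a harmless variant of the same plan. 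The only caveat is your appeal to synchronization of the ordered enumeration of an automatic set as a blanket ``standard fact'': this requires linear growth of the set (it fails, e.g., for sparse sets), though it does hold here and is made explicit in the paper via the shift predicate.
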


\begin{proof}
It is clear from Theorem~\ref{thm:carK1} that the sequences $(a_n)_{n\ge 0}=2,3,5,7,\ldots$ and $(b_n)_{n\ge 0}=4,6,9,12,\ldots$ are a partition of $\mathbb{N}_{\ge 2}$. Indeed, any integer has a unique representation in the Fibonacci system and each representation ends either with $0$ or $01$. We first define two predicates respectively for the sets of pairs $(a_n,n)$ and $(b_n,n)$. We simply make use of the shift (observe that, in Table~\ref{tab:K1}, suppressing the last zero from every representation in the first column gives the set of all valid Fibonacci representations). We use $n+1$ instead of $n$ because of our indexing convention.  
\begin{verbatim}
def pposK1_an "?msd_fib $shift(n+1,a)":
\end{verbatim}
As an example, we have
\begin{verbatim}
eval test "?msd_fib $pposK1_an(2,0) & $pposK1_an(3,1) &
                    $pposK1_an(5,2) & $pposK1_an(7,3)":
\end{verbatim}
and similarly, we shift twice and add $1$ to get the last column in Table~\ref{tab:K1},
\begin{verbatim}
def pposK1_bn "?msd_fib Et,c $shift(n+1,t) & $shift(t,c) & b=c+1":
\end{verbatim}
with
\begin{verbatim}
eval test "?msd_fib $pposK1_bn(4,0) & $pposK1_bn(6,1) &
                    $pposK1_bn(9,2) & $pposK1_bn(12,3)":
\end{verbatim}
Since we have a partition, we only have to check that, for all $n\ge 0$, $b_n=a_n+n+2$, which is readily verified with
\begin{verbatim}
eval test "?msd_fib An Ea,b $pposK1_an(a,n) & $pposK1_bn(b,n) & b=a+n+2":
\end{verbatim}
\end{proof}

For $K^2$, see Figure~\ref{fig:k2k3}, we get the following recursive characterization.
\begin{proposition} 
Define recursively the sequence $(a_n,b_n)_{n\ge 0}$ by
\[
  (a_0,b_0)=(3,6) \quad\text{ and }\quad \forall n\ge 1, \left\{
    \begin{array}{l}
a_n =\mex (\{ a_i,b_i \mid i<n \}\cup\{0,1,2\}),\\
b_n =a_n+n+3.\\
    \end{array}
  \right.
\]
The set non-terminal $\mathcal{P}$-positions of $K^2$ is $\{(a_n,b_n)\mid n\ge 0\} \cup \{(b_n,a_n)\mid n\ge 0\} $.
\end{proposition}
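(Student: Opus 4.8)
The plan is to reuse the algebraic description of the $\mathcal{P}$-positions of $K^2$ provided by Theorem~\ref{thm:K2}, exactly as the $K^1$ proposition above reused Theorem~\ref{thm:carK1}. First I would note that, in the family $\bigl(\lfloor m\phi\rfloor+g(m)-1,\ \lfloor m\phi^2\rfloor+g(m)\bigr)_{m\ge 0}$ occurring in \eqref{eq:Ppos}, the indices $m=0$ and $m=1$ yield the terminal positions $(0,1)$ and $(0,2)$, whereas every $m\ge 2$ yields a position with $a+b>2$, the first being $(3,6)$. Hence, with our indexing convention, the non-terminal $\mathcal{P}$-positions $(a,b)$ with $a<b$ are exactly the pairs $(A_n,B_n)$, $n\ge 0$, where $A_n=\lfloor (n+2)\phi\rfloor+g(n+2)-1$ and $B_n=\lfloor (n+2)\phi^2\rfloor+g(n+2)$. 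Since $g$ is computed by the DFAO {\tt G} of Proposition~\ref{pro:fibaut} and $\lfloor m\phi^2\rfloor=\lfloor m\phi\rfloor+m$, I would define both sequences in {\tt Walnut} just as in Section~\ref{sec:2}:
\begin{verbatim}
def pposK2_an "?msd_fib Ex $phin(n+2,x) & a+1=x+G[n+2]":
def pposK2_bn "?msd_fib Ex $phin(n+2,x) & b=x+n+2+G[n+2]":
\end{verbatim}
I would also observe at once that $B_n-A_n=\lfloor(n+2)\phi^2\rfloor-\lfloor(n+2)\phi\rfloor+1=(n+2)+1=n+3$, so the relation $b_n=a_n+n+3$ of the recursion holds automatically.

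Next I would certify, with a handful of {\tt Walnut} queries on {\tt pposK2\_an} and {\tt pposK2\_bn}, that each predicate is functional in $n$, that $A_0=3$, that $(A_n)_{n\ge 0}$ and $(B_n)_{n\ge 0}$ are strictly increasing, and --- the point that really requires the prover --- that $\{A_n\mid n\ge 0\}$ and $\{B_n\mid n\ge 0\}$ are disjoint and together cover $\mathbb{N}_{\ge 3}$:
\begin{verbatim}
eval coverK2 "?msd_fib Av (v>=3) => ((En $pposK2_an(v,n)) | (En $pposK2_bn(v,n)))":
eval disjK2 "?msd_fib ~(Ev,m,n $pposK2_an(v,m) & $pposK2_bn(v,n))":
\end{verbatim}
All of these are first-order statements over the Fibonacci numeration system, hence decidable, and I expect them to evaluate to {\tt TRUE}; together they say that $(A_n)$ and $(B_n)$ partition $\mathbb{N}_{\ge 3}$ into two increasing sequences with $A_0=3$ and $B_n=A_n+n+3$.

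The last step is the routine, game-free argument that this configuration forces $(A_n,B_n)$ to obey the displayed recursion. Indeed $(A_0,B_0)=(3,6)$ is the prescribed initialisation; and for $n\ge 1$ we have $A_n\ge 3$, $A_n\notin\{A_i\mid i<n\}$ by monotonicity, and $A_n\notin\{B_i\mid i<n\}$ by disjointness, so $A_n\notin\{A_i,B_i\mid i<n\}\cup\{0,1,2\}$; conversely any $v$ with $3\le v<A_n$ equals some $A_j$ or $B_j$ by the covering property, and $j<n$ in both cases --- when $v=B_j$ because $B_j<A_n<B_n$ and $(B_i)$ is increasing --- so $v\in\{A_i,B_i\mid i<n\}$. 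Hence $A_n=\mex\bigl(\{A_i,B_i\mid i<n\}\cup\{0,1,2\}\bigr)$, and $B_n=A_n+n+3$ was already checked. As the recursion has a unique solution, $(a_n,b_n)=(A_n,B_n)$ for every $n$, and by Theorem~\ref{thm:K2} the set $\{(a_n,b_n)\mid n\ge 0\}\cup\{(b_n,a_n)\mid n\ge 0\}$ is precisely the set of non-terminal $\mathcal{P}$-positions of $K^2$.

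I expect the main obstacle to be the partition claim over $\mathbb{N}_{\ge 3}$: in the $K^1$ case this was immediate because every Fibonacci representation ends with $0$ or $01$, whereas here it is not transparent from the $g$-based formula, so one genuinely has to let {\tt Walnut} decide it (alternatively, it can be extracted from the classical theory of complementary Beatty-type sequences). Everything else is bookkeeping around the uniqueness of the $\mex$-recursion.
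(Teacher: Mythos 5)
Your proposal is correct and follows essentially the same route as the paper: define the synchronized predicates for $(a_n,n)$ and $(b_n,n)$ via the automaton for $g$, let {\tt Walnut} certify that the two sequences partition $\mathbb{N}_{\ge 3}$, establish $b_n=a_n+n+3$, and conclude the $\mex$ recursion by the ``gaps are filled by the $a$'s'' argument. The only differences are cosmetic: you derive $b_n-a_n=n+3$ by hand from $\lfloor m\phi^2\rfloor=\lfloor m\phi\rfloor+m$ where the paper checks it with an extra {\tt Walnut} query, and you spell out the uniqueness-of-the-$\mex$-recursion step that the paper leaves as a brief remark.
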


\begin{proof}
The sequence $(a_n)_{n\ge 0}$ and $(b_n)_{n\ge 0}$ are Fibonacci-synchronized (see \cite[Sec.~10.11]{ShallitBook})
\begin{verbatim}
def pposK2_an "?msd_fib Ex ($phin(n+2,x) & a+1=x+G[n+2])":
\end{verbatim}
So this predicates defines the pairs $(a_n,n)$ with $n\ge 0$. As an example, the following command returns {\tt True}:
\begin{verbatim}
eval test "?msd_fib $pposK2_an(3,0) & $pposK2_an(4,1)
                  & $pposK2_an(5,2) & $pposK2_an(7,3)":
\end{verbatim}
Similarly, the following predicates defines the pairs $(b_n,n)$ with $n\ge 0$.
\begin{verbatim}
def pposK2_bn "?msd_fib Ex $phin(n+2,x) & b=x+n+2+G[n+2]":
eval test "?msd_fib $pposK2_bn(6,0) & $pposK2_bn(8,1)
                  & $pposK2_bn(10,2) & $pposK2_bn(13,3)":
\end{verbatim}
The difference of indices with Section~\ref{ssec:autom_proof} comes from our indexing convention.
The next two commands show that the two sequences $(a_n)_{n\ge 0}$ and $(b_n)_{n\ge 0}$ make a partition of $\mathbb{N}_{\ge 3}$:
\begin{verbatim}
eval covering "?msd_fib Ax (x>=3 => En ($pposK2_an(x,n) | $pposK2_bn(x,n)))":
eval notboth "?msd_fib ~(Em,n,x ($pposK2_an(x,m) & $pposK2_bn(x,n)))":
\end{verbatim}
The first one means that every $x\ge 3$ belongs to at least of the two sequences. The second one means that there is no $x\ge 3$ that is simultaneously of the form $a_m$ and $b_n$. Showing the partition could have been defined with the predicate \verb!pposK2_asym!. Synchronization is only needed to show that $b_n-a_n-n=3$. So, finally, we can verify that if $a_n=x$, then $b_n=x+n+3$.
\begin{verbatim}
eval alinkb "?msd_fib  Ax (x>=3 => (En ($pposK2_an(x,n)
                              => $pposK2_bn(x+n+3,n))))":
\end{verbatim}
This could also be immediately derived from the expression \eqref{eq:Ppos}. The fact that $b_n=a_n+n+3$ and that the sequences $(a_n)_{n\ge 0}$ and $(b_n)_{n\ge 0}$ make a partition of $\mathbb{N}_{\ge 3}$ imply the $\mex$ definition of the $a_n$'s (roughly speaking, gaps are filled with $a$'s).
\end{proof}

Now that we have inspected the cases $K^1$ and $K^2$, the next statement seems natural.

\begin{theorem}\label{thm:rec_car}
  Let $\ell\ge 1$ and define recursively $(a_n,b_n)_{n\ge 0}$ by
\begin{equation}\label{eq:mexdef}
  (a_0,b_0)=(\ell+1,2\ell+2) \quad\text{ and }\quad \forall n\ge 1, \left\{
    \begin{array}{l}
a_n =\mex (\{ a_i,b_i \mid i<n \}\cup\{0,1,\ldots,\ell\}),\\
b_n =a_n+n+\ell+1.\\
    \end{array}
  \right. 
\end{equation} 
The set of non-terminal $\mathcal{P}$-positions of $K^\ell$ is $\{(a_n,b_n)\mid n\ge 0\} \cup \{(b_n,a_n)\mid n\ge 0\} $.
\end{theorem}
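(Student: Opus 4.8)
The plan is to instantiate the kernel criterion of Proposition~\ref{pro:kernel}. Write $S=\{(a_n,b_n)\mid n\ge 0\}\cup\{(b_n,a_n)\mid n\ge 0\}$ for the candidate set defined by \eqref{eq:mexdef}. Since the game graph of $K^\ell$ is acyclic and its sinks are exactly the terminal positions (the game stops there), the kernel is unique and necessarily contains every terminal position; hence it suffices to show that $T\cup S$ is stable and absorbing, where $T=\{(x,y)\mid x+y\le\ell\}$, and the non-terminal $\mathcal{P}$-positions will then be precisely $S$. For $\ell=1$ and $\ell=2$ this has just been carried out automatically in the two preceding propositions (via Fibonacci synchronization and a partition check in {\tt Walnut}), which also corroborates the conjecture; the role of the argument below is that it works \emph{uniformly in $\ell$}, a statement that cannot be phrased as a single {\tt Walnut} query since $\ell$ is a parameter.

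The inductive core is a \textbf{partition lemma}: from \eqref{eq:mexdef} alone, $(a_n)_{n\ge 0}$ and $(b_n)_{n\ge 0}$ are strictly increasing, $a_n<b_n$, and $\{a_n\mid n\ge 0\}\sqcup\{b_n\mid n\ge 0\}=\mathbb{N}_{\ge\ell+1}$. Monotonicity of $(a_n)$ follows because $a_n=\mex(A_n)$ with $A_n=\{0,\dots,\ell\}\cup\{a_i,b_i\mid i<n\}$ an increasing family, so $\mex$ is non-decreasing, while $a_{n-1}\in A_n$ forces $a_n>a_{n-1}$; then $b_n=a_n+n+\ell+1$ is strictly increasing and exceeds $a_n$. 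No $b_i$ equals any $a_j$: if $i<j$ then $b_i\in A_j$ but $a_j\notin A_j$, and if $i\ge j$ then $a_j\le a_i<b_i$. For coverage, given $v\ge\ell+1$ choose $n$ with $a_n>v$ (the $a_n$ are unbounded as strictly increasing integers); then $v<\mex(A_n)$ gives $v\in A_n$, and since $v\notin\{0,\dots,\ell\}$, either $v$ is itself some $a_i$, or $v=b_i$ for some $i<n$.

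Next, \textbf{stability}. Any horizontal or vertical move between two points of $S$ would equate two first or two second coordinates taken from $\{a_n\}\cup\{b_n\}$, impossible by the partition unless the points coincide; a diagonal move would equate the two coordinate differences, but the difference attached to $(a_m,b_m)$ (or its mirror) is $\pm(m+\ell+1)$, and these are pairwise distinct of fixed sign, so again the points coincide. No move leaves $S$ for $T$ either: from $(a_n,b_n)$, a horizontal option still has second coordinate $b_n\ge 2\ell+2>\ell$, a vertical option has coordinate sum $\ge a_n\ge\ell+1$, and a diagonal option has coordinate sum $\ge b_n-a_n=n+\ell+1>\ell$. As terminal positions are sinks, $T\cup S$ is stable.

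Finally, \textbf{absorption}, which is where the bookkeeping sits. Let $(x,y)\notin T\cup S$; by the symmetry of both $T$ and $S$ assume $x\le y$, and note $x+y>\ell$. If $x\le\ell$, remove $x+y-\ell$ tokens (a number in $\{1,\dots,y\}$) from the second pile to land on $(x,\ell-x)\in T$. Otherwise $x\ge\ell+1$, so by the partition $x=a_m$ or $x=b_m$ for some $m$. If $x=b_m$, then $y\ge b_m>a_m$ and removing $y-a_m$ from the second pile reaches $(b_m,a_m)\in S$. If $x=a_m$ and $y>b_m$, removing $y-b_m$ reaches $(a_m,b_m)\in S$. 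The remaining case is $x=a_m$ with $a_m\le y<b_m$; put $d=y-a_m\in\{0,\dots,m+\ell\}$. If $d\ge\ell+1$, set $j=d-\ell-1<m$; the diagonal move of length $a_m-a_j>0$ (at most $a_m$) reaches $(a_j,a_j+d)=(a_j,b_j)\in S$. If $d\le\ell$, the diagonal move of length $\lceil(2a_m+d-\ell)/2\rceil$ — which lies between $1$ and $a_m$ exactly because $a_m\ge\ell+1$ — brings the coordinate sum down to at most $\ell$, i.e.\ into $T$. Proposition~\ref{pro:kernel} then yields the claim. The main obstacle is not a single deep step but the separation of regimes — the ``Wythoff part'' ruled by the partition and the gap $n+\ell+1$ versus the ``terminal part'' that absorbs anything with a small coordinate — together with the genuine fact that the clean automated proof for $\ell=1,2$ does not lift to the parameterized statement, so the uniform result has to be established by hand.
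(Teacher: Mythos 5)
Your proof is correct and follows essentially the same route as the paper: the kernel criterion of Proposition~\ref{pro:kernel}, a preliminary lemma giving monotonicity, the partition of $\mathbb{N}_{>\ell}$ and the distinct differences $b_n-a_n=n+\ell+1$ for the $\mex$-defined sequences, and then stability and absorption by case analysis. Your absorption argument is organized by the size of $x$ and the position of $y$ relative to $b_m$ rather than by comparing $y-x$ with $\ell$ as the paper does, but the moves exhibited are the same in substance, so this is only a cosmetic difference.
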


We will repeatedly use the following properties all along the proofs of the next three results.

\begin{lemma}\label{lem:anbn}
  For the sequences \eqref{eq:mexdef} defined above, we have the following properties.
  \begin{enumerate}
  \item The sequences $(a_n)_{n\ge 0}$ and $(b_n)_{n\ge 0}$ are strictly increasing.
  \item They make a partition of $\mathbb{N}_{> \ell}$. 
  \item For each $j>\ell$, there exist a unique $n$ such that $j=b_n-a_n$.
  \item For all $n$, $a_{n+1}-a_n\in\{1,2\}$ and $b_{n+1}-b_n\in\{2,3\}$. In particular, $b_{n+1}-b_n=a_{n+1}-a_n+1$.
  \end{enumerate}
\end{lemma}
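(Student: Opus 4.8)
The plan is to prove Lemma~\ref{lem:anbn} directly from the recursive definition \eqref{eq:mexdef}, by induction on $n$, establishing the four properties simultaneously since they reinforce one another. Throughout, write $A_n=\{a_i,b_i\mid i<n\}\cup\{0,1,\ldots,\ell\}$ for the set whose $\mex$ defines $a_n$.

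\medskip

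First I would prove the monotonicity statements in item~(1) together with item~(4). For $(a_n)$: by definition $a_n=\mex(A_n)$ and $a_{n+1}=\mex(A_{n+1})$ with $A_n\subsetneq A_{n+1}$ (the latter contains in addition $a_n$ and $b_n$, both $\ge a_n>\ell\ge$ everything already forced), so $a_{n+1}>a_n$; more precisely $a_{n+1}$ is the least element of $\mathbb{N}\setminus A_n$ that is $>a_n$, and since $a_n+1$ fails to lie in $A_n$ unless it equals some $b_i$ with $i<n$, one gets $a_{n+1}-a_n\le 2$ by checking that $a_n$, $a_n+1$, $a_n+2$ cannot all be blocked (they would have to be hit by $a_n$ itself plus two distinct $b_i$'s packed too tightly, which contradicts $b_{i+1}-b_i\ge 2$, available by induction). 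For $(b_n)$ the relation $b_n=a_n+n+\ell+1$ gives $b_{n+1}-b_n=(a_{n+1}-a_n)+1\in\{2,3\}$ immediately, which also forces $(b_n)$ strictly increasing and yields the ``in particular'' clause of item~(4) for free.

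\medskip

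Next I would establish the partition claim, item~(2). That the two sequences are \emph{disjoint} follows because $a_n\notin\{a_i,b_i\mid i<n\}$ by the $\mex$, and $b_m=a_m+m+\ell+1$ being an $a$-value or an earlier/later $b$-value can be ruled out: if $b_m=a_n$ for some $n$ then $a_n>\ell$ would have been excluded at step $n$ once $m<n$ (and $m<n$ holds since $b_m>a_m\ge$ and $a_n$ is the $\mex$, so $a_n\le b_m$ forces... — here one argues $a_n$ is never equal to any already-seen $b_m$ by construction, and never to a \emph{future} $b_m$ because $b_m\ge b_0>a_0$ grows past it appropriately); disjointness of $(b_n)$ from itself is just strict monotonicity. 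For \emph{covering}: every $j>\ell$ not equal to any $b_i$ with $i$ small enough is, by the $\mex$ mechanism, eventually produced as some $a_n$ — formally, suppose $j>\ell$ is minimal not covered; all of $0,\ldots,\ell$ and every smaller covered value lie in some $A_n$, so for $n$ large $a_n$ would be forced to equal $j$ unless $j$ is some $b_i$, contradiction. Item~(3) is then a reformulation: $b_n-a_n=n+\ell+1$ by \eqref{eq:mexdef}, so $j\mapsto j-\ell-1$ is the unique $n$ with $b_n-a_n=j$, valid for every $j>\ell$.

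\medskip

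The main obstacle I anticipate is the bookkeeping in item~(4)'s upper bounds, $a_{n+1}-a_n\le 2$: this is the one place where the argument genuinely uses the interplay between the $\mex$ rule and the gap structure of the $b$'s, rather than being a formal consequence. The clean way is to phrase it as: among the integers $a_n,a_n+1,a_n+2$, the value $a_n$ is the only one lying in $A_{n+1}$ that comes from an $a$-index (namely $a_n$ itself), so the other two can be blocked only by $b$-values $b_i$ with $i\le n$; but $b_0=2\ell+2\ge a_n$ only for small $n$, and for the relevant range at most one $b_i$ can fall in the two-element window $\{a_n+1,a_n+2\}$ because consecutive $b$'s differ by at least $2$ (induction hypothesis), so at least one of $a_n+1,a_n+2$ is free, giving $a_{n+1}\le a_n+2$. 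One must also handle the base cases $n=0$ explicitly (e.g.\ $a_0=\ell+1$, $b_0=2\ell+2$, $a_1=\mex(\{0,\ldots,\ell,\ell+1,2\ell+2\})=\ell+2$ when $\ell\ge 1$, so $a_1-a_0=1$), which is a short direct check.
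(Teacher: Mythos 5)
Your argument is correct and follows essentially the same route as the paper's proof: everything is derived directly from the $\mex$ definition (strict monotonicity, a minimal-uncovered-element argument for the partition, the identity $b_n-a_n=n+\ell+1$ for item (3), and the observation that two $b$-values cannot be consecutive integers to force $a_{n+1}-a_n\le 2$), and you actually spell out this last step in more detail than the paper does. The only blemish is the garbled parenthetical in your disjointness paragraph, but the intended and correct argument --- $a_n\neq b_m$ because $b_m\in A_n$ when $m<n$, while $b_m\ge b_n>a_n$ when $m\ge n$ --- is clear and immediate.
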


\begin{proof}
  These properties obviously follow from the $\mex$ definition, we provide a proof for the sake of completeness.

  From the $\mex$ definition, each $a_n$ is a new integer strictly larger than
  $a_{n-1}$, so $(a_n)_{n\ge 0}$ is strictly increasing. Moreover
  $b_n=a_n+\ell+1+n>b_{n-1}$, hence $(b_n)_{n\ge 0}$ is strictly increasing and
  the two sequences are disjoint.

  Suppose some $j>\ell$ never appears in the two sequences and take the smallest
  such $j$. Since $(a_n)_{n\ge 0}$ is increasing, let $n$ be minimal such that
  $a_n> j$. By minimality of $j$, all integers in $\{\ell+1,\ldots,j-1\}$ 
  already appear among $a_i,b_i$ for $i<n$, so $j$ is the smallest
  unused integer at step $n$, forcing $a_n=j$, a contradiction. Hence
  $(a_n)_{n\ge 0}$ and $(b_n)_{n\ge 0}$ form a partition of $\mathbb{N}_{> \ell}$.

  The identity
  $b_n-a_n=n+\ell+1$ shows that for each $j>\ell$ there is a unique
  $n$ with $j=b_n-a_n$. Finally, since $a_{n+1}$ is the $\mex$ of the
  previous values, it cannot exceed $a_n+2$, for otherwise either
  $a_n+1$ or $a_n+2$ would be unused and smaller; thus
  $a_{n+1}-a_n\in\{1,2\}$. From $b_n=a_n+\ell+1+n$ we obtain
  $b_{n+1}-b_n=(a_{n+1}-a_n)+1\in\{2,3\}$.

\end{proof}

\begin{proof}[Proof of Theorem~\ref{thm:rec_car}]
  Let $\mathcal{P}$ be the union of set  $\{(a_n,b_n)\mid n\ge 0\} \cup \{(b_n,a_n)\mid n\ge 0\} $ and the set of terminal positions $\{(x,y)\mid x+y\le\ell\}$. We use Proposition~\ref{pro:kernel} and prove the stability and absorption of $\mathcal{P}$.

  {\em Stability}. Let $(x,y)$ be in $\mathcal{P}$. We may assume that $x+y>\ell$ because there is no allowed move from a terminal position. So let $(x,y)$ be of the form $(a_n,b_n)$ for some $n\ge 0$. We have to show that every move leads to a position not in $\mathcal{P}$. The case $(b_n,a_n)$ is symmetric and not discussed below.
  \begin{itemize}
  \item A vertical move will lead to a position $(a_n,y')$ with $0\le y'<b_n$. Since $a_n\ge \ell+1$, for all $n\ge 0$, $(a_n,y')$ is not a terminal position: $a_n+y'>\ell$. To get a contradiction, assume that $(a_n,y')$ is in $\mathcal{P}$. Then $(a_n,y')$ is either of the form $(a_m,b_m)$ or $(b_m,a_m)$.

    \begin{itemize}
    \item If $(a_n,y')=(a_m,b_m)$, then $a_n=a_m$ and since the sequence $(a_n)_{n\ge 0}$ is increasing, we get $n=m$ and thus $y'=b_n$ contradicting $y'<b_n$.
    \item If $(a_n,y')=(b_m,a_m)$ then $a_n=b_m$ but this contradicts the fact that
\[
  \{a_n\mid n\ge 0\}\cap \{b_n\mid n\ge 0\}=\emptyset.
\]
\end{itemize}

\item The proof is the same for a horizontal move leading to $(x',b_n)$ with $0\le x'<a_n$.

\item Now consider a diagonal move leading to a position $(a_n-t,b_n-t)$ for some $t\in\{1,\ldots,a_n\}$. First, such a position is not a terminal one because
\[
  a_n-t+b_n-t=a_n+b_n-2t\ge a_n+b_n-2a_n = n+\ell+1 >\ell.
\]

Assume that $(a_n-t,b_n-t)$ is of the form $(a_m,b_m)$, then
\[
  b_m-a_m = b_n-t -(a_n-t)= n+\ell+1
\]
and thus, $m=n$ implying $t=0$, a contradiction.

Finally, assume that $(a_n-t,b_n-t)$ is of the form $(b_m,a_m)$, then
\[
  b_m-a_m = a_n-t -(b_n-t) = - (n+\ell+1) <0,
\]
which is again a contradiction because $b_m-a_m=m+\ell+1>0$.
\end{itemize}

{\em Absorption}. Let $(x,y)$ a position not in $\mathcal{P}$. In particular, $x+y>\ell$. We have to prove that there is a move from $(x,y)$ into $\mathcal{P}$. By symmetry, we may assume that $x\le y$.
\begin{itemize}
\item Case 1: $y-x\le\ell$. We will prove that there is a diagonal move leading to a terminal position. Consider the quantity
  \[
    t=\left\lceil \frac{x+y-\ell}{2}\right\rceil. 
  \]
  Since $x+y>\ell$, $t\ge 1$. Note that
  \[
    x+y-\ell=2x+ (y-x) -\ell\le 2x.
  \]
  Hence $t\le x$ and $(x,y)\to(x-t,y-t)$ is a legal move. To conclude with this part, $(x-t,y-t)$ is a terminal move:
  \begin{align*}
    x-t+y-t\le x+y - (x+y-\ell) = \ell.
  \end{align*}
  
\item Case 2: $y-x>\ell$. There exists a unique $n\ge 0$ such that $y-x=n+\ell+1=b_n-a_n$.
  \begin{itemize}
  \item If $x\ge a_n$, consider $t=x-a_n$. If $t>0$, we can play the diagonal move $(x,y)\to(x-t,y-t)=(a_n,y-x+a_n)=(a_n,b_n)$. If $t=0$, then $x=a_n$ and thus $y=b_n$, but $(x,y)$ is not in $\mathcal{P}$. So this case does not occur.
    
  \item If $x \le \ell<a_n$, then a vertical move $(x,y)\to (x,0)$ leads to a terminal position.
  \item If $\ell< x<a_n$, then $x$ is of the form $a_m$ or $b_m$ for some $m$ (because we have a partition of $\mathbb{N}_{>\ell}$). If $x=a_m$, since the sequence $(a_n)_{n\ge 0}$ is increasing $m<n$. So $y>b_m$ because $y-x=b_n-a_n>b_m-a_m$. A vertical move $(x,y)\to (x,b_m)=(a_m,b_m)$ is thus enough. Finally, if $x=b_m$, observe that $y\ge x = b_m >a_m$, so consider the move $(x,y)\to (x,a_m)=(b_m,a_m)$.
  \end{itemize}

\end{itemize}
\end{proof}

We now show that the points $(a_n,b_n)$ remain at bounded distance from a line of slope~$\phi$ (see, for instance, Figure~\ref{fig:k2k3}). In the proof, recursively indexed sequences of the form $x_{x_n}$ will play an important role. Note that the solution of Wythoff satisfies the complementary equation $x_{x_n}=y_n-1$, see \cite{Kimberling1,Kimberling2,Kimberling3}.  Our approach is in the same spirit as found in \cite{Fraenkel5} where uniform bounds for $\mex$-defined complementary sequences are derived.

  \begin{corollary}\label{cor:behave}
  For the sequences defined by \eqref{eq:mexdef}, 
  there exist a bounded function $\lambda_\ell:\mathbb{N}\to\mathbb{Z}$ such that, for all $n$,   
  \[
   a_n=\lfloor (n+\ell)\phi \rfloor +\lambda_\ell(n).
 \]
 So, in particular, \(b_n=\lfloor (n+\ell)\phi^2 \rfloor + \lambda_\ell(n) +1\).
  \end{corollary}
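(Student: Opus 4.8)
The plan is to convert the $\mex$-recursion \eqref{eq:mexdef} into a single self-referential recursion for $(a_n)$ alone, and then run a contraction argument to bound the deviation of $a_n$ from a line of slope $\phi$.

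\textbf{Step 1: a self-referential recursion.} First I would show, for every $n\ge 0$,
\[
  a_n = n+\ell+1+A(n),\qquad A(n):=\#\{\,i\ge 0:\ a_i\le n\,\}.
\]
This follows from Lemma~\ref{lem:anbn}. The interval $\{\ell+1,\dots,a_n\}$ has $a_n-\ell$ elements; since the two sequences partition $\mathbb{N}_{>\ell}$ and $(a_i)$ is increasing, exactly $n+1$ of these elements are $a$'s (namely $a_0,\dots,a_n$) and the rest are $b$'s, so $a_n-\ell=(n+1)+\beta$ with $\beta:=\#\{i:b_i<a_n\}$; that is, $a_n=n+\ell+1+\beta$. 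Plugging $b_i=a_i+i+\ell+1$ and this very formula into $b_{\beta-1}<a_n<b_\beta$ gives $a_{\beta-1}\le n<a_\beta$, hence $\beta=A(n)$. As a by-product, evaluating the recursion at the index $a_n$ and using $A(a_n)=n+1$ yields the complementary-equation identity $a_{a_n}=a_n+n+\ell+2=b_n+1$, the $K^\ell$-analogue of Wythoff's $a_{a_n}=b_n-1$.

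\textbf{Step 2: a uniform bound.} Set $c:=(\ell+1)/\phi$ and $\varepsilon_n:=a_n-(\phi n+c)$; I claim $\sup_n|\varepsilon_n|<\infty$, by strong induction. For $n\le\ell$ one has $A(n)=0$, so $a_n=n+\ell+1$ and $\varepsilon_n=(1-\phi)n+\ell+1-c$ is bounded in absolute value by an explicit $M_0(\ell)$: these are the base cases. For $n>\ell$, write $m:=A(n)$, so that $1\le m\le n-\ell<n$ and $a_{m-1}\le n<a_m$. Substituting $a_{m-1}=\phi(m-1)+c+\varepsilon_{m-1}$, $a_m=\phi m+c+\varepsilon_m$, and using that $m,n$ are integers gives $m=\frac{n-c}{\phi}+\theta_n$ with $|\theta_n|\le 1+\frac1\phi\max(|\varepsilon_{m-1}|,|\varepsilon_m|)$. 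Feeding this back into $a_n=n+\ell+1+m$ and using the golden-ratio identities $1+\tfrac1\phi=\phi$ and $\ell+1-\tfrac c\phi=c$ (i.e.\ $c\phi=\ell+1$), one obtains $a_n=\phi n+c+\theta_n$, hence $\varepsilon_n=\theta_n$. Thus if $|\varepsilon_j|\le M$ for all $j<n$, then $|\varepsilon_n|\le 1+M/\phi$, and since $1+M/\phi\le M$ whenever $M\ge\phi^2$, the constant $M:=\max(M_0(\ell),\phi^2)$ survives the induction. This proves $a_n=\phi n+c+O(1)$.

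\textbf{Step 3: conclusion.} Since $\phi n+c=\phi(n+\ell)+(\phi-1-\ell)$, the integer $\lambda_\ell(n):=a_n-\lfloor(n+\ell)\phi\rfloor$ satisfies $|\lambda_\ell(n)|\le|\phi-1-\ell|+M+1$, hence is bounded. The companion formula is then immediate from $b_n=a_n+n+\ell+1$ (Lemma~\ref{lem:anbn}) together with the elementary identity $\lfloor k\phi\rfloor+k=\lfloor k\phi^2\rfloor$ for $k:=n+\ell\in\mathbb{N}$, valid because $\phi^2=\phi+1$ and $k$ is an integer; indeed $b_n=\lfloor(n+\ell)\phi\rfloor+\lambda_\ell(n)+n+\ell+1=\lfloor(n+\ell)\phi^2\rfloor+\lambda_\ell(n)+1$.

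I expect the main obstacle to be Step 2: the recursion only pins down $\varepsilon_n$ in terms of $\varepsilon_m$ with $m=A(n)\approx n/\phi$, so one must verify carefully that the auxiliary indices $A(n),A(n)-1$ stay strictly below $n$ (so the strong induction is not circular) and that the multiplicative factor relating the new error to the old is exactly $1/\phi<1$ (so that a \emph{fixed} bound, not a growing one, is preserved). The delicate bookkeeping is the arithmetic around the floors — in particular checking $\ell+1-c/\phi=c$, which is precisely the fixed-point condition making both the slope $\phi$ and the intercept $c$ self-consistent.
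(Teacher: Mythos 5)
Your proposal is correct and takes essentially the same route as the paper: your counting identity $a_n=n+\ell+1+A(n)$ is the paper's relation $a_n=\ell+1+n+S_n$ (with $A(n)=S_n$), the sandwich $a_{A(n)-1}\le n<a_{A(n)}$ is the same intermediate step, and your strong-induction bound $|\varepsilon_n|\le 1+\tfrac{1}{\phi}\max(|\varepsilon_{A(n)-1}|,|\varepsilon_n|_{\text{prev}})$ on $\varepsilon_n=a_n-\phi n-c$ is the same $1/\phi$-contraction the paper runs on $D_n=S_n-n/\phi$ via $\phi|D_n|\le\ell+|D_{S_n}|$ and a minimal counterexample. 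The only cosmetic difference is that you track the discrepancy of $a_n$ directly instead of that of $S_n$, which lets you skip the composed identity $S_n+S_{S_n}=n-\ell+\epsilon_n$.
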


  \begin{proof}
    Let
    \[
      S_n:=\sum_{k=1}^n (a_k-a_{k-1}-1).
    \]
    Since $a_k-a_{k-1}=2$ if and only if some $b_j$ is inserted between $a_{k-1}$ and $a_k$, we have that
    \[
      S_n=\#\{i\mid b_i<a_n\},
    \]
    that is $b_0<b_1<\cdots <b_{S_n-1}<a_n<b_{S_n}$. We set $S_0=0$ and $S_1=\cdots=S_\ell=0$, $S_{\ell+1}=1$ and $(S_n)_{n\ge 0}$ is non-decreasing.
    
    The interval $[a_0,a_n]$ contains $n+1$ terms $a_j$'s and $S_n$ terms $b_j$'s, i.e., $n+1+S_n=a_n-a_0+1$. Since $a_0=\ell+1$, we have
    \begin{equation}\label{eq:anSn}
      a_n=\ell+1+n+S_n.
    \end{equation}
    By definition of $S_n$, we have
    \begin{align*}
      & b_{S_n-1}<a_n<b_{S_n} \\
      \Leftrightarrow\; & a_{S_n-1}+S_n-1+\ell+1 < a_n < a_{S_n}+S_n+\ell+1\\
      \Leftrightarrow\; & a_{S_n-1}-1 < a_n-S_n-\ell-1 < a_{S_n}\\
      \Leftrightarrow\; & a_{S_n-1} \le a_n-S_n-\ell-1 < a_{S_n}.
    \end{align*}
    Now using \eqref{eq:anSn},
      \begin{align*}
        & \ell+1+S_n-1+S_{S_n-1}  \le n < \ell+1+S_n+S_{S_n} \\
        \Leftrightarrow\; &  S_{S_n-1}-1 \le n -\ell-1-S_n < S_{S_n}
      \end{align*}
      Note that $S_{S_n}-S_{S_{n-1}}=a_{S_n}-a_{S_n-1}-1\in\{0,1\}$. So, if $S_{S_n}-S_{S_{n-1}}=0$, then  the integer $n -\ell-1-S_n$ belongs to $[S_{S_n}-1,S_{S_n})$, and is thus equal to $S_{S_n}-1$. If $S_{S_n}-S_{S_{n-1}}=1$, then  $n -\ell-1-S_n$ belongs to $[S_{S_n}-2,S_{S_n})$. Consequently,
      \[
        n -\ell-1-S_n= S_{S_n}-1-\epsilon_n, \text{ for some }\epsilon_n\in\{0,1\}.
      \]
      That is,
      \[
       S_n+ S_{S_n}= n -\ell+\epsilon_n.
      \]

      We want to prove that $|a_n-n \phi|$ is bounded by a constant. Thanks to \eqref{eq:anSn}, observe that
      \[
        a_n-n \phi= \ell+1+S_n-(\phi-1)n = \ell+1+S_n- \frac{n}{\phi}.
      \]
      So it is enough to bound $|D_n|$, where
      \[
        D_n:= S_n-\frac{n}{\phi}.
        \]
      We have 
      \[
        S_n+ S_{S_n}= D_n +\frac{n}{\phi} +D_{S_n}+\frac{S_n}{\phi}= n -\ell+\epsilon_n
      \]
      and
      \[
        \frac{S_n}{\phi}=\frac{1}{\phi} \left(D_n+\frac{n}{\phi}\right).
      \]
      Putting these together, we get
      \[
        \left( 1+\frac{1}{\phi}\right) D_n + D_{S_n} + n \left( \frac{1}{\phi}+\frac{1}{\phi^2}\right)= n -\ell+\epsilon_n.
      \]
      Hence,
      \[
        \phi  D_n = -\ell +\epsilon_n - D_{S_n}
      \]
      and
      \begin{equation}\label{eq:bDN}
        \phi  \left| D_n \right|  \le \ell + \left|  D_{S_n}\right|.
      \end{equation}
      We claim that, for all $n$,
       \[
         \left| D_n\right| \le \frac{\ell}{\phi-1}=\phi\ell.
       \]
       Note that for $i\le \ell$, $|D_i|=i/\phi<\ell/(\phi-1)=\phi\ell$. 
       We proceed by contradiction. Assume that there exists a minimal $N>\ell$ such that $|D_N|>\phi\ell$. We have $S_N<N$, so by minimality of~$N$,
       \[
         \left|D_{S_N}\right| \le \phi\ell.
       \]
       From \eqref{eq:bDN}, we have
       \[
           \left| D_N \right|  \le \frac{\ell}{\phi} + \frac{\left|D_{S_N}\right|}{\phi} \le  \frac{\ell}{\phi} + \frac{\phi\ell}{\phi}=\left(\frac{1}{\phi}+1\right)\ell=\phi\ell,
         \]
         which is a contradiction. Consequently, using \eqref{eq:anSn},
         \[
           a_n-\lfloor (n+\ell)\phi\rfloor = a_n- (n+\ell)\phi + \{(n+\ell)\phi \} = \ell (1-\phi) +1+ D_n+\{(n+\ell)\phi \},
         \]
         which is a bounded function. The form of $b_n$ comes again from the fact that $b_n=a_n+n+\ell+1$.
  \end{proof}

The following statement is an immediate consequence of the previous corollary. However, we provide a proof because we develop arguments (about density) different from those used earlier. This result is in fact weaker than the previous one: the asymptotic behavior described below is that $a_n-n\phi=o(n)$. However, it does not yield a bounded discrepancy $a_n-n\phi=O(1)$ as in Corollary~\ref{cor:behave}: the points $(a_n,b_n)$ remain at bounded distance from a line of slope~$\phi$.
  
\begin{corollary}
  For the sequences defined by \eqref{eq:mexdef}, the following limits exist:
\[
  \lim_{n\to\infty} \frac{a_n}{n}=\phi\quad\text{ and }\quad  \lim_{n\to\infty} \frac{b_n}{n}=\phi^2.
\]
\end{corollary}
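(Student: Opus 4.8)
The plan is to reduce the statement to the single limit $\lim_{n\to\infty} a_n/n = \phi$; the assertion about $b_n$ then follows immediately from $b_n = a_n + n + \ell + 1$, since $b_n/n = a_n/n + 1 + (\ell+1)/n \to \phi + 1 = \phi^2$. To establish $a_n/n \to \phi$ I would not simply invoke Corollary~\ref{cor:behave}, but instead run a self-contained counting (density) argument: set $L := \liminf_n a_n/n$ and $U := \limsup_n a_n/n$ and show $L = U = \phi$. Using Lemma~\ref{lem:anbn}(4), the differences $a_{n+1}-a_n$ lie in $\{1,2\}$, so $\ell+1+n \le a_n \le \ell+1+2n$ and hence $1 \le L \le U \le 2$; in particular $L > 0$ and $U < \infty$, which permits division by $L$ and $U$ below.

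The engine is the identity $a_n = \ell + 1 + n + S_n$ of \eqref{eq:anSn}, where $S_n := \#\{i : b_i < a_n\}$ counts how many terms of the increasing sequence $(b_i)$ fall below $a_n$, together with the sandwich $a_{S_n - 1} \le n < a_{S_n}$ already derived in the proof of Corollary~\ref{cor:behave}. First one observes that $S_n \to \infty$, being the number of terms of an unbounded increasing sequence below $a_n \to \infty$; this makes the ``eventually'' clauses below valid (and in particular guarantees $S_n \ge 1$ for large $n$). Fix $\epsilon > 0$. Since $a_k \ge (L-\epsilon)k$ for all large $k$, the left half of the sandwich gives $(L-\epsilon)(S_n-1) \le n$, i.e.\ $S_n \le 1 + n/(L-\epsilon)$, for large $n$; feeding this into $a_n = \ell+1+n+S_n$ yields $a_n/n \le 1 + 1/(L-\epsilon) + O(1/n)$, hence $U \le 1 + 1/(L-\epsilon)$, and letting $\epsilon \to 0^+$ we get $U \le 1 + 1/L$. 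Symmetrically, since $a_k \le (U+\epsilon)k$ for all large $k$, the right half gives $n < (U+\epsilon)S_n$, i.e.\ $S_n > n/(U+\epsilon)$, hence $a_n/n \ge 1 + 1/(U+\epsilon) + O(1/n)$, so $L \ge 1 + 1/(U+\epsilon)$, and letting $\epsilon \to 0^+$ we get $L \ge 1 + 1/U$.

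It then remains to close the loop: from $U \le 1 + 1/L$ and $L \ge 1 + 1/U$ one gets (multiplying by $L>0$, resp.\ by $U>0$) $UL \le L+1$ and $UL \ge U+1$, so $U + 1 \le UL \le L + 1$, forcing $U \le L$; combined with $L \le U$ this gives $L = U =: \alpha$. Substituting back yields $\alpha^2 = \alpha + 1$, and since $\alpha \ge 1 > 0$ necessarily $\alpha = \phi$, which proves the first limit, and with it the second. I expect the main obstacle to be bookkeeping rather than computation: the sandwich relates $a_n$ to $a_{S_n}$ at the \emph{smaller} index $S_n$, so one must track carefully that it is the $\liminf$ that bounds $a_n$ from above and the $\limsup$ that bounds it from below --- this crossing of roles is precisely what forces the two inequalities to meet only at the golden ratio --- and one must stay alert to the off-by-one terms in the sandwich, all of which are harmless for large $n$.
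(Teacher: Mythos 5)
Your argument is correct, and it reaches the conclusion by a route that differs from the paper's. You recycle the machinery internal to the proof of Corollary~\ref{cor:behave} --- the identity $a_n=\ell+1+n+S_n$ of \eqref{eq:anSn} and the sandwich $a_{S_n-1}\le n<a_{S_n}$ --- and run a standard $\varepsilon$-crossing argument on $L=\liminf a_n/n$ and $U=\limsup a_n/n$, obtaining $U\le 1+1/L$ and $L\ge 1+1/U$, whence $L=U=\phi$; the preliminary bounds $1\le L\le U\le 2$ from Lemma~\ref{lem:anbn}(4), the observation $S_n\to\infty$, and the handling of the off-by-one terms are all in order, and the claim for $b_n$ follows from $b_n=a_n+n+\ell+1$ exactly as in the paper. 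The paper instead sets up counting functions $\pi_A(x)=\#\{n\mid a_n<x\}$ and $\pi_B(x)=\#\{n\mid b_n<x\}$, uses the partition identity $\pi_A(x)+\pi_B(x)=x-\ell-1$ together with $\pi_A(b_n)=a_n$, and passes through the continuity of $x\mapsto x/(x+1)$ to get the symmetric system $1/\beta=\alpha/(\alpha+1)$, $1/\alpha=\beta/(\beta+1)$ before pinning down $\phi$ via the density relation; the authors explicitly chose this to make the proof independent of the bookkeeping of Corollary~\ref{cor:behave} and to showcase a pure density argument (note their footnote that $S_n=\pi_B(a_n)$, so the two viewpoints are translations of one another). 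Your version is shorter and avoids the continuity step, at the cost of depending on intermediate facts established inside the previous proof rather than only on \eqref{eq:mexdef} and Lemma~\ref{lem:anbn}; if you intend it as a standalone proof you should restate and briefly rederive \eqref{eq:anSn} and the sandwich, which takes only a few lines.
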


\begin{proof}
  Let
  \[
    \pi_A(x):=\#\{n\mid a_n<x\} \quad\text{ and }\quad \pi_B(x):=\#\{n\mid b_n<x\}.
  \]
  Since $a_0<\cdots <a_{n-1}<a_n$, we have $\pi_A(a_n)=n$ and similarly, $\pi_B(b_n)=n$. Since the two sequences make a partition of $\mathbb{N}_{>\ell}$, for every integer $x>\ell+1$, we have
  \begin{equation}
    \label{eq:total}
  \pi_A(x)+\pi_B(x)=x-\ell-1.  
  \end{equation}
  Particularly, $ \pi_A(b_n)+\pi_B(b_n)=b_n-\ell-1$. Using $\pi_B(b_n)=n$, we get\footnote{With the notation of the proof of Corollary~\ref{cor:behave}, $S_n=\pi_B(a_n)$.}
  \[
    \pi_A(b_n)=b_n-n-\ell-1=a_n.
  \]
  
  Consider
  \[
    \alpha:=\limsup_{n\to \infty} \frac{a_n}{n} \quad\text{ and }\quad \beta:=\liminf_{n\to \infty} \frac{a_n}{n}.
  \]
  From Lemma~\ref{lem:anbn}, we have $2\ge \alpha\ge\beta\ge 1$. Our aim is to show that $\alpha=\beta$.

  The intervals $[a_n,a_{n+1})$ make a partition of $\mathbb{N}_{>\ell}$. If $x\in [a_n,a_{n+1})$, then $\pi_A(x)\in\{n,n+1\}$ and
  \[
    \frac{n+1}{a_{n+1}}\cdot\underbrace{\frac{n}{n+1}}_{\to 1}=\frac{n}{a_{n+1}} < \frac{\pi_A(x)}{x} \le \frac{n+1}{a_n}=\frac{n}{a_n}\cdot\underbrace{\frac{n+1}{n}}_{\to 1}.
  \]
  From this, we deduce that
  \[
    \limsup_{x\to\infty} \frac{\pi_A(x)}{x} =\limsup_{n\to \infty}\frac{n}{a_n}=\frac{1}{\liminf_{n\to\infty}\frac{a_n}{n}}=\frac{1}{\beta}.
    \]

    Similarly, the intervals $[b_n,b_{n+1})$ make a partition of $\mathbb{N}_{>2\ell+1}$. If $x\in [b_n,b_{n+1})$, then 
  \[
    \frac{a_n}{b_{n}}\cdot\underbrace{\frac{b_n}{b_{n+1}}}_{\to 1}=\frac{\pi_A(b_n)}{b_{n+1}} \le \frac{\pi_A(x)}{x} \le \frac{\pi_A(b_{n+1})}{b_n}=\frac{a_{n+1}}{b_{n+1}}\cdot\underbrace{ \frac{b_{n+1}}{b_n}}_{\to 1}.
  \]
  So
   \[
    \limsup_{x\to\infty} \frac{\pi_A(x)}{x} =\limsup_{n\to \infty}\frac{a_n}{b_n}.
  \]
  Let $g(x)=\frac{x}{x+1}$. First observe that
  \[
    \frac{a_n}{b_n}=\frac{\frac{a_n}{n}}{\frac{a_n}{n}+1+\frac{\ell+1}{n}}.
  \]
Second, 
  \[
    \left| \frac{a_n}{b_n} - g\left(\frac{a_n}{n}\right)\right| =
    \left| \frac{\frac{a_n}{n}}{\frac{a_n}{n}+1+\frac{\ell+1}{n}}-g\left(\frac{a_n}{n}\right) \right| = \left| \frac{\frac{a_n}{n} (-\frac{\ell+1}{n})}{\left( \frac{a_n}{n}+1+\frac{\ell+1}{n}\right) \left( \frac{a_n}{n}+1\right)} \right|\to 0.
  \]
  The function $g$ is continuous and strictly increasing on $[0,+\infty)$. Hence,
   \[
   \frac{1}{\beta}= \limsup_{x\to\infty} \frac{\pi_A(x)}{x} =\limsup_{n\to \infty}\frac{a_n}{b_n}=  \limsup_{n\to \infty} g\left(\frac{a_n}{n}\right) = g\left( \limsup_{n\to \infty} \frac{a_n}{n}\right)=g(\alpha)=\frac{\alpha}{\alpha+1}.
  \]
  Proceeding in the same way with $\liminf$, we get
  \[
    \frac{1}{\alpha}=\frac{\beta}{\beta+1}.
  \]
  Hence, $\alpha+1=\beta \alpha=\beta+1$ and therefore, $\alpha=\beta$ meaning that the sequence $(a_n/n)_{n\ge 1}$ is converging to $\alpha$. Note that
  \[
    \frac{b_n}{n}=\frac{a_n}{n}+1+\frac{\ell+1}{n}\to \alpha+1, \quad \text{  if }n\to\infty.
  \]
  The proof also shows that $(\pi_A(n)/n)_{n\ge 1}$ converges to $1/\alpha$. Proceeding as before,   the intervals $[b_n,b_{n+1})$ make a partition of $\mathbb{N}_{>2\ell+1}$. If $x\in [b_n,b_{n+1})$, then $\pi_B(x)\in\{n,n+1\}$ and
  \[
    \frac{n+1}{b_{n+1}}\cdot\underbrace{\frac{n}{n+1}}_{\to 1}=\frac{n}{b_{n+1}} < \frac{\pi_B(x)}{x} \le \frac{n+1}{b_n}=\frac{n}{b_n}\cdot\underbrace{\frac{n+1}{n}}_{\to 1}.
  \]
  From this, we deduce that
  \[
    \limsup_{x\to\infty} \frac{\pi_B(x)}{x} =\limsup_{n\to \infty}\frac{n}{b_n}=\frac{1}{\liminf_{n\to\infty}\frac{b_n}{n}}=\frac{1}{\alpha+1}.
    \]
    Since
    \[
    \frac{\pi_A(x)+\pi_B(x)}{x}\to 1\text{ (from \eqref{eq:total})} \quad\text{and}\quad \frac{\pi_A(x)}{x}\to\frac{1}{\alpha},
    \]
    it follows that $\frac{\pi_B(x)}{x}$ is converging to $\frac{1}{\alpha+1}$ as $x\to\infty$. We thus get
    \[
      1=\frac{1}{\alpha}+\frac{1}{\alpha+1}.
    \]
    Since $1\le\alpha\le 2$, we obtain that the limit $\alpha$ is the golden ratio $\phi=(1+\sqrt{5})/2$.
  \end{proof}

  \begin{remark}
    One may also ask whether or not, the sequences $(a_n)_{n\ge 0}$ and $(b_n)_{n\ge 0}$ studied here can be complementary Beatty sequences. In \cite{Bosh}, it is shown that the prefix $P=(c_n)_{1\le n\le L}$ of length~$L$ a sequence $(c_n)_{n\ge 1}$ is a {\em spectrum}, i.e., of the form $\lfloor n\alpha+\beta\rfloor$ for some reals $\alpha,\beta$, if and only if $\underline{d}(P)<\overline{d}(P)$ where
    \[
      \underline{d}(P)=\max_{1\le i<k\le L}\frac{c_k-c_{k-i}-1}{i}\quad\text{ and }\quad
      \overline{d}(P)=\max_{1\le i<k\le L}\frac{c_k-c_{k-i}+1}{i}.\]
    For $\ell\ge 2$, the prefix of length~$L$ of the corresponding sequence $(a_n)_{n\ge 0}$ is such that
    \[
      \underline{d}(P)=\frac{2\ell+1}{\ell+1}>\overline{d}(P)=\frac{\ell+1}{\ell}.
    \]
    Hence $(a_n)_{n\ge 0}$ is not a spectrum. 

    For the game $K^1$, the situation is different; we have (with our convention where $(a_0,b_0)=(2,4)$)
    \[
      a_n=\lfloor (n+1)\phi -1\rfloor \quad\text{ and }\quad b_n=\lfloor (n+1)\phi^2 -1\rfloor.
    \]
    For more on complementary sequences, we refer the reader to \cite{Fraenkel4,Larsson}.
  \end{remark}


\section{A morphic characterization for $K^1, K^2, K^3,\ldots$}\label{sec:morphic}

The {\em morphic characterization}, as introduced in \cite{Duchene}, is given by the Fibonacci word. If we start indexing letters of the Fibonacci word~$\mathbf{f}$ with $1$, the indices of the letters $a$ (resp. $b$) in~$\mathbf{f}$ correspond to the sequence $(a_n)_{n\ge 0}$ (resp. $(b_n)_{n\ge 0}$)
\[
  \begin{array}{c|ccccccccccc}
    n&1&2&3&4&5&6&7&8&9&10&\cdots \\
    \hline
\mathbf{f}&a&b&a&a&b&a&b&a&a&b&\cdots \\    
  \end{array}
\]
We say that $\mathbf{f}$ {\em codes} the $\mathcal{P}$-positions: the indices of the $n$th letter $a$ and the $n$th letter $b$ in $\mathbf{f}$ give $(a_{n-1},b_{n-1})$, $n\ge 1$. The reader can notice that the first $a$ and $b$ occur in positions $(1,2)$, the second $a$ and $b$ in positions $(3,5)$ and so on. For more on morphic characterization, see \cite[Sec. 3.2]{Vision} where Wythoff game is coined as the ``Fibonacci game'', showing the interplay between combinatorial games and combinatorics on words.

In this section, the various morphisms $f_\ell$ and $g_\ell$ have been obtained with the heuristic of Section~\ref{ssec:heuristic}. We also use the commands \verb!pposK1_an!, \verb!pposK1_bn!, \verb!pposK2_an! and \verb!pposK2_bn! introduced at the beginning of Section~\ref{sec:6}.

\begin{proposition}
   The non-terminal $\mathcal{P}$-positions $(a_n,b_n)_{n\ge 0}$ of $K^1$ are coded by the morphic word $g_1(f_1^\omega(0))$ where, indexing of the letters starts with $2$ and $f_1$ is a $\varphi$-morphism over a $5$-letter alphabet:
\begin{align*}
  f_1:\;&0 \mapsto 01, 1 \mapsto 2, 2 \mapsto 34, 3 \mapsto 31, 4 \mapsto 2,\\
  g_1:\;&0, 1, 3 \mapsto a,\quad 2, 4 \mapsto b.
\end{align*}
 \end{proposition}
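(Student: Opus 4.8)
The plan is to verify, with {\tt Walnut}, that the proposed morphic word $g_1(f_1^\omega(0))$ codes precisely the sequences $(a_n)_{n\ge 0}$ and $(b_n)_{n\ge 0}$ defined recursively in the $K^1$-proposition (equivalently, described by Theorem~\ref{thm:carK1}). First I would store the $\varphi$-morphism $f_1$ as a $5$-state DFA in {\tt Word Automata Lib} following Definition~\ref{def:dfao} (the transitions being read off from $0\mapsto 01$, $1\mapsto 2$, $2\mapsto 34$, $3\mapsto 31$, $4\mapsto 2$), and attach the coding $g_1$ as its output function, obtaining a Fibonacci-automatic word, call it \verb!F1!, such that \verb!F1[m]! is the $m$th letter of $g_1(f_1^\omega(0))$. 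Because the indexing of the coded word starts at $2$, the claim ``$\mathbf{f}$ codes the $\mathcal{P}$-positions'' translates into: the position of the $(n{+}1)$st occurrence of the letter $a$ in this word is $a_n$, and the position of the $(n{+}1)$st occurrence of $b$ is $b_n$.

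The cleanest way to phrase ``$m$ is the position of the $k$th occurrence of letter $a$'' in {\tt Walnut} is via a counting/synchronization predicate: \verb!F1[m]=@a! together with the statement that the number of indices $j$ with $2\le j< m$ and \verb!F1[j]=@a! equals $k-1$. Since we already have synchronized predicates \verb!pposK1_an! and \verb!pposK1_bn! from the start of Section~\ref{sec:6} giving the pairs $(a_n,n)$ and $(b_n,n)$, the efficient route is to prove the two automatic statements
\begin{verbatim}
eval codeA "?msd_fib An,a ($pposK1_an(a,n) <=> (F1[a]=@a &
     (count of j in [2,a) with F1[j]=@a) = n))":
eval codeB "?msd_fib An,b ($pposK1_bn(b,n) <=> (F1[b]=@b &
     (count of j in [2,b) with F1[j]=@b) = n))":
\end{verbatim}
Here the inner ``count'' is itself encoded as an auxiliary synchronized predicate $C_a(m,c)$ meaning ``exactly $c$ indices $j$ with $2\le j<m$ satisfy \verb!F1[j]=@a!'', built in the standard way (it is Fibonacci-automatic because \verb!F1! is). Equivalently, and often more robustly in practice, I would avoid explicit counting: since the letters $a,b$ partition the positions $\ge 2$ of the word, and $(a_n)$, $(b_n)$ partition $\mathbb{N}_{\ge 2}$ (Lemma~\ref{lem:anbn}, or directly from Theorem~\ref{thm:carK1}), it suffices to check that the \emph{set} $\{m\ge 2 : \texttt{F1}[m]=a\}$ equals $\{a_n : n\ge 0\}$ and that the order is respected, which follows once we verify $\texttt{F1}[m]=a \iff \exists n\; \texttt{pposK1\_an}(m,n)$ and the analogous equivalence for $b$ — two one-line {\tt Walnut} \verb!eval! commands that quantify $m$ and compare membership. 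Monotonicity of both sequences then guarantees that the $k$th $a$ (resp.\ $b$) lands on $a_{k-1}$ (resp.\ $b_{k-1}$), which is the coding claim.

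The main obstacle is not logical but bookkeeping: one must (i) confirm that $f_1$ really is a $\varphi$-morphism, i.e.\ that it is compatible with the Fibonacci morphism $\sigma$ via the coding $f_1$ in the sense of \eqref{eq:morphf} — this is a finite check on the $5$ letters, verifying $f_1(f_1(i))$ has the right block structure and that $g_1$ factors through $\sigma$; and (ii) get the offset conventions exactly right (indexing starting at $2$, and the shift $n\mapsto n+1$ already baked into \verb!pposK1_an! and \verb!pposK1_bn!). Once \verb!F1! is correctly installed, the substantive content — that this particular $5$-letter $\varphi$-morphism reproduces the Fibonacci-even numbers and their complement — is entirely delegated to {\tt Walnut}, and all the \verb!eval! commands should return {\tt TRUE}; the heuristic of Section~\ref{ssec:heuristic} is what produced the candidate $f_1,g_1$ in the first place, so the proof is simply the a posteriori certification step.
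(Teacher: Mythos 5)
Your preferred (non-counting) route is essentially the paper's proof: store $f_1,g_1$ as a Fibonacci DFAO and use the synchronized predicates \verb!pposK1_an!, \verb!pposK1_bn! to check with {\tt Walnut} that the letter at position $a_n$ (resp.\ $b_n$), after the shift by $2$ imposed by the indexing convention, is $a$ (resp.\ $b$), the coding claim then following from the partition and monotonicity of the two sequences. The counting-predicate variant you sketch first is an unnecessary detour (and its automaticity would need justification), but since you discard it in favor of the membership check, the proposal matches the paper's argument.
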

\[
  \begin{array}{c|cccccccccccc}
    n&2&3&4&5&6&7&8&9&10&11& \cdots \\
    \hline
f_1^\omega(0)&0 & 1 & 2 & 3 & 4 & 3 & 1 & 2 & 3 & 1& \cdots\\
g_1(f_1^\omega(0))&a & a & b & a & b & a & a & b & a & a & \cdots\\
  \end{array}
\]

\begin{proof}
   We turn $f_1$ and $g_1$ into an automaton stored in {\tt Word Automata Library} where outputs are $1$ and $2$ instead of $a$ and $b$. Let $\mathbf{w}$ be the corresponding infinite word. We have to check that for all $a_n$ (resp. $b_n$), the corresponding symbol in $\mathbf{w}$ is $1$ (resp. $2$). Since {\tt Walnut} is indexing infinite words from $0$, we must have
  \[
 \forall n\ge 3,\quad   \mathbf{w}_{a_n-2}=1\ \text{ and }\ \mathbf{w}_{b_n-2}=2.
\]
That is verified thanks to
  \begin{verbatim}
eval matcha "?msd_fib  An (Ex (($pposK1_an(x,n)) => (F1G1[x-2]=1)))":
eval matchb "?msd_fib  An (Ex (($pposK1_bn(x,n)) => (F1G1[x-2]=2)))":
\end{verbatim}
\end{proof}

\begin{proposition}
 The non-terminal $\mathcal{P}$-positions $(a_n,b_n)_{n\ge 0}$ of $K^2$ are coded by the morphic word $g_2(f_2^\omega(0))$, where indexing of the letters starts with $3$ and $f_2$ is a $\varphi$-morphism over a $16$-letter alphabet:
\begin{align*}
  f_2:\;& 0 \mapsto 01, 1 \mapsto 2, 2 \mapsto 34, 3 \mapsto 56, 4 \mapsto 7, 5 \mapsto 89, 6 \mapsto (10), 7 \mapsto (11)(12), 8 \mapsto (10)(13),\\
    & 9 \mapsto (14), 10 \mapsto (10)(13), 11 \mapsto 56, 12 \mapsto (15), 13 \mapsto 5, 14 \mapsto 89, 15 \mapsto (11)(12),\\
  g_2:\;& 0, 1, 2, 4, 6, 8, 9, 11, 12, 13, 14, 15 \mapsto a,\quad 3, 5, 7, 10 \mapsto b.
\end{align*}
\end{proposition}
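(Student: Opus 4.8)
The proof follows the same template as the $K^1$ case and the analogous computations in Section~\ref{ssec:autom_proof}: translate the claim into a finite collection of first-order statements over $\langle\mathbb{N},+\rangle$ extended with Fibonacci representations, and let {\tt Walnut} verify them. First I would turn the pair $(f_2,g_2)$ into a DFAO: by Definition~\ref{def:dfao}, the $16$-letter morphism $f_2$ together with the coding $g_2$ (with outputs $1,2$ in place of $a,b$) produces an automaton, which I store in {\tt Word Automata Library} under some name, say {\tt F2G2}. Let $\mathbf{w}$ denote the resulting infinite word, so that $\mathbf{w}_m$ is the $m$th symbol of $g_2(f_2^\omega(0))$ under {\tt Walnut}'s $0$-based indexing.

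\textbf{Key steps.} Since the indexing of the coded word starts at $3$ (the first non-terminal $\mathcal{P}$-position of $K^2$ has $a_0=3$), the statement to be checked is that for every $n\ge 0$, the $a_n$-th letter of the word (in $1$-based indexing starting at $3$) is $a$ and the $b_n$-th letter is $b$; accounting for the offset, this means $\mathbf{w}_{a_n-3}=1$ and $\mathbf{w}_{b_n-3}=2$ for all $n$. Here $(a_n,b_n)_{n\ge 0}$ are exactly the sequences defined by the synchronized predicates {\tt pposK2\_an} and {\tt pposK2\_bn} introduced at the start of Section~\ref{sec:6}. So the two verifications are
\begin{verbatim}
eval matcha "?msd_fib An (Ex (($pposK2_an(x,n)) => (F2G2[x-3]=1)))":
eval matchb "?msd_fib An (Ex (($pposK2_bn(x,n)) => (F2G2[x-3]=2)))":
\end{verbatim}
Both should return {\tt TRUE}. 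The correctness of this argument rests on three facts already available: (i) the morphic word $g_2(f_2^\omega(0))$ is well-defined (the letter $0$ is a growing prefix of its own image under $f_2$), (ii) the DFAO construction of Definition~\ref{def:dfao} faithfully computes $\mathbf{w}$, and (iii) by Theorem~\ref{thm:rec_car} and the partition statement of Lemma~\ref{lem:anbn}, the sequences $(a_n)$ and $(b_n)$ together cover exactly $\mathbb{N}_{>2}$, so that checking the two $\mathbf{w}$-values above pins down every letter of the word and no ambiguity arises.

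\textbf{Main obstacle.} There is essentially no conceptual difficulty once the morphism is in hand; the work is entirely in producing the correct $f_2$ and $g_2$. The delicate part is the heuristic of Section~\ref{ssec:heuristic}: one must compute enough $\mathcal{P}$-positions of $K^2$ to observe that the number of distinct $t$-types stabilizes (here $16$ types suffice, but one needs a long enough prefix to be confident, and for the related games $K^3,K^4$ one genuinely needs higher-order types), then read off the morphism from the induced factorization of the Fibonacci word and extract the coding by matching against the original sequence. If the prefix were too short one could guess a wrong morphism, but this risk is eliminated post hoc: once {\tt Walnut} confirms {\tt matcha} and {\tt matchb}, the morphism is provably correct regardless of how it was discovered. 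So the only real cost is the size of the alphabet (and hence of the intermediate automata), which grows with $\ell$ but remains manageable for $\ell=2$.
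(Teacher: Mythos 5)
Your proposal is correct and matches the paper's own proof essentially verbatim: the paper likewise promotes $(f_2,g_2)$ to a DFAO {\tt F2G2} and verifies the two {\tt Walnut} queries {\tt matcha} and {\tt matchb} with the offset $x-3$, using the synchronized predicates {\tt pposK2\_an} and {\tt pposK2\_bn}. Your added remark that the partition of $\mathbb{N}_{>2}$ into the $a_n$'s and $b_n$'s is what makes these two checks pin down every letter is a point the paper leaves implicit, but it is the same argument.
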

\[
  \begin{array}{c|ccccccccccccccccccccc}
    n&3&4&5&6&7&8&9&10&11&12&13&14&15&16&17&18&19&20&\cdots \\
    \hline
f_2^\omega(0)& 0 & 1 & 2 & 3 & 4 & 5 & 6 & 7 & 8 & 9 & 10 & 11 & 12 & 10 & 13 & 14 & 10 & 13 & \cdots \\
g_2(f_2^\omega(0))& a & a & a & b & a & b & a & b & a & a & b & a & a & b & a & a & b & a & \cdots \\
  \end{array}
\]
\begin{proof}
  Same proof as the previous one. We turn $f_2$ and $g_2$ into an automaton and check
  \begin{verbatim}
eval matcha "?msd_fib  An (Ex (($pposK2_an(x,n)) => (F2G2[x-3]=1)))":
eval matchb "?msd_fib  An (Ex (($pposK2_bn(x,n)) => (F2G2[x-3]=2)))":
\end{verbatim}
\end{proof}


One proceed in the same way for the next value of $\ell$.

\begin{proposition}
 The non-terminal $\mathcal{P}$-positions $(a_n,b_n)_{n\ge 0}$ of $K^3$ are coded by the morphic word $g_3(f_3^\omega(0))$, where indexing of the letters starts with $4$ and $f_3$ is a $\varphi$-morphism over a $22$-letter alphabet:
\begin{align*}
  f_3:\;& 0 \mapsto 01, 1 \mapsto 2, 2 \mapsto 34, 3 \mapsto 56, 4 \mapsto 7, 5 \mapsto 89, 6 \mapsto (10), 7 \mapsto (11)(12),  8 \mapsto (13)(12), \\
      & 9 \mapsto (14), 10 \mapsto (15)(16), 11 \mapsto (14)(17), 12 \mapsto (18), 13 \mapsto (14)(17), 14 \mapsto (19)(12),  \\
      & 15 \mapsto (18)(20), 16 \mapsto (21), 17 \mapsto (18), 18 \mapsto (13)(12), 19 \mapsto (19)(12), 20 \mapsto (14), 21 \mapsto (15)(16),\\
g_3:\;& 0, 1, 2, 3, 5, 7, 9, 11, 12, 14, 16, 18, 21 \mapsto a,\quad 4, 6, 8, 10, 13, 15, 17, 19, 20 \mapsto b.\\
\end{align*}
\end{proposition}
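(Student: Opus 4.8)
The plan is to reuse, almost verbatim, the recipe employed for the two preceding propositions. First I would turn the $\varphi$-morphism $f_3$ together with the coding $g_3$ into a DFAO: following Definition~\ref{def:dfao}, the states are the letters $0,1,\dots,21$, the transitions read off $f_3$ (an edge labelled $0$, resp.\ $1$, from $c$ to the first, resp.\ second, letter of $f_3(c)$, and a single $0$-edge when $|f_3(c)|=1$), and the output function is $g_3$, encoded with $1$ in place of $a$ and $2$ in place of $b$. This $22$-state automaton, say {\tt F3G3.txt}, is stored in {\tt Word Automata Library}; the only real care here is the bookkeeping with the two-digit state names $(10),\dots,(21)$.

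Next I would make the two $\mathcal{P}$-position sequences of $K^3$ available as synchronized predicates, directly from the algebraic characterization \eqref{eq:K3} of Theorem~\ref{thm:fibaut2} (whose automatic function $g_3:\mathbb{N}\to\{0,1,2\}$ is already stored as {\tt K3.txt} --- note that this {\tt K3} is \emph{not} the coding $g_3$ appearing in the present statement):
\begin{verbatim}
def pposK3_an "?msd_fib Ex ($phin(n+2,x) & a+1=x+K3[n+1])":
def pposK3_bn "?msd_fib Ex ($phin(n+2,x) & b=x+n+3+K3[n+1])":
\end{verbatim}
so that {\tt pposK3\_an(a,n)} holds exactly for $(a_n,n)$ and {\tt pposK3\_bn(b,n)} exactly for $(b_n,n)$, $n\ge 0$. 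Since the letters of $f_3^\omega(0)$ are indexed from $4=a_0$ while {\tt Walnut} indexes automatic words from $0$, the coding claim is the assertion $\mathbf{w}_{a_n-4}=1$ and $\mathbf{w}_{b_n-4}=2$ for all $n\ge 0$, which I would check with
\begin{verbatim}
eval matcha "?msd_fib An (Ex (($pposK3_an(x,n)) => (F3G3[x-4]=1)))":
eval matchb "?msd_fib An (Ex (($pposK3_bn(x,n)) => (F3G3[x-4]=2)))":
\end{verbatim}
expecting both to return {\tt TRUE}. Because $(a_n)_{n\ge 0}$ and $(b_n)_{n\ge 0}$ partition $\mathbb{N}_{>3}$ by Lemma~\ref{lem:anbn}, these two families of equalities already fix the letter at every position of the word, so $g_3(f_3^\omega(0))$ codes the $\mathcal{P}$-positions.

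There remains the clause that $f_3$ is a $\varphi$-morphism, but this is built into the construction: the morphism is the output of the heuristic of Section~\ref{ssec:heuristic} applied to a long enough prefix of the word coding the $\mathcal{P}$-positions of $K^3$ (extracted from Theorem~\ref{thm:fibaut2}, or computed directly), here yielding exactly $22$ types, so by construction each image $f_3(c)$ has the length and placement dictated by the Fibonacci morphism. I would record this once and for all by exhibiting the coding $A\to\{a,b\}$ under which the $22$-state automaton of $f_3$ maps onto the $2$-state automaton of $\sigma$, i.e.\ by verifying \eqref{eq:morphf} on the transition table --- a finite, mechanical check.

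The hard part is none of the above: once the pair $(f_3,g_3)$ is in hand, every step is routine and the {\tt Walnut} calls are essentially instantaneous. The real work is \emph{finding} the morphism --- computing enough $\mathcal{P}$-positions of $K^3$, deciding how deep the types must be taken to separate them (the alphabet, hence the required prefix length, grows with $\ell$, exactly as observed for $K^4$ in Section~\ref{sec:newGame}), and transcribing a $22$-letter morphism without error. After that, the verification is, as in the $K^1$ and $K^2$ cases, immediate.
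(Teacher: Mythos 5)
Your proposal is correct and follows exactly the route the paper takes: the paper proves the $K^1$ and $K^2$ cases by turning $(f_\ell,g_\ell)$ into a DFAO, defining the synchronized predicates for $(a_n)$ and $(b_n)$ from the algebraic characterization, and checking the two {\tt matcha}/{\tt matchb} formulas with the appropriate index offset, and then simply states that one proceeds in the same way for $K^3$ — which is what you do, with the correct offset $-4$ and the correct predicates derived from \eqref{eq:K3}. Your extra remarks (the partition argument from Lemma~\ref{lem:anbn} and the finite check that $f_3$ satisfies \eqref{eq:morphf}) are harmless additions, not deviations.
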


\begin{remark}
  When applying the heuristic, to get $f_2$ and $g_2$ right (resp. $f_3$ and $g_3$), we had to consider $4$-types (resp. $5$-types).
\end{remark}


\section{Blocking Wythoff}\label{sec:3}

Larsson characterized \cite[Thm.~1.1]{block} the set of $\mathcal{P}$-positions of $W^2$, a variant of Wythoff game where one option may be blocked, see Figure~\ref{fig:w2w3}. We provide an automatic proof of this result. 
\begin{theorem}
The set of $\mathcal{P}$-positions of $W^2$ is given by
\[
  \mathcal{R}_2=\{(0,0)\}\cup \{ \{n,2n+1\}\mid n\ge 0\}\cup \{\{2\lfloor n\phi\rfloor+2,2\lfloor n\phi^2\rfloor+2\}\mid n\ge 0\}.
\]  
\end{theorem}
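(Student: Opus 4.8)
The plan is to follow the methodology of Section~\ref{sec:2} and reduce the statement to a few {\tt Walnut} verifications. First I would encode membership in $\mathcal{R}_2$ as a Fibonacci-automatic binary predicate $R_2(x,y)$. All three pieces are directly definable in \texttt{msd\_fib}: the position $(0,0)$ is $x=0\wedge y=0$; the family $\{n,2n+1\}$ contributes, symmetrically in $x,y$, the predicate $\exists n\,\bigl((x=n\wedge y=n+n+1)\vee(y=n\wedge x=n+n+1)\bigr)$, using that doubling is just addition; and the family $\{2\lfloor n\phi\rfloor+2,\,2\lfloor n\phi^2\rfloor+2\}$ contributes $\exists n,u,v\,\bigl(\texttt{phin}(n,u)\wedge\texttt{phi2n}(n,v)\wedge x=u+u+2\wedge y=v+v+2\bigr)$ together with its $x\leftrightarrow y$ swap, reusing \texttt{phin} and \texttt{phi2n} from Section~\ref{sec:2}. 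I would also encode the Wythoff move relation $\mathrm{move}(x_1,y_1,x_2,y_2)$ (a horizontal, vertical, or diagonal strict decrease), which is purely linear and hence immediate over $\langle\mathbb{N},+\rangle$.

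Next, recall the characterization of the $\mathcal{P}$-positions of the $(k-1)$-blocking variant used in \cite{MRRW}: a set $S$ equals $\mathcal{P}(W^k)$ if and only if every position of $S$ has at most $k-1$ options in $S$ and every position not in $S$ has at least $k$ options in $S$; since the game graph is acyclic and well-founded, the fixed-point equation $\bigl(p\in S \iff p \text{ has at most } k-1 \text{ options in } S\bigr)$ has a unique solution, by induction on the game order. For $k=2$ this reads: each $(x,y)\in\mathcal{R}_2$ has at most one option in $\mathcal{R}_2$, and each $(x,y)\notin\mathcal{R}_2$ has at least two. Both are first-order: ``at most one option in $\mathcal{R}_2$'' is $\forall x_1,y_1,x_2,y_2\,\bigl(\mathrm{move}(x,y,x_1,y_1)\wedge\mathrm{move}(x,y,x_2,y_2)\wedge R_2(x_1,y_1)\wedge R_2(x_2,y_2)\Rightarrow(x_1=x_2\wedge y_1=y_2)\bigr)$, and ``at least two'' is the analogous existential statement requiring $(x_1,y_1)\neq(x_2,y_2)$. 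Evaluating the two corresponding {\tt Walnut} commands and obtaining {\tt TRUE} for each proves the theorem.

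The hard part will be computational rather than conceptual: as the authors note, the automata arising for the blocking variants are considerably larger than those for $K^2$. In particular, the ``at least two options'' formula, once its four existential variables are eliminated, carries a product of two copies of $\mathrm{move}$ with two copies of $R_2$, so the intermediate DFAs may blow up; it may help to split $\mathrm{move}$ into its three move types and reason about the count each contributes, or to precompute the automaton recognising the options of a position that lie in $\mathcal{R}_2$ before taking the conjunction. One must also symmetrize $R_2$ and both verification formulas in $(x,y)$ from the outset, since $\mathcal{R}_2$ consists of unordered pairs; with that in place the verification is routine, and the automatic proof of Larsson's result for $W^3$ then proceeds identically, replacing ``one'' and ``two'' by ``two'' and ``three''.
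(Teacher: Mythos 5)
Your proposal is correct and follows essentially the same route as the paper: encode $\mathcal{R}_2$ symmetrically in \texttt{msd\_fib} via \texttt{phin}/\texttt{phi2n}, encode the Wythoff move relation, and verify with {\tt Walnut} that every position of $\mathcal{R}_2$ has at most one option in $\mathcal{R}_2$ while every position outside it has at least two, these two conditions uniquely determining $\mathcal{P}(W^2)$ by acyclicity. Your explicit fixed-point/well-foundedness justification of that characterization is in fact slightly more careful than the paper's brief appeal to its kernel proposition, but the argument and the verification formulas are the same.
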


\begin{definition}\label{def:optW}
It is convenient to define the options of a position (for Wythoff's moves) by
\begin{verbatim}
def optW "?msd_fib (c=a & d<b) | (c<a & d=b) | (a+d=c+b & c<a)":
\end{verbatim}
This means that $(a,b,c,d)$ belongs to the set if and only if there is a Wythoff move from the position $(a,b)$ to $(c,d)$, i.e., $(c,d)$ is an option of $(a,b)$.
\end{definition}

\begin{proof}
The set $\mathcal{R}_2$ can be defined by
\begin{verbatim}
def pposW2_asym "?msd_fib En,a,b ((x=0 & y=0) | (x=n & y=2*n+1) |
                ($phin(n,a) & $phi2n(n,b) & x=2*a+2 & y=2*b+2))":
def pposW2 "?msd_fib $pposW2_asym(a,b) | $pposW2_asym(b,a)":
\end{verbatim}
requiring a $73$ states automaton.

We again use Proposition~\ref{pro:kernel}. Clearly, for each $\mathcal{P}$-position $(p,q)$, at most one of its options $(r,s)$ is a $\mathcal{P}$-position (if this is the case, the previous player will forbid this option). The following formula expresses that, if there is indeed such a $\mathcal{P}$-position $(r,s)$, then for any other option $(a,b)$, if it is also a $\mathcal{P}$-position, then it must coincide with $(r,s)$.
\begin{verbatim}
eval stable "?msd_fib Ap,q ($pposW2(p,q) =>
     (Er,s ($pposW2(r,s) & $optW(p,q,r,s)) =>
     (Aa,b ($optW(p,q,a,b) & $pposW2(a,b)) => (a=r&b=s))))":
\end{verbatim}
intermediate computations require $6452$ states and the result is {\tt TRUE}.

Similarly, for each $\mathcal{N}$-position $(p,q)$, there is at at least
two $\mathcal{P}$-positions $(r,s)\neq (t,u)$ in its set of options. So that the previous player cannot forbid both of these. This is expressed as
\begin{verbatim}
eval absorbing "?msd_fib Ap,q (~$pposW2(p,q) =>
                 Er,s,t,u ($optW(p,q,r,s) & $optW(p,q,t,u)
                 & (r!=t|s!=u) & $pposW2(r,s) & $pposW2(t,u)))":
\end{verbatim}
intermediate computations require $34813$ states and return {\tt TRUE}.
\end{proof}

We also provide an automatic proof of the following result from \cite{block}.
\begin{theorem}
The set of $\mathcal{P}$-positions of $W^3$ where $2$ options may be blocked is given by
\[
  \mathcal{R}_3=\{(0,0)\}\cup \{ \{n,2n+1\},\{n,2n+2\}\mid n\ge 0\}.
\]  
\end{theorem}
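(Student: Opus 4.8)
The plan is to follow the automatic proof just given for $W^2$, adapting the kernel argument to the blocking parameter $k=3$. In $W^3$ the previous player may forbid up to two options before the next player moves, so the blocking analogue of Proposition~\ref{pro:kernel} reads: a set $S$ of positions is the set of $\mathcal{P}$-positions of $W^3$ if and only if \emph{(stability)} every position of $S$ has at most two of its options in $S$ --- so the previous player can block all of them --- and \emph{(absorption)} every position not in $S$ has at least three of its options in $S$ --- so two blocking declarations cannot kill every move to $S$. First I would encode the candidate set, reusing the predicate \texttt{optW} of Definition~\ref{def:optW}:
\begin{verbatim}
def pposW3_asym "?msd_fib En ((x=0 & y=0) | (x=n & y=2*n+1)
                                           | (x=n & y=2*n+2))":
def pposW3 "?msd_fib $pposW3_asym(x,y) | $pposW3_asym(y,x)":
\end{verbatim}
Since $\mathcal{R}_3$ involves only the affine forms $n,\ 2n+1,\ 2n+2$, this automaton should be considerably smaller than the $73$-state one describing $\mathcal{R}_2$.

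The two conditions then translate into single \texttt{Walnut} queries. Stability says that no position of $\mathcal{R}_3$ admits three pairwise distinct options in $\mathcal{R}_3$:
\begin{verbatim}
eval stableW3 "?msd_fib Ap,q ($pposW3(p,q) => ~(Er1,s1,r2,s2,r3,s3
   ($optW(p,q,r1,s1) & $optW(p,q,r2,s2) & $optW(p,q,r3,s3)
    & (r1!=r2|s1!=s2) & (r1!=r3|s1!=s3) & (r2!=r3|s2!=s3)
    & $pposW3(r1,s1) & $pposW3(r2,s2) & $pposW3(r3,s3))))":
\end{verbatim}
and absorption says that every position outside $\mathcal{R}_3$ does admit three such options:
\begin{verbatim}
eval absorbingW3 "?msd_fib Ap,q (~$pposW3(p,q) => Er1,s1,r2,s2,r3,s3
   ($optW(p,q,r1,s1) & $optW(p,q,r2,s2) & $optW(p,q,r3,s3)
    & (r1!=r2|s1!=s2) & (r1!=r3|s1!=s3) & (r2!=r3|s2!=s3)
    & $pposW3(r1,s1) & $pposW3(r2,s2) & $pposW3(r3,s3)))":
\end{verbatim}
If both evaluate to \texttt{TRUE}, then by the same reasoning used for $W^2$ the set $\mathcal{R}_3$ is exactly the set of $\mathcal{P}$-positions of $W^3$.

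The hard part is computational rather than conceptual. The absorption formula carries six existentially quantified variables tied together through three copies of \texttt{optW} and three copies of \texttt{pposW3}; given that the analogous $W^2$ verification already produced intermediate automata with tens of thousands of states, I expect this step to be the real bottleneck, limited by memory. To keep the computation tractable I would (i) replace the pairwise-distinctness clauses by a strict lexicographic ordering of the three options, i.e.\ \verb!(r1<r2|(r1=r2&s1<s2))! and \verb!(r2<r3|(r2=r3&s2<s3))!, which shrinks the search space while automatically forcing distinctness; (ii) split \texttt{optW} into its horizontal, vertical and diagonal components and treat the resulting finite disjunction of cases separately, since from a position in (or just outside) $\mathcal{R}_3$ the candidate options in $\mathcal{R}_3$ are easy to list; and, if still needed, (iii) verify the statement separately on the strip $y \le 2x+2$ and on its complement. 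Apart from this engineering, the argument is a routine \texttt{Walnut} check validating the blocking-kernel conditions.
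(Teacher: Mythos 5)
Your proposal is correct and follows essentially the same route as the paper: encode the candidate set $\mathcal{R}_3$ with \texttt{pposW3}, then verify with \texttt{Walnut} the blocking-kernel conditions that every position of $\mathcal{R}_3$ has at most two options in $\mathcal{R}_3$ (so the previous player can block them) and every position outside has at least three pairwise distinct options in $\mathcal{R}_3$; your formulas are logically equivalent to the paper's \texttt{stableW3} and \texttt{absorbW3}, which indeed evaluate to \texttt{TRUE} (intermediate automata up to about $116141$ states). The extra engineering suggestions (lexicographic ordering of the options, case-splitting \texttt{optW}) are sensible but turn out to be unnecessary.
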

\begin{proof}
The set $\mathcal{R}_2$ can be defined by
\begin{verbatim}
def pposW3_asym "?msd_fib En ((x=0 & y=0) | (x=n & y=2*n+1) | (x=n&y=2*n+2))":
def pposW3 "?msd_fib $pposW3_asym(a,b) | $pposW3_asym(b,a)":
\end{verbatim}
requiring $28$ states.

The strategy is the same as for $W^2$ using Proposition~\ref{pro:kernel}. For each $\mathcal{P}$-position $(p,q)$, at most two of its options $(r,s)$ and $(t,u)$ are $\mathcal{P}$-positions (here, the previous player may forbid two options). The following formula expresses that if there is indeed such $\mathcal{P}$-positions, then for any other option $(a,b)$, if it is also a $\mathcal{P}$-position, then it must coincide with $(r,s)$ or $(t,u)$.
\begin{verbatim}
eval stableW3 "?msd_fib Ap,q ($pposW3(p,q) =>
      (Er,s,t,u ($pposW3(r,s) & $pposW3(t,u) & $optW(p,q,r,s) & $optW(p,q,t,u)
      & (r!=t|s!=u)) =>
      (Aa,b ($optW(p,q,a,b) & $pposW3(a,b)) => ((a=r&b=s)|(a=t&r=u)))))":
\end{verbatim}
Similarly, for each $\mathcal{N}$-position $(p,q)$, there is at at least
three $\mathcal{P}$-positions in its set of options. So that the previous player cannot forbid all of these. This is expressed as
\begin{verbatim}
eval absorbW3 "?msd_fib Ap,q (~$pposW3(p,q) =>
             Er,s,t,u,v,w ($optW(p,q,r,s) & $optW(p,q,t,u) & $optW(p,q,v,w)
             & (r!=t|s!=u) & (r!=v|s!=w) & (v!=t|w!=u)
             & $pposW3(r,s) & $pposW3(t,u) & $pposW3(v,w)))":
\end{verbatim}
Both commands evaluate to {\tt TRUE}, intermediate computations requiring at most $116141$ states.
\end{proof}

\begin{figure}[h!t]
  \centering
  \scalebox{.7}{\includegraphics{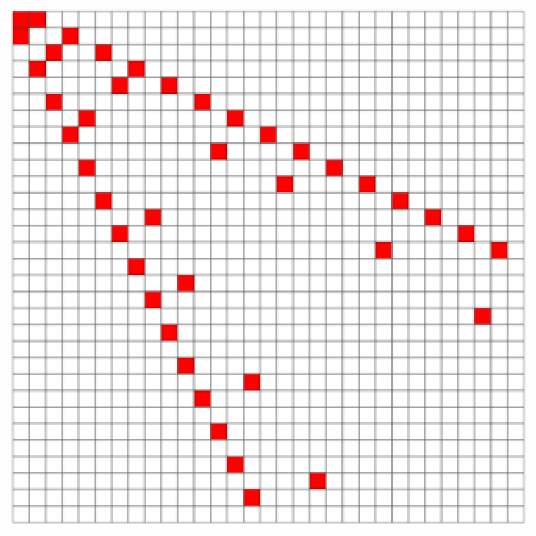}}
    \scalebox{.7}{\includegraphics{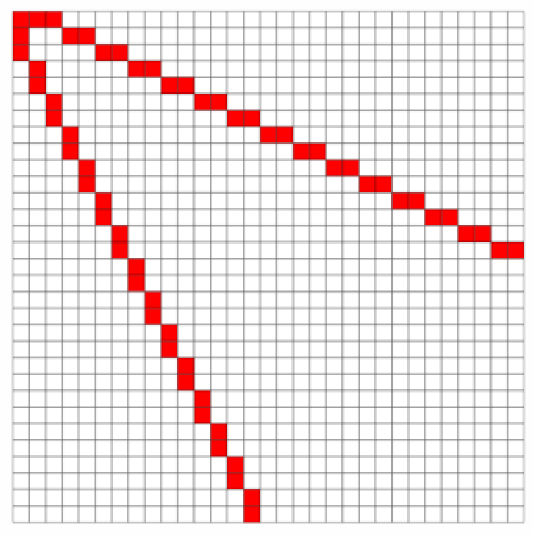}}
  \caption{The $\mathcal{P}$-positions of $W^2$ and $W^3$.}
  \label{fig:w2w3}
\end{figure}

\section{About redundant moves}\label{sec:redundant}

A classical question about game is the following. Can we modify the rule-set in such a way that the set of $\mathcal{P}$-positions remains the same? One can add any move provided that it does not permit to move between two $\mathcal{P}$-positions breaking the stability of the set. So we focus on the question of removing a move. With a smaller rule-set, stability trivially still holds. The question thus concerns absorption. Let us recall what a redundant move is.
\begin{definition}
A move of a game $G$ is said to be \emph{redundant} if removing it from the rule-set does not change the set of $\mathcal{P}$-positions of $G$.
\end{definition}

It is shown in \cite{Nowakowski2010} that the classical Wythoff's game has no redundant move. This result has also been verified automatically with \texttt{Walnut} in \cite{MRRW}. Using a similar approach, we show that the variations of Wythoff game considered in this paper have no redundant move.

\subsection{No redundant moves in $K^\ell$, $\ell\le 4$}

We follow the procedure described in \cite{MRRW}. For each move $m$, that is, for all pairs of the set
\[
\bigcup_{i>0}\;\{(i,0),(0,i),(i,i)\},
\]
one needs to check that there exists an $\mathcal{N}$-position $(p,q)$ such that  $m$ is the only winning move leading to a $\mathcal{P}$-position. This shows that $m$ is necessary to preserve the absorbing property, and thus the set of $\mathcal{P}$-positions.

This can be expressed in {\tt Walnut} using the following three commands (one for each type of move). We make use of Definition~\ref{def:optW} for the options {\tt optW}. We illustrate this on the game~$K^2$:

\begin{verbatim}
eval nr_1 "?msd_fib Ai (i>0 => (Ep,q ~$pposK2(p,q) & $pposK2(p-i,q)
    & (Ar,s ((r!=p-i & s!=q & $optW(p,q,r,s)) => ~$pposK2(r,s)))))": 

eval nr_2 "?msd_fib Ai (i>0 => (Ep,q ~$pposK2(p,q) & $pposK2(p,q-i)
    & (Ar,s ((r!=p & s!=q-i & $optW(p,q,r,s)) => ~$pposK2(r,s)))))": 

eval nr_3 "?msd_fib Ai (i>0 => (Ep,q ~$pposK2(p,q) & $pposK2(p-i,q-i)
    & (Ar,s ((r!=p-i & s!=q-i & $optW(p,q,r,s)) => ~$pposK2(r,s)))))":
\end{verbatim}
All three commands evaluate to {\tt TRUE}, with intermediate steps computing automata with up to $2130$ states.

\begin{theorem}
  Let $\ell\in\{1,2,3,4\}$. The game $K^\ell$ has no redundant move, i.e., removing any move from the rule-set change the set of $\mathcal{P}$-positions.  
\end{theorem}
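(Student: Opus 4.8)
The plan is to follow exactly the automated strategy already illustrated in the excerpt for $K^2$, and simply run it for each of the four values $\ell\in\{1,2,3,4\}$. For each such $\ell$ we have, from the previous sections, an explicit {\tt Walnut} predicate for the set of $\mathcal{P}$-positions: {\tt pposK1}, {\tt pposK2}, {\tt pposK3} (via Theorem~\ref{thm:fibaut2} and the DFAO {\tt K3}), and {\tt pposK4} (via Theorem~\ref{thm:fibaut3} and the DFAO {\tt K4}). We also have the options predicate {\tt optW} of Definition~\ref{def:optW}. The key observation, as recalled at the start of Section~\ref{sec:redundant}, is that a move $m$ is \emph{non-redundant} precisely when there is some $\mathcal{N}$-position $(p,q)$ for which $m$ is the \emph{unique} move from $(p,q)$ into the set of $\mathcal{P}$-positions: removing $m$ would then make $(p,q)$ absorbing-deficient, changing the $\mathcal{P}$-position set.

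Concretely, the moves of $K^\ell$ (inherited from Wythoff) fall into three families: horizontal moves $(i,0)$, vertical moves $(0,i)$, and diagonal moves $(i,i)$, for $i>0$. For each family and each $\ell$ I would run the three first-order statements {\tt nr\_1}, {\tt nr\_2}, {\tt nr\_3} displayed in the excerpt, with {\tt pposK2} replaced by {\tt pposK1}, {\tt pposK3}, {\tt pposK4} respectively. Each statement asserts: for all $i>0$ there exist $p,q$ with $(p,q)$ an $\mathcal{N}$-position, the $i$-shifted position (in the appropriate direction) a $\mathcal{P}$-position, and every \emph{other} option of $(p,q)$ an $\mathcal{N}$-position. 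If all three statements evaluate to {\tt TRUE} for a given $\ell$, then no move of $K^\ell$ is redundant. Since the $\mathcal{P}$-position predicates are all Fibonacci-automatic (by Theorems~\ref{thm:carK1}, \ref{thm:K2}, \ref{thm:fibaut2}, \ref{thm:fibaut3}), these statements are decidable, and {\tt Walnut} returns an answer; for $\ell=2$ the excerpt already reports {\tt TRUE} with intermediate automata up to $2130$ states, and the cases $\ell=1,3,4$ are entirely analogous.

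The main obstacle is not conceptual but computational: the DFAO for $K^4$ lives on an $18$-letter alphabet, and the $\mathcal{P}$-position automaton together with the alternation of quantifiers ($\forall i\,\exists p,q\,\forall r,s$) in {\tt nr\_3} can produce large intermediate automata, so one must verify that {\tt Walnut} terminates within reasonable resources for every case. In practice this is the kind of check the accompanying {\tt Jupyter} notebook records; once all twelve evaluations ($3$ move-families $\times$ $4$ values of $\ell$) return {\tt TRUE}, Proposition~\ref{pro:kernel} combined with the discussion above yields that removing any single move from the rule-set of $K^\ell$ alters the set of $\mathcal{P}$-positions, which is the claim.

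\begin{proof}
We follow the procedure already illustrated for $K^2$. For $\ell\in\{1,3,4\}$, replacing {\tt pposK2} by the corresponding predicate {\tt pposK1}, {\tt pposK3}, {\tt pposK4} in the three commands {\tt nr\_1}, {\tt nr\_2}, {\tt nr\_3}, all evaluations return {\tt TRUE}. Together with the case $\ell=2$ treated above, this shows that for every $\ell\in\{1,2,3,4\}$ and every move $m$ of $K^\ell$ there is an $\mathcal{N}$-position whose only option in the set of $\mathcal{P}$-positions is the one reached by $m$. Hence removing $m$ breaks the absorbing property of Proposition~\ref{pro:kernel}, so no move of $K^\ell$ is redundant.
\end{proof}
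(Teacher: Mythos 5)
Your proposal is correct and follows essentially the same route as the paper: the paper's proof likewise just reruns the three commands {\tt nr\_1}, {\tt nr\_2}, {\tt nr\_3} with {\tt pposK2} replaced by the predicates for $K^1$, $K^3$, $K^4$, all returning {\tt TRUE} (the largest intermediate automaton, for {\tt nr\_3} on $K^4$, has about $5140$ states), and concludes via the absorbing/kernel argument exactly as you do.
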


\begin{proof}
  Simply repeat the above three commands, replacing the predicate {\tt pposK2} conveniently. The longest computation is for \verb!nr_3! and $K^4$, which requires up to $5140$ states.
\end{proof}
\subsection{No redundant moves in $W^2$ and $W^3$}

Let us first consider the game $W^2$. Here, we have to take into account the blocking maneuver: since the opponent can forbid one option, a move $m$ is non-redundant if there exists an $\mathcal{N}$-position $(p,q)$ with exactly two options in $\mathcal{P}$ that can be reached using two moves $m$ and $m'$ respectively. 

Indeed, if $m$ is withdrawn from the rule-set, then $m'$ will be the only option remaining to go from $(p,q)$ to a $\mathcal{P}$-position. Since the previous player may prevent the next one from playing along this move, the absorbing property is not satisfied, and $m$ is thus necessary to preserve the set of $\mathcal{P}$-positions.

Again, this can be expressed in {\tt Walnut}. Let us start with a move of the form $(i,0)$. The first formula defines the set of pairs $(i,j)$ such that there exists an $\mathcal{N}$-position $(p,q)$ from which there are exactly two distinct moves of the form $(i,0)$ and $(j,0)$, with $i\neq j$, leading to $\mathcal{P}$-positions. Any option $(r,s)$ from $(p,q)$ which is a $\mathcal{P}$-position is thus reached using one of these two. If a pair belongs to this set, then $(i,0)$ and $(j,0)$ are non-redundant. The second (resp. third) formula handles the situation where there exists an $\mathcal{N}$-position $(p,q)$ from which exactly two moves of the form $(i,0)$ and $(0,j)$ (resp. $(i,0)$ and $(j,j)$) are leading to $\mathcal{P}$-positions. Here, there is no restriction on $i$ and $j$ because we use different types of moves.
\begin{verbatim}
def opt2W2a "?msd_fib i!=j & Ep,q (p>=i & p>=j & ~$pposW2(p,q)
                          & $pposW2(p-i,q) & $pposW2(p-j,q)
                         & Ar,s (($optW(p,q,r,s) & $pposW2(r,s))
                           => ((r=p-i& s=q) | (r=p-j& s=q))))":

def opt2W2b "?msd_fib  Ep,q (p>=i & q>=j & ~$pposW2(p,q)
                          & $pposW2(p-i,q) & $pposW2(p,q-j)
                         & Ar,s (($optW(p,q,r,s) & $pposW2(r,s))
                           => ((r=p-i& s=q) | (r=p& s=q-j))))":

def opt2W2c "?msd_fib  Ep,q (p>=i & p>=j & q>=j & ~$pposW2(p,q)
                          & $pposW2(p-i,q) & $pposW2(p-j,q-j)
                         & Ar,s (($optW(p,q,r,s) & $pposW2(r,s))
                           => ((r=p-i& s=q) | (r=p-j& s=q-j))))":
\end{verbatim}
Intermediate computations for each of these formulas require around $11000$ states. Now, we evaluate
\begin{verbatim}
eval opt2W2 "?msd_fib Ai (i>0=> (Ej ($opt2W2a(i,j)|$opt2W2b(i,j)
                                                 |$opt2W2c(i,j))))":
\end{verbatim}
which returns {\tt True}. This means that $(i,0)$ is a non-redundant move, for all $i>0$. By symmetry, this is also the case for a move $(0,i)$.

We now consider moves of the form $(i,i)$. The strategy is similar. We just have to ensure in the third formula that the moves $(i,i)$ and $(j,j)$ are distinct with $i\neq q$.
\begin{verbatim}
def opt2W2d "?msd_fib  Ep,q (p>=i & q>=i & p>=j & ~$pposW2(p,q)
                          & $pposW2(p-i,q-i) & $pposW2(p-j,q)
                         & Ar,s (($optW(p,q,r,s) & $pposW2(r,s))
                           => ((r=p-i& s=q-i) | (r=p-j& s=q))))":

def opt2W2e "?msd_fib  Ep,q (p>=i & q>=i & q>=j & ~$pposW2(p,q)
                          & $pposW2(p-i,q-i) & $pposW2(p,q-j)
                         & Ar,s (($optW(p,q,r,s) & $pposW2(r,s))
                           => ((r=p-i& s=q-i) | (r=p& s=q-j))))":

def opt2W2f "?msd_fib  i!=j & Ep,q (p>=i & q>=i & ~$pposW2(p,q)
                          & $pposW2(p-i,q-i) & $pposW2(p-j,q-j)
                         & Ar,s (($optW(p,q,r,s) & $pposW2(r,s))
                           => ((r=p-i& s=q-i) | (r=p-j& s=q-j))))":
\end{verbatim}
We evaluate
\begin{verbatim}
eval opt2W2 "?msd_fib Ai (i>0 => (Ej ($opt2W2d(i,j)|$opt2W2e(i,j)
                                                  |$opt2W2f(i,j))))":
\end{verbatim}
returning {\tt True} and consequently, we have the following result.
\begin{theorem}
  The game $W^2$ (resp. $W^3$) has no redundant move, i.e., removing any move from the rule-set change the set of $\mathcal{P}$-positions.
\end{theorem}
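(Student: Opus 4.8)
The plan is to adapt the argument used for the classical Wythoff game in \cite{MRRW} to the blocking variants, the new ingredient being that in $W^2$ the previous player may forbid one option and in $W^3$ two options. As usual, shrinking the rule-set cannot destroy stability of the set of $\mathcal P$-positions, so the only thing to verify is that removing any single move breaks the \emph{absorbing} property. For $W^2$ this happens as soon as there is an $\mathcal N$-position $(p,q)$ whose set of options contains \emph{exactly two} $\mathcal P$-positions, one of which is reached by the move $m$ under scrutiny: after deleting $m$ only the other winning move survives, the previous player blocks it, and $(p,q)$ is left with no move into $\mathcal P$. For $W^3$ the same reasoning applies verbatim with "exactly three $\mathcal P$-positions among the options, one of them reached by $m$", since two of the remaining winning moves can be blocked.

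Concretely, I would load the automata \verb!pposW2! and \verb!pposW3! for $\mathcal R_2$ and $\mathcal R_3$ already built above, together with the option relation \verb!optW! of Definition~\ref{def:optW}. For $W^2$ and for each of the three move shapes $(i,0)$, $(0,i)$, $(i,i)$, I would write first-order formulas \verb!opt2W2a!,\dots,\verb!opt2W2f!, one for each ordered pair of shapes up to the left--right symmetry of the board, asserting: there exists an $\mathcal N$-position from which exactly two distinct $\mathcal P$-positions are reachable, one by the move of the shape under study and the other by a move of a partner shape with parameter $j$. In the diagonal--diagonal formula \verb!opt2W2f! one must additionally force $i\neq j$ so the two diagonal moves are genuinely distinct. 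Quantifying over $j$, I would then evaluate \verb!Ai (i>0 => Ej(opt2W2a(i,j)|...|opt2W2f(i,j)))! for the horizontal and diagonal cases and get \texttt{TRUE}; the vertical case follows by symmetry. For $W^3$ I would carry out the identical scheme but with \emph{triples} of $\mathcal P$-options and two partner moves, again taking care to impose pairwise distinctness of the three moves where they have the same shape.

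The genuinely heavy step is the automaton blow-up rather than any new mathematics: encoding "exactly two (resp.\ three) options lie in $\mathcal P$" requires nested universal quantifiers over options, and combined with the relatively large automaton for $\mathcal R_2$ this already yields intermediate automata of several thousand states (around $11000$ for the \verb!opt2W2!-formulas), with the $W^3$ "three blockable options" condition pushing this substantially higher; the determinisation passes are the bottleneck. The only conceptual care needed is to get the counting right — a move that merely happens to be one of several winning options from some $\mathcal N$-position is not thereby non-redundant, so the "exactly $k$" phrasing (and not merely "at least $k$") is essential — after which the proof is a routine series of \texttt{TRUE} evaluations, and the theorem follows from Proposition~\ref{pro:kernel}.
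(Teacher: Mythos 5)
Your argument is essentially identical to the paper's: both reduce non-redundancy to exhibiting, for each move, an $\mathcal{N}$-position with \emph{exactly} two (for $W^2$) resp.\ three (for $W^3$) $\mathcal{P}$-options, one of them using that move, so that after deletion the remaining winning option(s) can all be blocked and absorption fails. The {\tt Walnut} implementation you sketch (predicates {\tt optW}, {\tt pposW2}/{\tt pposW3}, one formula per pair/triple of move shapes with distinctness constraints for repeated shapes, vertical moves handled by symmetry, followed by the universally quantified {\tt eval}) is exactly the paper's procedure.
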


For the sake of readability, the commands for $W^3$ are given in appendix.


\section{Conclusions}

The deep connections existing between combinatorial games and numeration systems have long been exploited by many authors \cite{DFNR,Duchene,Fraenkel3,Fraenkel1,Fraenkel2}, particularly to determine in polynomial time whether a given position is a $\mathcal{P}$-position. Wythoff game and many of its variants are closely related to the Fibonacci numeration system. In this article, following the line of our previous work \cite{MRRW}, we take these ideas one step further providing more evidence. Using a decidable theory implemented in the freely available software {\tt Walnut}, we have obtained many results with little effort! These examples are particularly well-suited because the Fibonacci numeration system admits a simple automaton recognizing addition. This ensures that the automata produced during intermediate formula-reduction steps hopefully remain of manageable size.

The starting point of this article was the rather unexpected connection between Hofstadter’s $G$-sequence \seqnum{A005206} and the combinatorial game $K^2$ introduced by Komak et {\em al.} \cite{Komak}. This led us to generalize $K^2$ to a family of games $K^\ell$. It would therefore be interesting to relate this family to the generalized Hofstadter sequences studied in \cite{Letouzey}, for which connections with numeration systems are of interest \seqnum{A005374}, \seqnum{A005375}, \seqnum{A005376}, \seqnum{A100721}. This could also be linked to Narayana representations as discussed in \cite{Nara1}. In particular, can one define an analogue of the function $g$ from Proposition~\ref{prop:h} and Theorem~\ref{thm:K2} for other values of the parameter~$\ell$? We did not pursue this direction, as our aim was to obtain Fibonacci-automatic functions $g_2$, $g_3$, $g_4$ with Proposition~\ref{pro:fibaut} and Theorems~\ref{thm:fibaut2} and \ref{thm:fibaut3}.

On a different aspect, although there is no unique logical formula that defines properties such as stability or absorption, one can observe in our automatic proofs that the order of magnitude of the automata constructed by {\tt Walnut} differs between the games $K^\ell$ (for $\ell\le 4$) with a few thousand states on the one hand and the games with blocking maneuvers $W^2$ and $W^3$ on the other, with more than $10^5$ states. Thus, one could in some sense quantify --- assuming a notion of minimal representation/logical formula can be defined --- the complexity of a combinatorial game by the size of the automata involved in the proof characterizing its $\mathcal{P}$-positions. Of course, this is reasonable for games having a description within the same numeration system, since the size of the automaton recognizing addition plays a role in these constructions.

\bigskip
\hrule
\bigskip

\noindent 2020 {\it Mathematics Subject Classification}: Primary 91A46. ; Secondary 11A67, 68Q45, 68V15.

\noindent \emph{Keywords:} Combinatorial games ; Wythoff game ; Walnut theorem-prover ; first-order logic.

\bigskip
\hrule
\bigskip

\noindent (Concerned with sequences
\seqnum{A005206},
\seqnum{A000201},  \seqnum{A022342}, \seqnum{A003622} and
\seqnum{A001950}
.)

\section{Appendix}
Here are the commands showing that $W^3$ has no redundant move. With the command \verb!opt2W3_mn!, we consider a move of the first kind $(i,0)$ and two moves of kind \verb!m! and \verb!n! ; the three being pairwise distinct. Moves of the second (resp. third) kind are of the form $(0,i)$ and $(i,i)$. When at least two moves are of the same kind, we have to ensure that they are distinct.

\begin{verbatim}
def opt2W3_11 "?msd_fib i!=j & j!=k & i!=k & Ep,q (p>=i & p>=j & p>=k
 & ~$pposW3(p,q) & $pposW3(p-i,q) & $pposW3(p-j,q) & $pposW3(p-k,q)
 & Ar,s (($optW(p,q,r,s) & $pposW3(r,s))
 => ((r=p-i & s=q) | (r=p-j & s=q) | (r=p-k & s=q))))":

def opt2W3_12 "?msd_fib i!=j & Ep,q (p>=i & p>=j & q>=k & ~$pposW3(p,q)
 & $pposW3(p-i,q) & $pposW3(p-j,q) & $pposW3(p,q-k)
 & Ar,s (($optW(p,q,r,s) & $pposW3(r,s))
 => ((r=p-i & s=q) | (r=p-j & s=q) | (r=p & s=q-k))))":

def opt2W3_22 "?msd_fib j!=k & Ep,q (p>=i & q>=j & q>=k & ~$pposW3(p,q)
 & $pposW3(p-i,q) & $pposW3(p,q-j) & $pposW3(p,q-k)
 & Ar,s (($optW(p,q,r,s) & $pposW3(r,s))
 => ((r=p-i & s=q) | (r=p & s=q-j) | (r=p & s=q-k))))":

def opt2W3_13 "?msd_fib i!=j & Ep,q (p>=i & p>=j & p>=k & q>=k
 & ~$pposW3(p,q) & $pposW3(p-i,q) & $pposW3(p-j,q) & $pposW3(p-k,q-k)
 & Ar,s (($optW(p,q,r,s) & $pposW3(r,s))
 => ((r=p-i & s=q) | (r=p-j & s=q) | (r=p-k & s=q-k))))":

def opt2W3_23 "?msd_fib Ep,q (p>=i & q>=j & p>=k & q>=k & ~$pposW3(p,q)
 & $pposW3(p-i,q) & $pposW3(p,q-j) & $pposW3(p-k,q-k)
 & Ar,s (($optW(p,q,r,s) & $pposW3(r,s))
 => ((r=p-i & s=q) | (r=p & s=q-j) | (r=p-k & s=q-k))))":

def opt2W3_33 "?msd_fib j!=k & Ep,q (p>=i & p>=j & q>=j & p>=k & q>=k
 & ~$pposW3(p,q) & $pposW3(p-i,q) & $pposW3(p-j,q-j) & $pposW3(p-k,q-k)
 & Ar,s (($optW(p,q,r,s) & $pposW3(r,s))
 => ((r=p-i & s=q) | (r=p-j & s=q-j) | (r=p-k & s=q-k))))":
\end{verbatim}
and finally, we check:
\begin{verbatim}
eval redundantW3_1 "?msd_fib Ai (i>0
 => Ej,k ($opt2W3_11(i,j,k) | $opt2W3_12(i,j,k) | $opt2W3_22(i,j,k) |
          $opt2W3_13(i,j,k) | $opt2W3_23(i,j,k) | $opt2W3_33(i,j,k)))":
\end{verbatim}
By symmetry, we do not have to consider a first move of the second kind. But we still have to consider a first move of the third kind. 
\begin{verbatim}
def opt3W3_11 "?msd_fib j!=k & Ep,q (p>=i & q>=i & p>=j & p>=k
 & ~$pposW3(p,q) & $pposW3(p-i,q-i) & $pposW3(p-j,q) & $pposW3(p-k,q)
 & Ar,s (($optW(p,q,r,s) & $pposW3(r,s))
 => ((r=p-i & s=q-i) | (r=p-j & s=q) | (r=p-k & s=q))))":

def opt3W3_12 "?msd_fib Ep,q (p>=i & q>=i & p>=j & q>=k
 & ~$pposW3(p,q) & $pposW3(p-i,q-i) & $pposW3(p-j,q) & $pposW3(p,q-k)
 & Ar,s (($optW(p,q,r,s) & $pposW3(r,s))
 => ((r=p-i & s=q-i) | (r=p-j & s=q) | (r=p & s=q-k))))":

def opt3W3_22 "?msd_fib j!=k & Ep,q (p>=i & q>=i & q>=j & q>=k
 & ~$pposW3(p,q) & $pposW3(p-i,q-i) & $pposW3(p,q-j) & $pposW3(p,q-k)
 & Ar,s (($optW(p,q,r,s) & $pposW3(r,s))
 => ((r=p-i & s=q-i) | (r=p & s=q-j) | (r=p & s=q-k))))":

def opt3W3_13 "?msd_fib i!=k & Ep,q (p>=i & q>=i & p>=j & p>=k & q>=k
 & ~$pposW3(p,q) & $pposW3(p-i,q-i) & $pposW3(p-j,q) & $pposW3(p-k,q-k)
 & Ar,s (($optW(p,q,r,s) & $pposW3(r,s))
 => ((r=p-i & s=q-i) | (r=p-j & s=q) | (r=p-k & s=q-k))))":

def opt3W3_23 "?msd_fib i!=k & Ep,q (p>=i & q>=i & q>=j & p>=k & q>=k
 & ~$pposW3(p,q) & $pposW3(p-i,q-i) & $pposW3(p,q-j) & $pposW3(p-k,q-k)
 & Ar,s (($optW(p,q,r,s) & $pposW3(r,s))
 => ((r=p-i & s=q-i) | (r=p & s=q-j) | (r=p-k & s=q-k))))":

def opt3W3_33 "?msd_fib i!=j & j!=k & i!=k & Ep,q
 (p>=i & q>=i & p>=j & q>=j & p>=k & q>=k & ~$pposW3(p,q)
 & $pposW3(p-i,q-i) & $pposW3(p-j,q-j) & $pposW3(p-k,q-k)
 & Ar,s (($optW(p,q,r,s) & $pposW3(r,s))
 => ((r=p-i & s=q-i) | (r=p-j & s=q-j) | (r=p-k & s=q-k))))":

eval redundantW3_2 "?msd_fib Ai (i>0
 => Ej,k ($opt3W3_11(i,j,k) | $opt3W3_12(i,j,k) | $opt3W3_22(i,j,k) |
          $opt3W3_13(i,j,k) | $opt3W3_23(i,j,k) | $opt3W3_33(i,j,k)))":
\end{verbatim}


\begin{thebibliography}{99}
  
 \bibitem{Bosh} M. Boshernitzan, A. S. Fraenkel, Nonhomogeneous spectra of numbers, {\em Disc. Math.} {\bf 34} (1981), 325--327.
  
 \bibitem{BH} V. Bruy\`ere, G. Hansel, Bertrand numeration systems and recognizability, {\em Theoret. Comput. Sci.} {\bf 181} (1997), 17--43.

 \bibitem{cobham} A. Cobham, Uniform tag systems, {\em Math. Systems Theory} {\bf  6} (1972), 164--192.
   
 \bibitem{Dekking} F. M. Dekking, On Hofstadter’s $G$-Sequence, {\em J. Int. Seq.} {\bf 26} (2023), Article 23.9.2.
  
 \bibitem{Vision}  E. Duch{\^e}ne, A. S. Fraenkel, V. Gurvich, N. B. Ho, C. Kimberling, U. Larsson, Wythoff visions, in {\em Games of No Chance} {\bf 5}, MSRI Publ., vol. {\bf 70} (2017), 35--87.

 \bibitem{DFNR} E. Duch{\^e}ne, A. S. Fraenkel, R. Nowakowski, M. Rigo, Extensions and restrictions of Wythoff's game preserving its $\mathcal{P}$-positions, {\em J. Combin. Theory Ser. A} {\bf 117} (2010), 545--567.
   
 \bibitem{Duchene} E. Duch{\^e}ne, M. Rigo, A morphic approach to combinatorial games: the Tribonacci case, {\em Theor. Inform. Appl.} {\bf 42} (2008), 375--393.

\bibitem{Fraenkel4} A. S. Fraenkel, Complementary systems of integers, {\em Amer. Math. Monthly} {\bf 84} (1977), 114--115.
   
  \bibitem{Fraenkel3} A. S. Fraenkel, How to beat your Wythoff games’ opponent on three fronts, {\em Am. Math. Mon.} {\bf 89} (1982), 353--361.
   
 \bibitem{Fraenkel1} A. S. Fraenkel, The use and usefulness of numeration systems, {\em Inform. and Comput.} {\bf 81} (1989), 46--61.
  
 \bibitem{Fraenkel2} A. S. Fraenkel, Heap games, numeration systems and sequences, {\em Ann. Comb.} {\bf 2} (1998), 197--210.
   
 \bibitem{Fraenkel5} A. S. Fraenkel, A. Peled, Harnessing the unwieldy MEX function, in {\em Games of no chance} {\bf 4}, 77--94. Math. Sci. Res. Inst. Publ., {\bf 63}, Cambridge University Press, New York, (2015).
   
 \bibitem{Gault} D. Gault and M. Clint, ``Curiouser and curiouser'' said Alice. Further reflections on an interesting recursive function, {\em Int. J. Comput. Math.} {\bf 26} (1988), 35--43.
  
 \bibitem{Granville} V. Granville and J.-P. Rasson, A strange recursive relation, {\em J. Number Theory} {\bf 30} (1988), 238--241.

 \bibitem{Hofstadter} D. R. Hofstadter, {\em Goedel, Escher, Bach: an Eternal Golden Braid}, Random House, 1980.

 \bibitem{Kimberling1} C. Kimberling, Stolarsky interspersions, {\em Ars Combin.} {\bf 39} (1995), 129--138.
   
 \bibitem{Kimberling2} C. Kimberling, Complementary equations, {\em J. Integer Seq.} {\bf 10} (2007), Article 07.1.4.
   
 \bibitem{Kimberling3} C. Kimberling, Complementary equations and Wythoff sequences, {\em J. Integer Seq.} {\bf 11} (2008), Article 08.3.3.
      
 \bibitem{Komak} K. Komak, R. Miyadera, A. Murakami, A Variant of Wythoff's Game Defined by Hofstadter's $G$-Sequence. Arxiv preprint, arXiv:2510.11767.

 \bibitem{block} U. Larsson, Blocking Wythoff Nim, {\em Elec. J. Combin.} {\bf 18} (2011), P120.
    
 \bibitem{Larsson} U. Larsson, Restrictions of $m$-Wythoff Nim and $p$-complementary Beatty sequences, in {\em Games of no chance} {\bf 4}, 137--160. Math. Sci. Res. Inst. Publ., {\bf 63}, Cambridge University Press, New York, (2015).

 \bibitem{Letouzey} P. Letouzey, Generalized Hofstadter functions $G$, $H$ and beyond: numeration systems and discrepancy, arXiv:2502.12615.
   
 \bibitem{MRRW} B. Mignoty, A. Renard, M. Rigo, M. Whiteland, Automatic proofs in combinatorial game theory, {\em Internat. J. Game Theory} {\bf 54} (2025), no. 2, 35.

 \bibitem{Mousavi} H. Mousavi. Automatic theorem proving in Walnut. Arxiv preprint, arXiv:1603.06017

 \bibitem{Nowakowski2010} E. Duchêne, A. S. Fraenkel, R. J. Nowakowski, M. Rigo, {\em Extensions and restrictions of {Wythoff}'s game preserving its {{\(\mathcal P\)}} positions}, J. Comb. Theory, Ser. A {\bf 117} (2010), no. 5, ISSN 0097-3165, 545--567.
   
 \bibitem{RigoLNM} M. Rigo, From combinatorial games to shape-symmetric morphisms, In {\em Substitution and tiling dynamics: introduction to self-inducing structures}, 227--291, {\em Lecture Notes in Math.} {\bf 2273}, Springer, Cham (2020).

 \bibitem{ShallitBook} J. Shallit, {\em The Logical Approach To Automatic Sequences: Exploring Combinatorics
     on Words with Walnut}, Vol. {\bf 482} of London Mathematical Society Lecture Note Series. Cambridge University Press, 2023.
   
 \bibitem{Nara1} J. Shallit, The Narayana morphism and related words. Arxiv preprint, arXiv:2503.01026.
   
\end{thebibliography}
\end{document}